\theoremstyle{definition}
\newtheorem{theorem}{Theorem}
\newtheorem{lemma}{Lemma}
\newtheorem{remark}{Remark}
\newtheorem{observation}{Observation}
\newtheorem{example}{Example}
\newtheorem{definition}{Definition}
\newtheorem{proposition}{Proposition}
\newtheorem{corollary}{Corollary}
\def\1{\mathbf{1}}
\def\cA {\mathcal{A}}
\def\cB {\mathcal{B}}
\def\cI{\mathcal{I}}
\def\cT{\mathcal{T}}
\def\ZZ{{\mathbb Z}}
\def\NN{{\mathbb N}}
\def \0{{\mathbf{0}}}
\def\scX{\mathscr{X}}
\def\intX{\accentset{\circ}{X}}
\def\cloX{\overline{X}}
\def\int{\mathsf{int}}
\def\clo{\mathsf{clo}}
\def \mymid {\,:\,}
\def\F{F}
\definecolor{rouge}{RGB}{255,0,0}
\definecolor{bleu}{RGB}{0,0,255}
\definecolor{vert}{RGB}{0,123,0}
\definecolor{violet}{RGB}{148,0,211}
\definecolor{cyan}{RGB}{0,206,209}
\definecolor{rose}{RGB}{235,66,109}
\begin{document}

\title{ Diffusion in large networks
}

\author{Michel GRABISCH\thanks{Corresponding author. Tel
    (+33)14407-8744.} ${}^{(a)(b)}$, Agnieszka
RUSINOWSKA${}^{(a)(c)}$, Xavier VENEL${}^{(d)}$ \\
$^{(a)}$ Centre d'Economie de la Sorbonne, 106-112 Bd de l'H\^{o}pital \\ 75647 Paris, France \\
$^{(b)}$ Paris School of Economics, Universit\'{e} Paris I Panth\'{e}on-Sorbonne \\ 
$^{(c)}$ Paris School of Economics -- CNRS \\
$^{(d)}$ Dipartimento di Economia e Finanza, LUISS \\ Viale Romania 32, 00197 Rome, Italy \\
\normalsize\tt{\{michel.grabisch,agnieszka.rusinowska\}@univ-paris1.fr, xvenel@luiss.it}}

\date{Version of \today}
\maketitle

\begin{abstract}
We investigate the phenomenon of diffusion in a countably infinite society of individuals interacting with their neighbors in a network. At a given time, each individual is either active or inactive. The diffusion is driven by two characteristics: the network structure and the diffusion mechanism represented by an aggregation function. We distinguish between two diffusion mechanisms (probabilistic, deterministic) and focus on two types of aggregation functions (strict, Boolean). Under strict aggregation functions, polarization of the society cannot happen, and its state evolves towards a mixture of infinitely many active and infinitely many inactive agents, or towards a homogeneous society. Under Boolean aggregation functions, the diffusion process becomes deterministic and the contagion model of \cite{mor00} becomes a particular case of our framework. Polarization can then happen. Our dynamics also allows for cycles in both cases. The network structure is not relevant for these questions, but is important for establishing irreducibility, at the price of a richness assumption: the network should contain infinitely many complex stars and have enough space for storing local configurations.
Our model can be given a game-theoretic interpretation via a local coordination game, where each player would apply a best-response strategy in a random neighborhood. 
\end{abstract}

\noindent {\bf JEL Classification}: C7, D7, D85

\vspace{2 mm}

\noindent {\bf Keywords}: 
diffusion, countable network, aggregation function, polarization, convergence, best-response

\newpage

\section{Introduction}

Although the first works on diffusion do not refer to explicit network structure, certain early and widely used models (e.g., \cite{bas69}) do incorporate some aspects of imitation. 
Since then, studying diffusion in networks has a very strong established position in the literature, with a variety of approaches and different ways of modeling diffusion. Adoption of a new technology, product purchasing and marketing, opinion formation, influence,
social learning, information cascades and transmission,
rumors,
fashions, contagion and disease infection, financial contagion,
among many other related phenomena, are all relevant to network diffusion; for surveys on these issues see, e.g., two chapters in \cite{jac08a} on diffusion through networks and learning, \cite{ace-ozd11} for an overview of Bayesian learning and imitation models, several chapters in \cite{bra-gal-rog16} on diffusion, learning and contagion (\cite{lam16}, \cite{gol-sad16}, \cite{cab-gal-got15}, \cite{ace-ozd-tah16}), and \cite{gra-rus20} for a survey on nonstrategic models of opinion dynamics; see also e.g., \cite{rog03} for overviews of the literature on diffusion and its applications, and \cite{sil01} for ``word-of-mouth marketing''. 

As the spread of ideas, products or rumors can be similar to that of infectious disease, many frameworks of diffusion in networks are indeed inspired by the standard epidemiological models such as SI, SIS, and SIR (\cite{bai75}, \cite{kee-roh08}). These models differ, in particular, by the possibility of re-infection after recovering. Contrary to the SIS model, this option is not possible in the SIR model, while in the SI framework once individuals become infected, they remain infected forever. Different variants of the models are proposed, e.g., in the literature on network diffusion in economics (\cite{jac-rog07a}, \cite{lop07}).

\medskip

The present paper investigates diffusion in large networks and is therefore fairly related to the different branches of the network literature mentioned above. Our point of departure is the work of \cite{mor00} on contagion, 
defined in his paper as a diffusion phenomenon occurring when one of two actions can spread from a
finite set of individuals to the whole (countably infinite) population.
The basic mechanism in \cite{mor00} is a deterministic process, known under the name of {\it threshold
  model} (see, e.g., \cite{gra78}, \cite{sch78}), where an agent becomes infected (or {\it
  active}, as we will use this term throughout the paper) if the number of
infected agents in its neighborhood exceeds a given threshold. 
To make the mechanism more realistic, we introduce some randomness and 
replace the threshold mechanism by the following: an agent becomes active with some
probability, which is a function of the status (active or inactive) of every
agent in its neighborhood. The probability is equal to 1 (respectively, 0) if
all agents in the neighborhood are active (respectively, inactive), and it is
required that the probability monotonically increases with the number of active
agents. Such a function determining the probability to be active is called an
{\it aggregation function}. Hence, the new model is in the vein of \cite{gra-gri98}, who study the growth of a cellular probabilistic automaton on $\ZZ^2$, although our perspective is different and we work on arbitrary countable networks, of which $\ZZ^2$ is only a particular case. The approach based on aggregation functions also relates the present paper to previous works of two of the authors that study in depth models of opinion dynamics based on aggregation functions (see \cite{gra-rus13,foe-gra-rus13,gra-poi-rus19}).

Similarly to \cite{mor00,gra-rus13,foe-gra-rus13}, we assume that an agent is not a neighbor of itself. This is a reasonable assumption in an imitation model, for example when studying the diffusion of a technology. An agent tends to adopt or imitate the status/action of its neighbors. In the threshold model, \cite{mor00} suggests a game-theoretic interpretation via a local coordination game, where each player would apply a best-response strategy in a random neighborhood. Since a player does not play against itself, it is then again natural in this interpretation to assume that a player is not a neighbor of itself.

There are two essential differences with our previous works:   
1) First, we assume that the society of agents is not anymore finite but countably
  infinite, like in \cite{mor00}. Indeed, the countably infinite assumption can be seen as a natural
  approximation of the real world when speaking of contagion or diffusion that 
never stops, does not have any border, but continues and continues. 2) Second, the underlying network is not anymore complete but has some structure, defined by a neighborhood relation with
  very weak requirements, as introduced in \cite{mor00}. The neighborhood
  relation defined by \cite{mor00} amounts to considering as network a connected
  graph where each agent has a bounded number of neighbors, so that almost any kind of
  network can be considered via this neighborhood relation.

  \medskip

  The general aim of the paper is to study the evolution in the long run of the
  diffusion process. More precisely, the fundamental question is: {\it How does
    the diffusion evolve from a finite set of active agents?} This question can
    be decomposed into more basic ones like: Is a cascade
  possible, leading to a state where all agents are active, or all inactive? Is
  polarization possible, i.e., a stable state where a part of the population is
  active while the other is not? Are other remarkable phenomena, like cycles,
  possible? Our main aim is to disentangle which properties are induced by the aggregation function and which properties are induced by the network structure.
	
  Our mechanism of diffusion is Markovian, and we investigate all its absorbing
  and transient classes, which permits to give a precise answer to the above
  questions. However, the assumption on the number of agents (countably
  infinite) makes the number of possible states of the process to be {\it uncountably}
  infinite. This has a strong consequence on the analysis, as it requires the
  use of $\sigma$-fields and much more complex notions than classical ones,
  e.g., irreducibility. 
Our study provides complete results in two cases of aggregation functions:
  strict, and Boolean. A strict aggregation function means that if any agent in
  the neighborhood of agent $i$ becomes active (respectively, inactive), then the probability that $i$ becomes active (respectively, inactive) is positive. A Boolean aggregation function simply means
  that the result is either 0 or 1. Then the model becomes deterministic and we
  are back to the initial model of \cite{mor00}.

There are two classic papers in the prior literature and a number of follow-up works which are particularly relevant, as they already consider the absorbing classes of states for finite graphs and show that the process always converges or the dynamic is two-periodic. These are \cite{gol-oli80} who consider generalized threshold functions, and \cite{pol-sur83} who analyze a weighted majority model. Also quite related is \cite{ber01} who considers a majority model with finite graphs and dynamic monopoly being a subset of nodes whose opinion spreads everywhere; for an overview of dynamic monopolies, see e.g., \cite{pel02}. 
\cite{dre-rob09} consider several graph-theoretical threshold models of the spread of disease or opinion, and focus on an irreversible $k$-threshold process where a node, once getting infected, cannot become uninfected anymore.
There is also a large literature on cascades under general threshold models. One example is 
\cite{gao-gha-sch-yu16} who study stochastic attachment (preferential attachment being a particular case), with a growing network getting newcomers that connect stochastically to nodes already present in the network.
Our model is also related to the extensive line of works on influence maximization originating from \cite{kem-kle-tar03} who analyze mechanisms that are similar to the ones studied in this paper. However, the works along this line usually consider models in which an active node cannot become inactive again. When the random threshold is uniform, we get a particular case of our aggregation model.

  \medskip

  We summarize our main findings. While in the infinite context it is
    typical to consider $\ZZ^2$, we provide more general results that concern
    arbitrary network structures. Most notably, we provide a condition called
    {\it richness} (see below) under which everything essentially works like in
    $\ZZ^2$.  

  In the case of a strict aggregation function, we distinguish two types of
  networks: 
 bipartite and nonbipartite.
  Roughly speaking, a 
  bipartite network amounts to having a partition of the society in two blocks such that neighbors of one block are in the other block.
   Many networks with a regular structure and low degree are bipartite (e.g., the $\ZZ^2$
  plane where each agent has 4 neighbors: north, east, south, west; the
  hexagonal pavement, regular hierarchies, etc.). They are of utmost importance
  in the analysis as they divide the network in two parts, say the {\it even
    agents} and the {\it odd agents}. We show that for nonbipartite networks, 
    there always exists a bipartite subnetwork.
    As a consequence, the results obtained when the network is bipartite carry over to the
  general case with few changes. Our main result gives the exhaustive list of
  all possible absorbing and transient $\phi$-irreducible sets. Most notably, we
  find that: 1) The states where every agent is active or every agent is
  inactive are absorbing, which can result from a cascade effect; 2) Any state
  with a finite number of odd or even agents being active or inactive is
  necessarily transient, so that no polarization can occur, and the state of the
  society evolves towards a mixture of infinitely many active and infinitely
  many inactive agents, either odd or even. $\phi$-irreducibility is proved at
  the price of an additional hypothesis on the structure of the network ({\it
    not} on the aggregation function, interestingly), which we call {\it
    richness}: it says that the network should contain infinitely many ``complex
  stars'' (roughly speaking, stars with at least 3 branches of length at least
  2), and that the network has enough ``space'' to store a local configuration
  of active/inactive agents on its neighbors. Most networks satisfy this
  assumption and we characterize networks having complex stars.

  As for Boolean aggregation functions, the results fairly differ. Indeed, while a
  cascade effect to the limit cases where all agents are active or all are inactive
 is still possible,
  there are many possibilities to reach other absorbing states (which is
  interpreted as polarization), or cycles, depending on the value of the
  threshold. A more detailed study depends essentially on the topology of the
  network, as we showed with $\ZZ^2$ and the neighborhoods of 4 and 8 neighbors.

\medskip

The remainder of this paper is organized as follows. In Section \ref{sec:frame} we introduce the main framework of countable networks and aggregation functions, informal and formal definitions of our diffusion process, and key concepts useful for the analysis of Markov chains with an uncountable set of states. We also present some first results that are independent of specific assumptions on the aggregation functions. 
Section \ref{sec:deterministic} is dedicated to the study of deterministic configurations and the introduction of the notions of a fixed point. 
 In Section \ref{sec:decomp_absorbing_transient} we focus on strict aggregation
 functions and 
bipartite networks.
 We highlight some $\phi$-irreducible
 classes, countable transient and absorbing sets, and we prove in Section
 \ref{sec:decomp_ergodicity} that under some
 additional assumptions on the network, the latter sets are also $\phi$-irreducible.
Section \ref{sec:noblinker} is dedicated to the case of strict aggregation functions 
and nonbipartite networks.
Section \ref{sec:boolean} is devoted to the study of Boolean aggregation functions. 
 We conclude in Section \ref{sec:conclusion}.
Long proofs of our results are presented in Appendices \ref{appA}, \ref{appB}, \ref{appC}, and \ref{appD}.

\section{The framework}\label{sec:frame}

\subsection{Countable networks}\label{subsec:societies}

Let $\scX$ be the set of agents (society) supposed to be countably
infinite. We consider on $\scX$ a {\it neighborhood relation}: $x \sim y$ ($x$
is neighbor of $y$) is a binary relation which is irreflexive, symmetric,
\emph{bounded} (each $x$ has at most $\gamma$ neighbors, where $\gamma$ is a
fixed constant), and \emph{connected}. The latter means that for every $x,y \in
\scX$, there exists a finite path connecting $x$ to $y$, i.e., there exist
$x_1,\ldots,x_k\in \scX$ such that $x_1=x$, $x_k=y$ and for each
$i=1,\ldots,k-1$, $x_i \sim x_{i+1}$. To each agent $x$ we associate its {\it
  neighborhood} $\Gamma^\sim(x)$ as the set of neighbors of $x$ (if no
  confusion occurs, $\Gamma^\sim$ is simply denoted by $\Gamma$). Note that
irreflexivity means that $x\notin \Gamma(x)$. Moreover, we assume that for any
$x$, we have an order on the neighbors in $\Gamma(x)$.

$\scX$ together with the neighborhood relation $\sim$ define an undirected
graph, denoted by $(\scX,\sim)$, whose set of nodes is $\scX$, and there is a
link between $x$ and $y$ iff $x\sim y$. From the above properties of
  $\sim$, the graph is connected, locally finite, and has no self-loop. 
Such a graph, where $\sim$ is a
  neighborhood relation, is called a {\it
    (countable) network}. We give several examples of networks below,
borrowed from \cite{mor00}.

\medskip 

\begin{example}[\textbf{2-dimensional grid}]\label{ex:Z2}
Consider $\scX= \ZZ^2$ and the neighborhood of $x$ is defined as the set
containing all agents within a certain distance of $x$, excluding $x$. Taking
for example the Euclidean distance, the neighborhood of $x$ within distance 1 is
formed by the 4 agents at north, south, east and west position, i.e.,
$(x_1,x_2+1)$, $(x_1,x_2-1)$, $(x_1+1,x_2)$ and $(x_1-1,x_2)$ respectively. We
call this the {\it 1-neighborhood} of $x$. With distance $\sqrt{2}$, we get 8
neighbors (4 on the diagonals in addition to the previous ones). We may consider
more general distances, not necessarily symmetric on both
coordinates. Figure~\ref{fig:Z2} shows the graphs corresponding to the
1-neighborhood and $\sqrt{2}$-neighborhood.
\begin{figure}[htb]
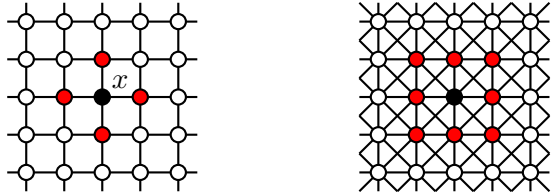

\begin{center}
\psset{unit=0.5cm}
\pspicture(-0.5,-1.5)(4.5,4.5)
\psline(-0.5,0)(4.5,0)
\psline(-0.5,1)(4.5,1)
\psline(-0.5,2)(4.5,2)
\psline(-0.5,3)(4.5,3)
\psline(-0.5,4)(4.5,4)
\psline(0,-0.5)(0,4.5)
\psline(1,-0.5)(1,4.5)
\psline(2,-0.5)(2,4.5)
\psline(3,-0.5)(3,4.5)
\psline(4,-0.5)(4,4.5)
\pscircle[fillstyle=solid](0,0){0.2}
\pscircle[fillstyle=solid](0,1){0.2}
\pscircle[fillstyle=solid](0,2){0.2}
\pscircle[fillstyle=solid](0,3){0.2}
\pscircle[fillstyle=solid](0,4){0.2}
\pscircle[fillstyle=solid](1,0){0.2}
\pscircle[fillstyle=solid](1,1){0.2}
\pscircle[fillstyle=solid,fillcolor=red](1,2){0.2}
\pscircle[fillstyle=solid](1,3){0.2}
\pscircle[fillstyle=solid](1,4){0.2}
\pscircle[fillstyle=solid](2,0){0.2}
\pscircle[fillstyle=solid,fillcolor=red](2,1){0.2}
\pscircle[fillstyle=solid,fillcolor=black](2,2){0.2}
\pscircle[fillstyle=solid,fillcolor=red](2,3){0.2}
\pscircle[fillstyle=solid](2,4){0.2}
\pscircle[fillstyle=solid](3,0){0.2}
\pscircle[fillstyle=solid](3,1){0.2}
\pscircle[fillstyle=solid,fillcolor=red](3,2){0.2}
\pscircle[fillstyle=solid](3,3){0.2}
\pscircle[fillstyle=solid](3,4){0.2}
\pscircle[fillstyle=solid](4,0){0.2}
\pscircle[fillstyle=solid](4,1){0.2}
\pscircle[fillstyle=solid](4,2){0.2}
\pscircle[fillstyle=solid](4,3){0.2}
\pscircle[fillstyle=solid](4,4){0.2}
\uput[45](2,2){$x$}
\endpspicture
\hspace*{2cm}
\psset{unit=0.5cm}
\pspicture(-0.5,-1.5)(4.5,4.5)
\psline(-0.5,0)(4.5,0)
\psline(-0.5,1)(4.5,1)
\psline(-0.5,2)(4.5,2)
\psline(-0.5,3)(4.5,3)
\psline(-0.5,4)(4.5,4)
\psline(0,-0.5)(0,4.5)
\psline(1,-0.5)(1,4.5)
\psline(2,-0.5)(2,4.5)
\psline(3,-0.5)(3,4.5)
\psline(4,-0.5)(4,4.5)
\psline(-0.5,-0.5)(4.5,4.5)
\psline(0.5,-0.5)(4.5,3.5)
\psline(1.5,-0.5)(4.5,2.5)
\psline(2.5,-0.5)(4.5,1.5)
\psline(3.5,-0.5)(4.5,0.5)
\psline(-0.5,0.5)(3.5,4.5)
\psline(-0.5,1.5)(2.5,4.5)
\psline(-0.5,2.5)(1.5,4.5)
\psline(-0.5,3.5)(0.5,4.5)
\psline(-0.5,4.5)(4.5,-0.5)
\psline(0.5,4.5)(4.5,0.5)
\psline(1.5,4.5)(4.5,1.5)
\psline(2.5,4.5)(4.5,2.5)
\psline(3.5,4.5)(4.5,3.5)
\psline(-0.5,3.5)(3.5,-0.5)
\psline(-0.5,2.5)(2.5,-0.5)
\psline(-0.5,1.5)(1.5,-0.5)
\psline(-0.5,0.5)(0.5,-0.5)
\pscircle[fillstyle=solid](0,0){0.2}
\pscircle[fillstyle=solid](0,1){0.2}
\pscircle[fillstyle=solid](0,2){0.2}
\pscircle[fillstyle=solid](0,3){0.2}
\pscircle[fillstyle=solid](0,4){0.2}
\pscircle[fillstyle=solid](1,0){0.2}
\pscircle[fillstyle=solid,fillcolor=red](1,1){0.2}
\pscircle[fillstyle=solid,fillcolor=red](1,2){0.2}
\pscircle[fillstyle=solid,fillcolor=red](1,3){0.2}
\pscircle[fillstyle=solid](1,4){0.2}
\pscircle[fillstyle=solid](2,0){0.2}
\pscircle[fillstyle=solid,fillcolor=red](2,1){0.2}
\pscircle[fillstyle=solid,fillcolor=black](2,2){0.2}
\pscircle[fillstyle=solid,fillcolor=red](2,3){0.2}
\pscircle[fillstyle=solid](2,4){0.2}
\pscircle[fillstyle=solid](3,0){0.2}
\pscircle[fillstyle=solid,fillcolor=red](3,1){0.2}
\pscircle[fillstyle=solid,fillcolor=red](3,2){0.2}
\pscircle[fillstyle=solid,fillcolor=red](3,3){0.2}
\pscircle[fillstyle=solid](3,4){0.2}
\pscircle[fillstyle=solid](4,0){0.2}
\pscircle[fillstyle=solid](4,1){0.2}
\pscircle[fillstyle=solid](4,2){0.2}
\pscircle[fillstyle=solid](4,3){0.2}
\pscircle[fillstyle=solid](4,4){0.2}
\endpspicture
\end{center}
  \caption{Graphs with $\scX=\ZZ^2$ and $\sim$ corresponding to the Euclidean
    distance of 1 (left) and $\sqrt{2}$ (right). The neighborhood of $x$ (black
    node) is the set of red nodes.}
  \label{fig:Z2}
\end{figure}
\end{example}
\begin{example}[\textbf{$d$-dimensional grid}]\label{ex:Zd}
An immediate generalization of the previous example is to consider $\scX=\ZZ^d$,
with $d$ any positive integer, and $\sim$ again defined by the Euclidean
distance. 
\end{example}

\begin{example}[\textbf{hexagonal pavement}]\label{ex:hexp}
The hexagonal pavement is such that every node has 3 neighbors (see
Figure~\ref{fig:hexp}).
\begin{figure}[htb]
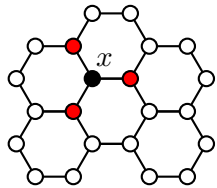

\begin{center}
\psset{unit=0.5cm}
\pspicture(-0.5,-1.5)(4.5,4.5)
\psline(-0.5,0.87)(0,0)(1,0)(1.5,0.87)(2.5,0.87)(3,0)(4,0)(4.5,0.87)
\psline(-0.5,0.87)(0,1.73)(1,1.73)(1.5,0.87)(2.5,0.87)(3,1.73)(4,1.73)(4.5,0.87)
\psline(-0.5,2.6)(0,1.73)(1,1.73)(1.5,2.6)(2.5,2.6)(3,1.73)(4,1.73)(4.5,2.6)
\psline(-0.5,2.6)(0,3.46)(1,3.46)(1.5,2.6)(2.5,2.6)(3,3.46)(4,3.46)(4.5,2.6)
\psline(1,3.46)(1.5,4.33)(2.5,4.33)(3,3.46)
\pscircle[fillstyle=solid](-0.5,0.87){0.2}
\pscircle[fillstyle=solid](0,0){0.2}
\pscircle[fillstyle=solid](1,0){0.2}
\pscircle[fillstyle=solid](1.5,0.87){0.2}
\pscircle[fillstyle=solid](2.5,0.87){0.2}
\pscircle[fillstyle=solid](3,0){0.2}
\pscircle[fillstyle=solid](4,0){0.2}
\pscircle[fillstyle=solid](4.5,0.87){0.2}
\pscircle[fillstyle=solid](0,1.73){0.2}
\pscircle[fillstyle=solid,fillcolor=red](1,1.73){0.2}
\pscircle[fillstyle=solid](3,1.73){0.2}
\pscircle[fillstyle=solid](4,1.73){0.2}
\pscircle[fillstyle=solid](-0.5,2.6){0.2}
\pscircle[fillstyle=solid,fillcolor=black](1.5,2.6){0.2}
\pscircle[fillstyle=solid,fillcolor=red](2.5,2.6){0.2}
\pscircle[fillstyle=solid](4.5,2.6){0.2}
\pscircle[fillstyle=solid](0,3.46){0.2}
\pscircle[fillstyle=solid,fillcolor=red](1,3.46){0.2}
\pscircle[fillstyle=solid](3,3.46){0.2}
\pscircle[fillstyle=solid](4,3.46){0.2}
\pscircle[fillstyle=solid](1.5,4.33){0.2}
\pscircle[fillstyle=solid](2.5,4.33){0.2}
\uput[60](1.5,2.6){$x$}
\endpspicture
\end{center}
 \caption{Graph of the hexagonal pavement}
  \label{fig:hexp}
\end{figure}
\end{example}

\begin{example}[\textbf{hierarchy}]\label{ex:hier}
  A hierarchy is a tree, i.e., a graph where each agent $x$ has $m(x)\geq
  1$ subordinates, and one superior, except the root which has no superior (see
  Figure~\ref{fig:hier}). The numbers $m(x)$ may differ for each agent.
\begin{figure}[htb]
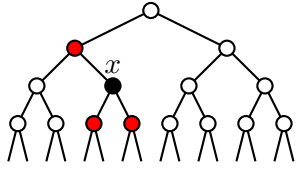

\begin{center}
\psset{unit=0.5cm}
\pspicture(-0.5,-1.5)(7.5,4.5)
\psline(0,0)(0.25,1)(0.5,0)
\psline(1,0)(1.25,1)(1.5,0)
\psline(2,0)(2.25,1)(2.5,0)
\psline(3,0)(3.25,1)(3.5,0)
\psline(4,0)(4.25,1)(4.5,0)
\psline(5,0)(5.25,1)(5.5,0)
\psline(6,0)(6.25,1)(6.5,0)
\psline(7,0)(7.25,1)(7.5,0)
\psline(0.25,1)(0.75,2)(1.25,1)
\psline(2.25,1)(2.75,2)(3.25,1)
\psline(4.25,1)(4.75,2)(5.25,1)
\psline(6.25,1)(6.75,2)(7.25,1)
\psline(0.75,2)(1.75,3)(2.75,2)
\psline(4.75,2)(5.75,3)(6.75,2)
\psline(1.75,3)(3.75,4)(5.75,3)
\pscircle[fillstyle=solid](3.75,4){0.2}
\pscircle[fillstyle=solid,fillcolor=red](1.75,3){0.2}
\pscircle[fillstyle=solid](5.75,3){0.2}
\pscircle[fillstyle=solid](0.75,2){0.2}
\pscircle[fillstyle=solid,fillcolor=black](2.75,2){0.2}
\pscircle[fillstyle=solid](4.75,2){0.2}
\pscircle[fillstyle=solid](6.75,2){0.2}
\pscircle[fillstyle=solid](0.25,1){0.2}
\pscircle[fillstyle=solid](1.25,1){0.2}
\pscircle[fillstyle=solid,fillcolor=red](2.25,1){0.2}
\pscircle[fillstyle=solid,fillcolor=red](3.25,1){0.2}
\pscircle[fillstyle=solid](4.25,1){0.2}
\pscircle[fillstyle=solid](5.25,1){0.2}
\pscircle[fillstyle=solid](6.25,1){0.2}
\pscircle[fillstyle=solid](7.25,1){0.2}
\uput[90](2.75,2){$x$}
\endpspicture
\end{center}
\caption{Graph of a hierarchy with $m(x)=2$ for each agent $x$}
\label{fig:hier}
\end{figure}
\end{example}

\medskip

We assume that each agent can have one of two statuses $0$ and $1$, which can be
interpreted in various ways (opinion on a given subject, adoption of a new
technology, infection by some disease, etc.). We will call this value the
\emph{status} of agent $x$. As we are interested in this paper by the spread
over $\scX$ of one of the actions,
we say that agent $x$
is {\it active} if his status is $1$ and that he is \emph{inactive} if his
status is $0$.

\medskip

At a given time, the society is composed of active and inactive
  agents. The {\it configuration} of the society describes who is active and
  who is inactive. Hence, a configuration can be represented either by a
  0-1-valued function on $\scX$ (0 for inactive, 1 for active), or by the
  set of active agents. The set of all possible configurations is
  $\Omega:=\{0,1\}^\scX\equiv 2^\scX$. In the sequel, we will freely use one or the other
  representation, whichever is most convenient. We elaborate on both
  representations below.

In the function representation, a configuration is a function
$\omega\in\Omega$, where $\omega(x)=1$ if $x$ is active, and $\omega(x)=0$ if
$x$ is inactive. For every $X \subset \scX$, we denote by $\pi_X$ the
projection from $\Omega$ to $\{0,1\}^X$. Every element of $\{0,1\}^{X}$ will be
called a {\it partial configuration} restricted to $X$. Given a subset $X'$, a
\emph{ partial configuration in $X'$} is a partial configuration for a subset
$X$ of $X'$. A partial configuration $\theta \in \{0,1\}^{X}$ defines naturally
the set $\theta^+=\{\omega\in \Omega, \pi_X(\omega)=\theta\}$ of all
configurations compatible with $\theta$. This set will be called the
\emph{cylinder} generated by $\theta$. Two particular configurations are the
constant functions $\1,\0$, which correspond respectively to the cases where all
agents are active and no agent is active. This representation is
particularly convenient for defining formally the dynamical system.

In the set representation, the state of the society is represented by the set
of active agents. We have clearly a one-to-one correspondence between the two
formulations since any configuration $\omega$ is uniquely defined by the set $X$
of active agents, with the correspondence $1_X\equiv\omega$. For example,
$\1,\0$ correspond respectively to $\scX$ and $\varnothing$. A partial
configuration becomes a pair of disjoint sets $(X,Y)$ such that agents in $X$
are active and agents in $Y$ are inactive. It naturally induces
the cylinder $(X,Y)^+$ of configurations such that agents in $X$ are active,
agents in $Y$ are inactive, and the status of agents outside $X\cup Y$ is
not constrained. When $X$ and $Y$ are finite, we will say that the cylinder is
\emph{finite}. This second representation is particularly convenient when using
the notion of neighborhood.

\medskip
\subsection{Informal definition of the diffusion process}

\subsubsection{Basic mechanism}
We now describe how the diffusion evolves in the society. Given a finite set $X$, we denote by $|X|$ its cardinal. Given a natural number
$n$, we denote by $[n]$ the set $\{1,\ldots,n\}$ and for every integer
$k \geq 1$, we denote by $[0,1]^{\leq k}=\cup_{l=1}^k [0,1]^l$,
i.e., the set of all vectors of size at most $k$ whose coordinates
  lie in $[0,1]$.\\

We assume that each agent's status is randomly updated as a function of the
statuses of its neighbors. An {\it aggregation function} is a mapping $A:
[0,1]^{\leq \gamma}\rightarrow[0,1]$ which is nondecreasing w.r.t. each
  coordinate, and satisfies
  $A(1,\ldots,1)=1$, $A(0,\ldots,0)=0$ (recall that $\gamma$ is the
    maximal size of a neighborhood).
It is \textit{symmetric} or \textit{anonymous} if
$A(z_1,\ldots,z_l)=A(z_{\sigma(1)},\ldots,z_{\sigma(l)})$ for every $l\in
[\gamma]$ and every permutation $\sigma$ on $[l]$.\\

\noindent We distinguish 3 cases of aggregation functions:
\begin{enumerate}
\item $A$ is {\it strict}\footnote{Note that if $A$ is increasing (in the strict sense), then it is a strict aggregation function, but not the converse.}: $A(z)=0$ iff $z$ is a $0$ vector and $A(z)=1$ iff $z$ is a $1$ vector.
\item $A$ is {\it Boolean}: $A(z)=0$ or $1$ for all $z\in[0,1]^\gamma$.
\item None of the above applies: $A$ is nonstrict and nonBoolean, i.e., there exist 
 $z\in[0,1]^{\leq\gamma}$ s.t. $0<A(z)<1$, and there exists $z'\neq0$ s.t. $A(z')=0$ or $z''\neq 1$
  s.t. $A(z'')=1$. 
\end{enumerate}

\medskip

We assume that the probability for an agent $x$ to be active at time $t+1$,
given the configuration $\omega$ at time $t$, is
\begin{equation}\label{eq:1}
P(x\mid \omega) = A(\pi_{\Gamma(x)}(\omega)),
\end{equation}
where $A:[0,1]^{\leq \gamma}\rightarrow[0,1]$ is an aggregation function. In
words, the probability for $x$ to be active at the next stage is obtained by
aggregating the vector of statuses of all the neighbors of $x$ in the
configuration $\omega$. By definition, $\pi_{\Gamma(x)}(\omega)$ is a vector in $[0,1]^{\Gamma(x)}$ that we identify to $[0,1]^{|\Gamma(x)|}$ (following the order defined on $\Gamma(x)$), hence the probability is well defined. The more active agents are in the neighborhood, the higher the probability. Moreover, $x$ is active (respectively, inactive) for sure if all its neighbors (respectively, none of its neighbors) were active. We make the additional assumption that the probability for the agents to be active conditionally on $\omega$ is independent across the agents. We will see in the
  next section how to define rigorously this process.

Our framework contains several models presented in the literature. For example,
it may be seen as a sophistication of the voter model \citep{hol-lig75}, where
an agent adopts the status of one of its neighbors, randomly chosen. When
  the aggregation function is Boolean, the updating process becomes
  deterministic, and contains the classical threshold model (see, e.g.,
  \cite{gra78}) as a particular case. This is exactly the framework of
\cite{mor00} and Section~\ref{sec:boolean}. \\

Moreover, as explained in the introduction, \cite{mor00} suggests a game-theoretic interpretation of the threshold-model. The same interpretation can be provided here. Specifically, taking two players $i$ and $j$ with a coordination game payoff matrix yielding $q\in[0,1]$ for coordination on action 0, $1-q$ for coordination on action 1, and 0 otherwise, the best response strategy leads to the choice of action 1 for player $i$ if this player assigns a probability at least $q$ that player $j$ chooses 1. Generalizing this to the neighborhood $\Gamma(i)$ of player $i$, it is found that player $i$'s best response is to choose action 1 if and only if at least $q\gamma$ neighbors choose action 1. Hence, the threshold model is recovered with threshold $q\gamma$. 
In order to get our model in its full generality, it remains to introduce some random device in the game. This can be done for example by proceeding like in the voter model: instead of meeting all of its neighbors, an agent meets only some of them at random, according to some probability distribution. Then the threshold will be exceeded with some probability, which causes the agent to become active (status 1) with some probability.

\subsubsection{Interior and closure of a configuration}
From (\ref{eq:1}) and this informal definition, we can already highlight one
specific aspect of the dynamic process. Using the set representation, we define for any set $X\subseteq \scX$
its \emph{closure} $\clo(X)=\cloX$ and
\emph{interior} $\int(X)=\intX$ by
\begin{align*}
\clo(X)=\cloX &= \{x\in\scX\mymid \Gamma(x)\cap X\neq\varnothing\}\\
\int(X)=\intX &= \{x\in\scX\mymid \Gamma(x)\subseteq X\}.
\end{align*}
It is sometimes convenient to iterate the operator several times. We will define
$\int^n$ (respectively, $\clo^n$) to be the $n$th iteration of the operator $\int$ (respectively, $\clo$). Accordingly, we use the same notation for configurations: $\overline{\omega}, \accentset{\circ}{\omega},
\clo(\omega), \int(\omega)$. Obviously, $\intX\subseteq \cloX$, but it is not true in general that
$\intX\subseteq X\subseteq \cloX$ (see Example~\ref{ex:1} below). Also, it is easy to see that
$\clo$ and $\int$, viewed as mappings on $(2^\scX,\subseteq)$, are monotone:
\[
X\subset X'\Rightarrow \intX\subseteq \accentset{\circ}{X'}\text{ and } \cloX\subseteq \cloX'.
\]
With this new notation, we obtain:
\begin{align}\label{eq:3}
x\in\intX \Rightarrow P(x\mid X)=1, \quad 
x\not\in\cloX\Rightarrow P(x\mid X)=0. 
\end{align}

The above implications become equivalences if and only if $A$ is a strict
aggregation function. This shows that, given that $X$ is the set of active
agents at time $t$, the set $X'$ of active agents at time $t+1$ lies between $\intX$ and $\cloX$. Formally, given $X\subset Z$, denote $[X,Z]:=\{Y\in 2^\scX\mid X\subseteq Y\subseteq Z\}$. Then the set $X'$ of active agents at time $t+1$ lies in the interval $[\intX,\cloX]$ with probability 1.
 Moreover, if the aggregation function is strict and $X$
is finite, then $X'$ can be {\it any} set in $[\intX,\cloX]$ with a positive probability.

\begin{example}\label{ex:1}
Consider $\scX=\ZZ^2$ and the 1-neighborhood. Let $\omega$ be the configuration defined by $\omega(n,m)=1$ iff $|n|+|m|=1$, denoted by $X$ in set representation (Figure~\ref{fig:1} (left)). $\intX$ and $\cloX$ are given in Figure~\ref{fig:1} (middle) and (right), respectively. One can see that $\intX\not\subseteq X\not\subseteq \cloX$, however $\intX\subseteq \cloX$.

\begin{figure}[h!]
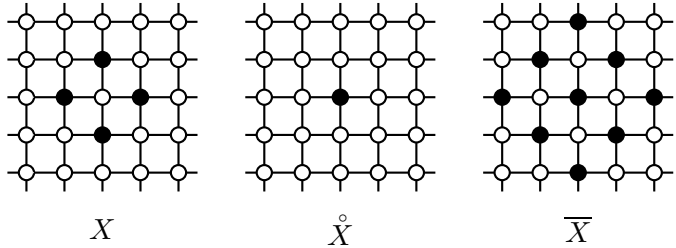

\begin{center}
\psset{unit=0.5cm}
\pspicture(-0.5,-1.5)(4.5,4.5)
\psline(-0.5,0)(4.5,0)
\psline(-0.5,1)(4.5,1)
\psline(-0.5,2)(4.5,2)
\psline(-0.5,3)(4.5,3)
\psline(-0.5,4)(4.5,4)
\psline(0,-0.5)(0,4.5)
\psline(1,-0.5)(1,4.5)
\psline(2,-0.5)(2,4.5)
\psline(3,-0.5)(3,4.5)
\psline(4,-0.5)(4,4.5)
\pscircle[fillstyle=solid](0,0){0.2}
\pscircle[fillstyle=solid](0,1){0.2}
\pscircle[fillstyle=solid](0,2){0.2}
\pscircle[fillstyle=solid](0,3){0.2}
\pscircle[fillstyle=solid](0,4){0.2}
\pscircle[fillstyle=solid](1,0){0.2}
\pscircle[fillstyle=solid](1,1){0.2}
\pscircle[fillstyle=solid,fillcolor=black](1,2){0.2}
\pscircle[fillstyle=solid](1,3){0.2}
\pscircle[fillstyle=solid](1,4){0.2}
\pscircle[fillstyle=solid](2,0){0.2}
\pscircle[fillstyle=solid,fillcolor=black](2,1){0.2}
\pscircle[fillstyle=solid](2,2){0.2}
\pscircle[fillstyle=solid,fillcolor=black](2,3){0.2}
\pscircle[fillstyle=solid](2,4){0.2}
\pscircle[fillstyle=solid](3,0){0.2}
\pscircle[fillstyle=solid](3,1){0.2}
\pscircle[fillstyle=solid,fillcolor=black](3,2){0.2}
\pscircle[fillstyle=solid](3,3){0.2}
\pscircle[fillstyle=solid](3,4){0.2}
\pscircle[fillstyle=solid](4,0){0.2}
\pscircle[fillstyle=solid](4,1){0.2}
\pscircle[fillstyle=solid](4,2){0.2}
\pscircle[fillstyle=solid](4,3){0.2}
\pscircle[fillstyle=solid](4,4){0.2}
\rput(2,-1.5){$X$}
\endpspicture
\hspace*{0.5cm}
\pspicture(-0.5,-1.5)(4.5,4.5)
\psline(-0.5,0)(4.5,0)
\psline(-0.5,1)(4.5,1)
\psline(-0.5,2)(4.5,2)
\psline(-0.5,3)(4.5,3)
\psline(-0.5,4)(4.5,4)
\psline(0,-0.5)(0,4.5)
\psline(1,-0.5)(1,4.5)
\psline(2,-0.5)(2,4.5)
\psline(3,-0.5)(3,4.5)
\psline(4,-0.5)(4,4.5)
\pscircle[fillstyle=solid](0,0){0.2}
\pscircle[fillstyle=solid](0,1){0.2}
\pscircle[fillstyle=solid](0,2){0.2}
\pscircle[fillstyle=solid](0,3){0.2}
\pscircle[fillstyle=solid](0,4){0.2}
\pscircle[fillstyle=solid](1,0){0.2}
\pscircle[fillstyle=solid](1,1){0.2}
\pscircle[fillstyle=solid](1,2){0.2}
\pscircle[fillstyle=solid](1,3){0.2}
\pscircle[fillstyle=solid](1,4){0.2}
\pscircle[fillstyle=solid](2,0){0.2}
\pscircle[fillstyle=solid](2,1){0.2}
\pscircle[fillstyle=solid,fillcolor=black](2,2){0.2}
\pscircle[fillstyle=solid](2,3){0.2}
\pscircle[fillstyle=solid](2,4){0.2}
\pscircle[fillstyle=solid](3,0){0.2}
\pscircle[fillstyle=solid](3,1){0.2}
\pscircle[fillstyle=solid](3,2){0.2}
\pscircle[fillstyle=solid](3,3){0.2}
\pscircle[fillstyle=solid](3,4){0.2}
\pscircle[fillstyle=solid](4,0){0.2}
\pscircle[fillstyle=solid](4,1){0.2}
\pscircle[fillstyle=solid](4,2){0.2}
\pscircle[fillstyle=solid](4,3){0.2}
\pscircle[fillstyle=solid](4,4){0.2}
\rput(2,-1.5){$\intX$}
\endpspicture
\hspace*{0.5cm}
\pspicture(-0.5,-1.5)(4.5,4.5)
\psline(-0.5,0)(4.5,0)
\psline(-0.5,1)(4.5,1)
\psline(-0.5,2)(4.5,2)
\psline(-0.5,3)(4.5,3)
\psline(-0.5,4)(4.5,4)
\psline(0,-0.5)(0,4.5)
\psline(1,-0.5)(1,4.5)
\psline(2,-0.5)(2,4.5)
\psline(3,-0.5)(3,4.5)
\psline(4,-0.5)(4,4.5)
\pscircle[fillstyle=solid](0,0){0.2}
\pscircle[fillstyle=solid](0,1){0.2}
\pscircle[fillstyle=solid,fillcolor=black](0,2){0.2}
\pscircle[fillstyle=solid](0,3){0.2}
\pscircle[fillstyle=solid](0,4){0.2}
\pscircle[fillstyle=solid](1,0){0.2}
\pscircle[fillstyle=solid,fillcolor=black](1,1){0.2}
\pscircle[fillstyle=solid](1,2){0.2}
\pscircle[fillstyle=solid,fillcolor=black](1,3){0.2}
\pscircle[fillstyle=solid](1,4){0.2}
\pscircle[fillstyle=solid,fillcolor=black](2,0){0.2}
\pscircle[fillstyle=solid](2,1){0.2}
\pscircle[fillstyle=solid,fillcolor=black](2,2){0.2}
\pscircle[fillstyle=solid](2,3){0.2}
\pscircle[fillstyle=solid,fillcolor=black](2,4){0.2}
\pscircle[fillstyle=solid](3,0){0.2}
\pscircle[fillstyle=solid,fillcolor=black](3,1){0.2}
\pscircle[fillstyle=solid](3,2){0.2}
\pscircle[fillstyle=solid,fillcolor=black](3,3){0.2}
\pscircle[fillstyle=solid](3,4){0.2}
\pscircle[fillstyle=solid](4,0){0.2}
\pscircle[fillstyle=solid](4,1){0.2}
\pscircle[fillstyle=solid,fillcolor=black](4,2){0.2}
\pscircle[fillstyle=solid](4,3){0.2}
\pscircle[fillstyle=solid](4,4){0.2}
\rput(2,-1.5){$\cloX$}
\endpspicture
\end{center}
\caption{Interior and closure of a set $X$}
\label{fig:1}
\end{figure}
\end{example}

\subsection{Formal definition of the diffusion process}\label{sec:conpro}

We define the diffusion process as a Markov process whose set of
states is the set $\Omega$ of configurations, based on Equation (\ref{eq:1}) and the
independence across agents. As $\Omega$ is uncountable, the definition of the
Markov process requires to work on $\sigma$-fields (see, e.g., \cite{hela03}). \\
%



We consider on $\Omega$ the product topology which is generated by the finite cylinders and we denote by $\mathcal{T}$ the Borelian $\sigma$-field associated to this topology. It is also the product $\sigma$-field.
Let us now define a Markov Kernel on $\Omega$, i.e., a mapping $K$ from $\Omega \times \mathcal{T}$ to $[0,1]$ such that:
\begin{itemize}
\item[-] For every $\omega\in \Omega$, $K(\omega,\cdot)$ is a probability on $\mathcal{T}$,
\item[-] For every $\mathcal{A} \in \cT$, $K(\cdot,\mathcal{A})$ is measurable.
\end{itemize}
$K(\omega,\cA)$ can be interpreted as the probability that from
  configuration $\omega$ the process jumps at next time step into a
  configuration belonging to $\cA$. Similarly, $K^n(\omega,\cA)$ is the
  probability that in $n$ steps the process jumps from $\omega$ into a
  configuration in $\cA$.

 Fix $\omega \in \Omega$. In order to define $K(\omega,\cdot)$, we first
construct a family of probability distributions $(\mu_{Y,\omega})_{Y\subset
  \scX, Y\text{finite}}$ on finite partial configurations.
Consider $h \in \{0,1\}^Y$, we set 
\begin{equation}\label{eq:5}
\mu_{Y,\omega}(\{h\})=\Pi_{y\in Y} \left( P(y\mid\omega)h(y)+ (1-P(y\mid\omega))(1-h(y)) \right),
\end{equation}
where $P(y\mid\omega) = A(\pi_{\Gamma(y)}(\omega))$ (this is
Eq. (\ref{eq:1})). $\mu_{Y,\omega}(\{h\})$ is the probability that, given
  the present configuration $\omega$, the next (partial) configuration in $Y$ is
  $h$. Notice that this probability does not depend on the entire $\omega$ but
  on its projection on $\overline{Y}$. For every $Y$, $\mu_{Y,\omega}$ is a
probability measure on the Borel $\sigma$-field of $Y$ (finite
set). Moreover, this family of probability measures satisfies the
assumption of Kolmogorov Extension Theorem \cite[Corollary 15.27]{ali-bor06}. It
follows that there exists $K(\omega,\cdot)$, a probability distribution over
$\Omega$ and the infinite product $\sigma$-field, for every $\omega\in \Omega$.

\begin{lemma}\label{lem:kernel}
  The function $K(\cdot,\cdot)$ constructed above is a Markov kernel on $\Omega$.
\end{lemma}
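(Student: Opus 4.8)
The plan is to verify the two defining properties of a Markov kernel for the function $K(\cdot,\cdot)$ produced by the Kolmogorov Extension Theorem. The first property — that $K(\omega,\cdot)$ is a probability measure on $\mathcal{T}$ for each fixed $\omega$ — is essentially immediate: this is exactly the object delivered by the extension theorem, since for each finite $Y$ the $\mu_{Y,\omega}$ of \eqref{eq:5} is a probability measure (each factor $P(y\mid\omega)h(y)+(1-P(y\mid\omega))(1-h(y))$ is in $[0,1]$, and summing over $h\in\{0,1\}^Y$ factorizes into a product of sums each equal to $1$), and the family is projective/consistent because marginalizing $\mu_{Y',\omega}$ over coordinates outside $Y\subset Y'$ returns $\mu_{Y,\omega}$ by the same factorization. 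So the content of the lemma is really the second property: measurability of $\omega\mapsto K(\omega,\mathcal{A})$ for every $\mathcal{A}\in\mathcal{T}$.

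First I would reduce the measurability statement to cylinders. Let $\mathcal{D}=\{\mathcal{A}\in\mathcal{T} : \omega\mapsto K(\omega,\mathcal{A})\text{ is }\mathcal{T}\text{-measurable}\}$. One checks $\mathcal{D}$ is a Dynkin system (a $\lambda$-system): it contains $\Omega$; it is closed under proper differences since $K(\omega,\mathcal{B}\setminus\mathcal{A})=K(\omega,\mathcal{B})-K(\omega,\mathcal{A})$ is a difference of measurable functions when $\mathcal{A}\subseteq\mathcal{B}$; and it is closed under increasing countable unions since $K(\omega,\bigcup_n\mathcal{A}_n)=\lim_n K(\omega,\mathcal{A}_n)$ is a pointwise limit of measurable functions, using continuity from below of the probability measure $K(\omega,\cdot)$. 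The finite cylinders form a $\pi$-system generating $\mathcal{T}$, so by Dynkin's $\pi$–$\lambda$ theorem it suffices to show every finite cylinder belongs to $\mathcal{D}$.

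So the crux is: for a finite partial configuration $\theta\in\{0,1\}^Y$ with $Y$ finite, show $\omega\mapsto K(\omega,\theta^+)$ is measurable. By construction $K(\omega,\theta^+)=\mu_{Y,\omega}(\{\theta\})=\prod_{y\in Y}\big(P(y\mid\omega)\theta(y)+(1-P(y\mid\omega))(1-\theta(y))\big)$. Each factor is a fixed affine function of $P(y\mid\omega)=A(\pi_{\Gamma(y)}(\omega))$, and $Y$ is finite, so it is enough that $\omega\mapsto P(y\mid\omega)$ is measurable for each $y$. But $P(y\mid\omega)$ depends only on $\pi_{\Gamma(y)}(\omega)$, i.e. it factors as $A\circ\pi_{\Gamma(y)}$; the projection $\pi_{\Gamma(y)}:\Omega\to\{0,1\}^{\Gamma(y)}$ onto finitely many coordinates is $\mathcal{T}$-measurable by definition of the product $\sigma$-field, and $\{0,1\}^{\Gamma(y)}$ is a finite (hence discrete, and trivially measurable) space, so $A\circ\pi_{\Gamma(y)}$ is measurable regardless of any regularity of $A$. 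Composing, multiplying the finitely many factors, and invoking the reduction above gives measurability of $K(\cdot,\mathcal{A})$ for all $\mathcal{A}\in\mathcal{T}$, completing the proof.

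The only mildly delicate point — the ``main obstacle'', such as it is — is the passage from cylinders to general Borel sets: one must be careful that the $\pi$–$\lambda$ argument is set up correctly (the generating class is a $\pi$-system and the good class is a $\lambda$-system) and that continuity from below of $K(\omega,\cdot)$ is legitimately used for the increasing-union step. Everything else is bookkeeping: finiteness of $\Gamma(y)$ and of $Y$ makes all the relevant maps measurable with no hypothesis on $A$ beyond being a function $[0,1]^{\le\gamma}\to[0,1]$ (indeed only its restriction to $\{0,1\}^{\le\gamma}$ matters here).
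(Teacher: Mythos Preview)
Your proof is correct and follows essentially the same strategy as the paper: verify measurability on finite cylinders (where dependence on only finitely many coordinates makes the map $\omega\mapsto K(\omega,\theta^+)$ measurable, indeed continuous), then extend to all of $\mathcal{T}$ via a standard good-set argument. The only cosmetic difference is that you invoke Dynkin's $\pi$--$\lambda$ theorem while the paper argues directly that the good class is a $\sigma$-field; your version is arguably cleaner, since the paper's countable-union step $K(\omega,\bigcup_n\mathcal{A}_n)=\sum_n K(\omega,\mathcal{A}_n)$ tacitly assumes disjointness.
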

(see proof in Appendix \ref{appA})

\medskip

We can now rewrite Equations (\ref{eq:3}) 
in terms of the Markov kernel. Switching to the set notation, let $X$ and $Y$ be two disjoint sets, $(X,Y)^+$ is the
interval $[X,Y^c]$, where $Y^c$ is the complement of $Y$, and with some abuse of notation we keep the same symbol for the $\sigma$-field $\cT$. Given that $X$ is the present configuration, the configuration at next time step lies in the interval $[\intX,\cloX]$ with probability $1$:
\begin{equation}\label{eq:cert}
K(X,\cA)=1 \text{ if } [\intX,\cloX]\subseteq \cA \qquad (X\subseteq \scX).
\end{equation}
Observe that if $X$ is finite, $[\intX,\cloX]$ can easily be seen to be an element of $\cT$ since
\[
[\intX,\cloX]=\bigcap_{x\not\in\cloX}(\intX,\{x\})^+.
\]
The same conclusion holds with infinite sets as well, so that any interval
$[X,Y]$ is an element of $\cT$, and so are the singletons $\{X\}$ for any
$X\subseteq \scX$.  

Formula (\ref{eq:5}) permits to compute the kernel when everything is
finite. Specifically, with $X,Y$ finite, we have
$K(X,\{Y\})=\mu_{\cloX,X}(\{Y\})$. We can also compute the kernel for every
element of the basis. Supposing $\cA=(X,Y)^+$, we have simply
$K(Z,\cA)=\mu_{X\cup Y,Z}(\{X\})$. By $\sigma$-additivity, we can deduce that $K(X,\cA)=0$ if $|X|$ is infinite and $|\cA\cap [\intX,\cloX]|$ is finite.

%
%

\subsection{Properties and first results}

We first introduce some useful notions for the study of Markov chains with an uncountable set of states and present several results that do not depend on additional assumptions on the aggregation function or the neighborhood.


\begin{definition}
A set $\cA \in \mathcal{T}$ is called \emph{absorbing} if $K(\omega,\cA)=1$ for every $\omega \in \cA$. 
\end{definition}
\begin{definition}
A set $\cA\in \mathcal{T}$ is called {\it transient} if for
every configuration $\omega\in \cA$ there exists $n\in\NN$ such that
$K^n(\omega,\cA)<1$. 
\end{definition}
 In words, a set of configurations is absorbing if from any configuration
  in this set, a transition yields a configuration still in this set with
  probability 1, while a set of configurations is transient if there is some
  probability to go outside after a certain number of steps.

  The following result gives two trivial absorbing configurations, which are the
  configurations where every agent is active and where nobody is active.

\begin{proposition}\label{pro:absorbing_partiel}
$\{\textbf{1}\}$ (everybody is active) and $\{\textbf{0}\}$ (nobody is active) are absorbing sets.
\end{proposition}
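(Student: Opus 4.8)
The plan is to show directly that the configurations $\1$ and $\0$ satisfy the defining condition of an absorbing set, namely that $K(\omega,\cA)=1$ for every $\omega\in\cA$, where here $\cA=\{\1\}$ (resp. $\cA=\{\0\}$) is the singleton containing the constant-$1$ (resp. constant-$0$) configuration. Since these singletons are elements of $\cT$ (as noted in the excerpt, singletons $\{X\}$ for $X\subseteq\scX$ belong to $\cT$), it only remains to check the probability condition. By symmetry (swapping the roles of $0$ and $1$, or equivalently replacing the aggregation function $A$ by its dual), it suffices to treat the case $\cA=\{\0\}$; the argument for $\{\1\}$ is identical \emph{mutatis mutandis}, using $A(1,\ldots,1)=1$ in place of $A(0,\ldots,0)=0$.

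First I would use the interior/closure observation. In set representation $\0$ corresponds to $X=\varnothing$. Since $\Gamma(x)\cap\varnothing=\varnothing$ for every $x$, we have $\clo(\varnothing)=\overline{\varnothing}=\varnothing$, and trivially $\int(\varnothing)=\accentset{\circ}{\varnothing}=\varnothing$ as well. Hence $[\accentset{\circ}{\varnothing},\overline{\varnothing}]=[\varnothing,\varnothing]=\{\varnothing\}=\{\0\}$. Now apply Equation (\ref{eq:cert}): taking $X=\varnothing$ and $\cA=\{\0\}$, the hypothesis $[\intX,\cloX]\subseteq\cA$ reads $\{\0\}\subseteq\{\0\}$, which holds, so $K(\0,\{\0\})=1$. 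This is precisely the absorbing condition for the singleton $\{\0\}$.

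For $\{\1\}$, in set representation $\1$ corresponds to $X=\scX$. Then $\Gamma(x)\subseteq\scX$ for every $x$, so $\int(\scX)=\accentset{\circ}{\scX}=\scX$, and since $\Gamma(x)\neq\varnothing$ (the graph is connected with at least two agents, so no isolated vertices), also $\clo(\scX)=\overline{\scX}=\scX$. Thus $[\accentset{\circ}{\scX},\overline{\scX}]=\{\scX\}=\{\1\}$, and (\ref{eq:cert}) with $X=\scX$, $\cA=\{\1\}$ gives $K(\1,\{\1\})=1$. Alternatively, one can bypass (\ref{eq:cert}) and argue from (\ref{eq:3})/(\ref{eq:1}) directly: for the configuration $\1$, every neighbor of every agent is active, so $P(x\mid\1)=A(1,\ldots,1)=1$ for all $x$; then Formula (\ref{eq:5}) shows $\mu_{Y,\1}$ is the point mass at the all-ones partial configuration for every finite $Y$, whence the Kolmogorov-extended measure $K(\1,\cdot)$ is the point mass at $\1$, giving $K(\1,\{\1\})=1$; symmetrically $K(\0,\{\0\})=1$.

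I do not anticipate a genuine obstacle here: the statement is essentially an immediate corollary of (\ref{eq:cert}) together with the elementary computations $\clo(\varnothing)=\varnothing$ and $\int(\scX)=\scX$. The only point requiring the tiniest care is confirming that the relevant singletons are measurable (already granted in the excerpt) and, for the $\1$ case, noting that $\clo(\scX)=\scX$ uses that there are no isolated vertices — which follows from connectedness of $(\scX,\sim)$ together with $|\scX|\geq 2$ (indeed $\scX$ is countably infinite). So the main ``work'' is simply citing the right prior facts in the right order.
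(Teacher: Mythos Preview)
Your proof is correct. The paper does not give an explicit proof of this proposition, treating it as immediate from the setup; your argument via (\ref{eq:cert}) together with the trivial computations $\int(\varnothing)=\clo(\varnothing)=\varnothing$ and $\int(\scX)=\clo(\scX)=\scX$ is exactly the natural one-line justification the paper leaves implicit.
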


\begin{definition}
Let $\omega$ be a configuration and $\cA\in\cT$ be a set of configurations. We
say that $\cA$ is {\it reachable} from $\omega$ if after a certain number
  of transitions from $\omega$, we may obtain a configuration in $\cA$:
\[
\sum_{n\in \NN} K^n(\omega,\cA)>0.
\]
\end{definition}

In order to prove that a set is reachable, we will mainly rely on the following
definition of a trajectory. Indeed, the existence of a trajectory from $\omega$
to $\cA$ is equivalent to the fact that $\cA$ is reachable from $\omega$. 
\begin{definition}\label{def:path}
A {\it trajectory} from $\omega$ to $\cA$ is a finite sequence of sets of configurations $(\cA_1,\ldots,\cA_n)$ such that
\begin{itemize}
\item[-] $\cA_1=\{\omega\}$,
\item[-] $\cA_n=\cA$,
\item[-] for every $1\leq l < n$, $\forall m\in \cA_l$, $K(m,\cA_{l+1})>0$.
\end{itemize}
\end{definition}

We will often apply Definition~\ref{def:path} such that every set $\cA_i$ is a cylinder generated by a partial configuration $(X_i,Y_i)$. When doing so, it will be easier to speak directly about a \emph{trajectory of partial configurations}.\\

\begin{definition}
Given $\phi$ a non-trivial $\sigma$-finite measure on $\mathcal{T}$, the Markov
chain is $\phi$-\emph{irreducible} if for all $\cA\in\cT$ such that
$\phi(\cA)>0$, for all $\omega\in \Omega$, 
\[
\sum_{n=1}^{+\infty} K^n(\omega,\cA) >0.
\]
\end{definition}
In words, the Markov chain is $\phi$-irreducible if there is a positive
  probability that starting from any configuration, the process reaches after
  some step any set of configurations, provided this set has a positive measure
  (w.r.t. $\phi$). 

In Proposition \ref{pro:absorbing_partiel} we have seen that the singletons
$\{\0\}$ and $\{\1\}$ are absorbing sets. This implies that the diffusion Markov
process $K$ is not irreducible.
\begin{lemma}
For any non-trivial  $\sigma$-finite measure $\phi$ on $\mathcal{T}$, $K$ is not
$\phi$-irreducible. 
\end{lemma}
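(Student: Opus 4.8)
The plan is to exploit the two absorbing singletons $\{\0\}$ and $\{\1\}$ established in Proposition~\ref{pro:absorbing_partiel}, together with the fact that these are reachable from each other's complements only in a trivial sense: no trajectory ever leaves $\{\0\}$ or $\{\1\}$. Fix an arbitrary non-trivial $\sigma$-finite measure $\phi$ on $\mathcal{T}$. The key observation is that $\phi$ cannot give positive mass to \emph{both} $\{\0\}$ and $\{\1\}$ while also being "neutral" — more precisely, I would split into cases according to the $\phi$-measure of these two points and of their complement, and in each case produce a set $\cA$ with $\phi(\cA)>0$ that is unreachable from some configuration.

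\medskip

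First I would record the elementary consequence of absorption: if $\cA$ is absorbing and $\omega\notin\cA$ with $\cA$, $\cA^c$ both measurable, this alone does not prevent reachability; the relevant fact is rather that if $\cA$ is absorbing then its complement $\cA^c$ is such that no trajectory \emph{enters} $\cA^c$ from $\cA$, i.e. $K^n(\omega,\cA^c)=0$ for all $n$ whenever $\omega\in\cA$. Apply this with $\cA=\{\0\}$: for every $n$, $K^n(\0,\cA')=0$ for any $\cA'\subseteq\Omega\setminus\{\0\}$, because $K(\0,\{\0\})=1$ forces $K^n(\0,\cdot)=\delta_{\0}$. Likewise $K^n(\1,\cdot)=\delta_{\1}$ for all $n$.

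\medskip

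Now the case analysis. If $\phi(\Omega\setminus\{\0\})>0$, then taking $\cA=\Omega\setminus\{\0\}$ and $\omega=\0$ gives $\sum_{n\ge1}K^n(\0,\cA)=0$, contradicting $\phi$-irreducibility; so $K$ is not $\phi$-irreducible. Symmetrically, if $\phi(\Omega\setminus\{\1\})>0$, take $\omega=\1$ and $\cA=\Omega\setminus\{\1\}$. It remains to rule out the case $\phi(\Omega\setminus\{\0\})=\phi(\Omega\setminus\{\1\})=0$. In that case, by monotonicity, $\phi(\Omega\setminus\{\0,\1\})\le\phi(\Omega\setminus\{\0\})=0$, and $\phi(\{\0\})=\phi(\Omega)-\phi(\Omega\setminus\{\0\})$ — but more directly: $\phi(\{\1\})\le\phi(\Omega\setminus\{\0\})=0$ since $\{\1\}\subseteq\Omega\setminus\{\0\}$, and likewise $\phi(\{\0\})\le\phi(\Omega\setminus\{\1\})=0$. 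Hence $\phi(\{\0\})=\phi(\{\1\})=0$ and $\phi(\Omega\setminus\{\0,\1\})=0$, so $\phi(\Omega)=0$, i.e. $\phi$ is the trivial measure, contradicting the hypothesis that $\phi$ is non-trivial. Therefore in all cases $K$ fails to be $\phi$-irreducible.

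\medskip

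The main obstacle — and it is a mild one — is being careful that the sets invoked are genuinely in $\mathcal{T}$: the singletons $\{\0\}$, $\{\1\}$ are measurable (indeed closed in the product topology, or by the remark in the excerpt that $\{X\}\in\cT$ for every $X$), hence their complements are measurable too, so the measure-theoretic bookkeeping above is legitimate. One should also note that $\sum_{n\ge1}K^n(\omega,\cA)>0$ is the operative definition of reachability used in $\phi$-irreducibility, and that $K^n(\0,\Omega\setminus\{\0\})=0$ for every $n\ge1$ follows by an immediate induction from $K(\0,\{\0\})=1$ (Proposition~\ref{pro:absorbing_partiel}); no finer structure of $K$ is needed. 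Thus the whole argument reduces to the absorption of $\{\0\}$ and $\{\1\}$ plus a two-line pigeonhole on where a non-trivial $\sigma$-finite measure can place its mass.
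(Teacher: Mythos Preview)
Your proof is correct and follows essentially the same approach as the paper's: both exploit the absorption of $\{\0\}$ and $\{\1\}$ to show that any $\phi$ for which $K$ is $\phi$-irreducible must assign zero mass to every set missing $\0$ or missing $\1$, and hence must be trivial. Your organisation as a direct case split on $\phi(\Omega\setminus\{\0\})$ and $\phi(\Omega\setminus\{\1\})$ is slightly more streamlined than the paper's contradiction-then-partition argument, but the logical content is the same.
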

\begin{proof}
We prove the result by contradiction. Let $\phi$ be such that the Markov chain is $\phi$-irreducible. Since
$\{\0\}$ is absorbing, for every $\mathcal{A}\in \mathcal{T}$ such that
$\0\notin \mathcal{A}$, we need $\phi(\mathcal{A})=0$. Similarly if $\1\notin
\mathcal{A}$, we need $\phi(\mathcal{A})=0$. $\phi$ is non-trivial, hence there
exists $\cA^*$ such that $\phi(\cA^*) \neq 0$. It follows that $\cA^*$ contains
$\0$ and $\1$. We now consider the following partition of $\cA^*$: $\{\0\}$,
$\{\1\}$ and $\cA^*\setminus\{\0,\1\}$. Then $\phi(\cA^*\setminus\{\0,\1\})=0$ and
  therefore either $\{\0\}$ or $\{\1\}$ has a strictly positive measure under
  $\phi$, which contradicts the fact that a set without $\0$ or $\1$ has a zero
  measure.
\end{proof}

Since $K$ is not $\phi$-irreducible on $\cT$, we may look for sub-fields of
$\cT$ where $K$ is $\phi$-irreducible. We introduce for any $\cA\in\cT$ the {\it
  trace} of $\cT$ on $\cA$, defined by $\mathcal{T}_{|\cA}=\{\cB\in\cT\mymid
\cB\subseteq \cA\}$, and $\phi_{\cA}$ the regular conditional probability on $\cA$.

\begin{definition}
A set of configurations $\cA\in\mathcal{T}$ is a {\it  $\phi$-irreducible set} if for every $\omega \in \cA$, every $\cB\in \mathcal{T}_{|\cA}$ such that $\phi_{\cA}(\cB)>0$, $K^n(\omega,\cB)>0$ for some $n$.
\end{definition}

Combining both $\phi$-irreducibility and the property of being absorbing,
we get the fundamental notion of class.
\begin{definition}
A set of configurations $\cA\in\mathcal{T}$ is a {\it  $\phi$-irreducible class}
if it is both absorbing and a $\phi$-irreducible set.
\end{definition}
This notion is close to the notion of absorbing classes for finite Markov
  chains, and therefore constitutes the most important notion to describe the
  behavior of the Markov process. Indeed, $\phi$-irreducible classes are those
  sets of configurations in which the process stays forever with probability 1,
  and where ``almost'' every subset of configurations is visited.


By Proposition~\ref{pro:absorbing_partiel}, we know that $\{\textbf{1}\}$
  and $\{\textbf{0}\}$ are absorbing singletons. Since a singleton set is
  always $\phi$-irreducible, we obtain our first result on convergence.  
\begin{proposition}\label{prop:01}
$\{\textbf{1}\}$ and $\{\textbf{0}\}$ are finite $\phi$-irreducible classes for any $\phi$.
\end{proposition}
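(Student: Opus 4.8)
The plan is to combine the two facts already established in the excerpt. First, Proposition~\ref{pro:absorbing_partiel} tells us that $\{\1\}$ and $\{\0\}$ are absorbing sets, i.e., $K(\1,\{\1\})=1$ and $K(\0,\{\0\})=1$. Second, a $\phi$-irreducible class is by definition a set that is both absorbing and a $\phi$-irreducible set, so it remains only to verify that each of these singletons is a $\phi$-irreducible set in the sense of the definition given just above.

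So the core of the argument is the remark already flagged in the text: ``a singleton set is always $\phi$-irreducible.'' I would spell this out. Fix $\phi$ a non-trivial $\sigma$-finite measure on $\cT$, take $\cA=\{\1\}$ (the case $\cA=\{\0\}$ is identical), and consider the trace $\sigma$-field $\cT_{|\cA}=\{\cB\in\cT\mymid \cB\subseteq\{\1\}\}=\{\varnothing,\{\1\}\}$. For the $\phi$-irreducible-set condition we must check that for every $\omega\in\cA$ and every $\cB\in\cT_{|\cA}$ with $\phi_{\cA}(\cB)>0$, there is some $n$ with $K^n(\omega,\cB)>0$. The only $\omega\in\cA$ is $\1$ itself, and the only $\cB\in\cT_{|\cA}$ that can possibly have positive conditional measure is $\cB=\{\1\}$ (since $\phi_{\cA}(\varnothing)=0$ always). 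But $K(\1,\{\1\})=1>0$ by Proposition~\ref{pro:absorbing_partiel}, so taking $n=1$ works. Hence $\{\1\}$ is a $\phi$-irreducible set, and being also absorbing, it is a $\phi$-irreducible class; moreover it is finite (indeed a singleton). The same reasoning with $\0$ in place of $\1$ finishes the proof.

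I do not expect any real obstacle here: this proposition is essentially a bookkeeping corollary combining Proposition~\ref{pro:absorbing_partiel} with the definitions of $\phi$-irreducible set and $\phi$-irreducible class. The only point that deserves a word of care is the regular conditional probability $\phi_{\cA}$ on the trace $\sigma$-field: one should note that $\phi_{\cA}$ is a probability measure on $\cT_{|\cA}$, so that $\phi_{\cA}(\{\1\})=1$ and $\phi_{\cA}(\varnothing)=0$ whenever the conditioning is well-defined (and if $\phi(\{\1\})=0$, the condition $\phi_{\cA}(\cB)>0$ is simply never triggered, so the implication holds vacuously). Either way the requirement in the definition of $\phi$-irreducible set is met, and the statement follows for every $\phi$.
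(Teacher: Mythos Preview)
Your proof is correct and follows exactly the paper's approach: the paper simply notes that Proposition~\ref{pro:absorbing_partiel} gives absorbingness and that ``a singleton set is always $\phi$-irreducible,'' and you have merely spelled out this second remark in detail.
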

Now that we have found two absorbing singletons and therefore
      having brought a partial answer to our main concern, natural questions
      are: are the singletons $\{\textbf{1}\}$ and $\{\textbf{0}\}$ the only
      absorbing ones?  Does there exist other $\phi$-irreducible classes? We will answer positively  the first question in the next section and study the second one in the subsequent sections.

\section{Deterministic configurations}\label{sec:deterministic}

%

There are two extreme cases where the dynamic process becomes deterministic. The first one is if the aggregation function is Boolean. Then, the transition becomes deterministic independently of the configuration. Given any configuration at stage $t$, only one configuration is possible at stage $t+1$. Informally, the {\it mechanism} makes the process deterministic.
The second one is if the configuration satisfies some special properties. Then, any transition becomes deterministic (starting from this configuration). Informally, the {\it configuration} makes the process deterministic. According to (\ref{eq:cert}), this happens if and only if the configuration $X$ satisfies  $\intX=\cloX$.

Section~\ref{sec:boolean} will be devoted to the case of the deterministic mechanism, while this section is
  devoted to the second case. In order to satisfy $\intX=\cloX$, there are only two possibilities: either
  $\intX =\cloX = X$ (then $X$ is called a \emph{fixed point}), or $\intX =\cloX
  \neq X$. We will see that the latter case leads to a special type of graph.

\begin{proposition}\label{pro:2}
 \begin{enumerate}
    \item $X$ is a fixed point for any aggregation function iff  $X=\scX$ or
      $X=\varnothing$.
    \item If $A$ is strict, then the only fixed points are $X=\scX$ and
      $X=\varnothing$.
   \end{enumerate}
\end{proposition}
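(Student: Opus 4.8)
The plan is to funnel both statements through the elementary set‑theoretic equivalence
\[
\intX=\cloX=X \iff X\in\{\varnothing,\scX\},
\]
which I will call $(\star)$. Statement~1 is then essentially a reformulation of $(\star)$: by (\ref{eq:cert}), the condition $\intX=\cloX=X$ is exactly what forces the transition from $X$ to land on $\{X\}$ with probability $1$ \emph{for every} aggregation function, so ``$X$ is a fixed point for any aggregation function'' is nothing but ``$\intX=\cloX=X$''. Statement~2 then amounts to showing that when $A$ is strict, the mere fact of being dynamically fixed already forces $\intX=\cloX=X$, after which $(\star)$ finishes the job.

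To prove $(\star)$, for ``$\Leftarrow$'' I would just unwind the definitions of $\int$ and $\clo$. Since $(\scX,\sim)$ is connected and has more than one vertex, every $x$ has a neighbor (take any $y\neq x$ and look at the first edge of a path from $x$ to $y$), i.e.\ $\Gamma(x)\neq\varnothing$ for all $x$; hence $\int(\scX)=\{x:\Gamma(x)\subseteq\scX\}=\scX$, $\clo(\scX)=\{x:\Gamma(x)\cap\scX\neq\varnothing\}=\scX$, $\int(\varnothing)=\{x:\Gamma(x)=\varnothing\}=\varnothing$, and $\clo(\varnothing)=\varnothing$. For ``$\Rightarrow$'', suppose $\cloX=X$ and, for contradiction, that $\varnothing\neq X\neq\scX$: choose $a\in X$, $b\notin X$ and a path $a=x_1\sim\cdots\sim x_k=b$; since the path leaves $X$ there is an index $i$ with $x_i\in X$ and $x_{i+1}\notin X$, but then $x_i\in\Gamma(x_{i+1})\cap X$ gives $x_{i+1}\in\cloX=X$, a contradiction. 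This establishes $(\star)$, and with it statement~1: if $X\in\{\varnothing,\scX\}$ then $\intX=\cloX=X$, so (\ref{eq:cert}) yields a certain transition to $X$ under any $A$; conversely, if $X$ is a fixed point whatever $A$ is, then in particular it is a fixed point for some strict $A$, and statement~2 (proved next) gives $X\in\{\varnothing,\scX\}$.

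For statement~2, fix a strict aggregation function $A$ (for instance the arithmetic mean, which is strict). The two configurations $\varnothing,\scX$ are fixed points for every $A$ by the use of (\ref{eq:cert}) above, so it remains to prove the converse. Let $X$ be a fixed point, i.e.\ $K(X,\{X\})=1$. Then for every $x\in\scX$ the status of $x$ at the next step equals $1$ if $x\in X$ and $0$ otherwise, with probability $1$; since the marginal probability that $x$ is active at the next step is $P(x\mid X)=A(\pi_{\Gamma(x)}(1_X))$, this means $P(x\mid X)=1$ for $x\in X$ and $P(x\mid X)=0$ for $x\notin X$. Now I invoke the equivalences that hold precisely when $A$ is strict (the remark following (\ref{eq:3})): $P(x\mid X)=1\iff x\in\intX$ and $P(x\mid X)=0\iff x\notin\cloX$. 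Hence $X\subseteq\intX$ and $\cloX\subseteq X$; combined with the always‑true inclusion $\intX\subseteq\cloX$ this gives $X=\intX=\cloX$, and $(\star)$ forces $X\in\{\varnothing,\scX\}$. The argument is short; the only points requiring a little care are making explicit which notion of ``fixed point'' is in play (the configurational condition $\intX=\cloX=X$ and the dynamical condition $K(X,\{X\})=1$ coincide here, thanks to $(\star)$ together with strictness in item~2 and to (\ref{eq:cert}) in item~1) and correctly extracting the boundary edge $x_i\sim x_{i+1}$ on the connecting path. I do not expect a genuine obstacle.
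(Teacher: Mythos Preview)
Your proof is correct and follows essentially the same line as the paper. Both arguments reduce to the connectivity fact $(\star)$ and, for item~2, use that strictness turns the implications in~(\ref{eq:3}) into equivalences so that a dynamical fixed point must satisfy $\intX=\cloX=X$. The only cosmetic differences are that the paper runs the path argument using $X=\intX$ (propagating $\Gamma(x_j)\subseteq X$ along the whole path) while you use $\cloX=X$ at a single boundary edge, and that you route the converse of item~1 through item~2 applied to a concrete strict $A$, whereas the paper simply identifies ``fixed point for any $A$'' with the set condition $\intX=\cloX=X$ and proves $(\star)$ directly.
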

\begin{proof}
(1) Let $X$ satisfy $\intX=\cloX=X$ and suppose that $X\neq\varnothing,\scX$. Then
  there exists $x\in X$ and $y\not\in X$. By connectedness, there exists a finite
  path $x_1=x,x_2,\ldots,x_k=y$ connecting $x$ to $y$. However, as $X=\intX$,
  $\Gamma(x_1)\subseteq X$, and consequently $\Gamma(x_2)\subseteq X$,
  $\Gamma(x_3)\subseteq X$, etc., till $\Gamma(x_{k-1})\subseteq X$. However,
  $y=x_k\in\Gamma(x_{k-1})$, a contradiction.

The converse statement is obvious.

  (2) We only have to prove that $X\neq\varnothing,\scX$ cannot be a fixed
  point. The above reasoning can be used without change because when $A$ is
  strict, the implications in (\ref{eq:3}) become equivalences.
\end{proof}

\begin{proposition}\label{pro:3}
Consider $X$ such that
$\intX=\cloX\neq X$. The following holds:
\begin{enumerate}
\item $X^c$ has the same property.
\item $\intX=X^c$ 
\end{enumerate}
\end{proposition}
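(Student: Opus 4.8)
The plan is to exploit the duality between the two operators: for every $Y\subseteq\scX$ one has $\int(Y^c)=\clo(Y)^c$, directly from the definitions, since $\Gamma(x)\subseteq Y^c$ means exactly $\Gamma(x)\cap Y=\varnothing$. I will also use that, $\scX$ being countably infinite and connected, no agent is isolated, so $\Gamma(x)\neq\varnothing$ for all $x$; in particular $\intX\subseteq\cloX$ always holds, and the hypothesis $\intX=\cloX$ says precisely that every neighborhood $\Gamma(x)$ is entirely inside $X$ or entirely inside $X^c$. Throughout I write $Z:=\intX=\cloX$; the standing hypothesis is $Z\neq X$.

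For item 1, I would simply apply the duality identity twice: $\int(X^c)=\clo(X)^c=Z^c$ and $\clo(X^c)=\int(X)^c=Z^c$, hence $\int(X^c)=\clo(X^c)=Z^c$, and this common value differs from $X^c$ because $Z\neq X$. So $X^c$ has the same property, its associated set being $Z^c$.

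For item 2, I must prove $Z=X^c$, i.e.\ the two inclusions $Z\subseteq X^c$ and $X^c\subseteq Z$. The first one I would obtain by a connectedness argument applied to $W:=Z\cap X$: if $x\in W$, then $x\in\intX$ forces $\Gamma(x)\subseteq X$, and for any $y\in\Gamma(x)$ symmetry of $\sim$ gives $x\in\Gamma(y)\cap X$, hence $y\in\cloX=Z$; together with $y\in\Gamma(x)\subseteq X$ this yields $y\in W$. Thus $\Gamma(x)\subseteq W$ for every $x\in W$, so no edge leaves $W$, and connectedness of $(\scX,\sim)$ forces $W=\varnothing$ or $W=\scX$. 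The case $W=\scX$ would give $Z=X=\scX$, contradicting $Z\neq X$; hence $W=\varnothing$, i.e.\ $Z\subseteq X^c$. Applying exactly the same argument to $X^c$ in place of $X$ (legitimate by item 1, whose associated set is $Z^c$) yields $Z^c\subseteq X$, that is $X^c\subseteq Z$, and the two inclusions give $Z=X^c$.

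The only genuine step is the connectedness argument of item 2, which upgrades the easy inclusion $Z\subseteq X^c$ to an equality; everything else is formal manipulation of the duality identities, and no assumption on the aggregation function is needed. It is worth recording that item 2 means all edges of $(\scX,\sim)$ run between $X$ and $X^c$, i.e.\ the network is bipartite with blocks $X$ and $X^c$ — the structural fact that will be exploited in the later sections.
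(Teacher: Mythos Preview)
Your proof is correct and follows essentially the same strategy as the paper's: item~1 via the duality $\int(Y^c)=\clo(Y)^c$ (the paper phrases this as invariance of conditions $(*)$ and $(**)$ under $X\to X^c$), and item~2 via a connectedness argument showing $Z\cap X=\varnothing$, then bootstrapping the reverse inclusion from item~1. Your item~2 is a touch cleaner---you show directly that $W=Z\cap X$ is closed under $\Gamma$ and invoke connectedness, whereas the paper picks a distinguished point from condition~$(**)$ and traces an explicit path---but the content is the same.
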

(see proof in Appendix \ref{appA})

\begin{corollary}\label{cor:1}
There exists a configuration $X$ such that $\intX=\cloX\neq X$ if and only if
the graph $(\scX,\sim)$ is bipartite, with (unique) bipartition $(X,X^c)$. 
\end{corollary}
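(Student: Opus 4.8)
The statement is an ``if and only if'', so I would prove the two directions separately, with Proposition~\ref{pro:3} doing most of the work for the forward direction. The plan is as follows.

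First, assume there exists a configuration $X$ with $\intX = \cloX \neq X$. By Proposition~\ref{pro:3}(2), $\intX = X^c$, and by Proposition~\ref{pro:3}(1), $X^c$ has the same property, so $\accentset{\circ}{X^c} = X$ as well. I claim $(X, X^c)$ is a bipartition of $(\scX, \sim)$. Indeed, take any edge $x \sim y$. Suppose for contradiction that $x, y$ are both in $X$; since $\Gamma(x) \subseteq \cloX = \intX = X^c$ — wait, more carefully: $y \in \Gamma(x)$, and I want to locate $y$. The cleanest route: since $\intX = X^c$, for every $z \in X^c$ we have $\Gamma(z) \subseteq X$, i.e., every neighbor of a vertex in $X^c$ lies in $X$. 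Symmetrically (applying this to $X^c$, whose interior is $X$), every neighbor of a vertex in $X$ lies in $X^c$. Hence every edge goes between $X$ and $X^c$, which is exactly the statement that $(X, X^c)$ is a bipartition and that the graph is bipartite. Both $X$ and $X^c$ are nonempty: if $X = \varnothing$ then $\intX = \scX \neq \varnothing = X$ forces $\scX = \varnothing$, impossible; if $X = \scX$ then $\cloX = \scX = X$, contradicting $\cloX \neq X$. So the bipartition is into two nonempty blocks, and uniqueness follows from connectedness (a standard fact: in a connected bipartite graph the bipartition is determined up to swapping the two sides, so as an unordered pair it is unique — this can be seen by fixing a vertex $v$ and noting that the side of any vertex $w$ is forced by the parity of the length of any path from $v$ to $w$).

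Conversely, assume $(\scX, \sim)$ is bipartite with bipartition $(X, X^c)$, both blocks nonempty. Every neighbor of a vertex in $X$ lies in $X^c$ and vice versa. Then $\cloX = \{x : \Gamma(x) \cap X \neq \varnothing\}$: a vertex $x$ has a neighbor in $X$ iff $x \in X^c$ (using connectedness, or at least that every vertex of $X^c$ has at least one neighbor, which it does since the graph is connected and has more than one vertex — and that neighbor must be in $X$; conversely no vertex of $X$ has a neighbor in $X$). Hence $\cloX = X^c$. Similarly $\intX = \{x : \Gamma(x) \subseteq X\} = X^c$, since the vertices all of whose neighbors lie in $X$ are exactly the vertices of $X^c$ (every vertex of $X^c$ has all neighbors in $X$; no vertex of $X$ has all — indeed any — neighbors in $X$, and $X$ is nonempty so such vertices exist). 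Thus $\intX = \cloX = X^c \neq X$, the last inequality because $X$ and $X^c$ are disjoint and both nonempty. This exhibits the desired configuration.

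I expect the main subtlety to be the bookkeeping around degenerate cases and the uniqueness claim, rather than any deep obstacle. One must be slightly careful that ``every vertex has at least one neighbor'' (needed to identify $\cloX$ with the whole opposite block) follows from connectedness together with $|\scX| \geq 2$, which holds since $\scX$ is countably infinite. Uniqueness of the bipartition is the one place where connectedness is genuinely used and deserves an explicit word in the write-up. Everything else is a direct unwinding of the definitions of $\int$ and $\clo$ combined with Proposition~\ref{pro:3}.
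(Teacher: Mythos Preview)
Your proposal is correct and follows essentially the same route as the paper: use Proposition~\ref{pro:3} to deduce that every edge runs between $X$ and $X^c$ (the paper argues the contrapositive via $\cloX=X^c$, you via $\intX=X^c$ and the symmetric statement for $X^c$), then invoke connectedness for uniqueness, and verify the converse directly. One small slip: in your nonemptiness check for $X=\varnothing$ you write ``$\intX=\scX$'', but in fact $\int(\varnothing)=\clo(\varnothing)=\varnothing$ in a graph with no isolated vertices, so the hypothesis $\cloX\neq X$ already fails---this is easily repaired and does not affect the argument.
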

\begin{proof}
Suppose that $X$ with the above property exists and take $x\in X$. Then
$\Gamma(x)\cap X=\varnothing$. Indeed, if $y\in \Gamma(x)\cap X$, then $y\in
\cloX=\intX=X^c$ by Proposition~\ref{pro:3} (2), a contradiction. The same
property holds for $X^c$ by Proposition~\ref{pro:3} (1). Therefore, $(X,X^c)$ is
a bipartite graph. In addition it is well-known that the bipartition of a
bipartite graph is unique.

The converse statement is obvious. 
\end{proof}

An important consequence of the above result is that if $(\scX,\sim)$ is
bipartite with bipartition $(X,X^c)$, $\{X,X^c\}$ is a periodic absorbing class
of period 2.

Let us revisit the networks given in Examples~\ref{ex:Z2} to \ref{ex:hier}
and check if they are bipartite.  $\ZZ^2$ with the 1-neighborhood is bipartite,
with the bipartition given by the parity (odd, even) of the nodes: a node
$(n,m)$ is called ``even'' (resp., odd) if $n+m$ is even (resp., odd) (see
Figure~\ref{fig:blink} (left)). However, $\ZZ^2$ is no more bipartite with the
$\sqrt{2}$-neighborhood, nor with any other distance. The result can be extended
to $\ZZ^d$ as well, for any positive integer $d$. The hexagonal pavement is also
bipartite (1 node out of 2 on each hexagon; see Figure~\ref{fig:blink} (middle))
and so is the hierarchy (all nodes of odd-numbered layers; see
Figure~\ref{fig:blink} (right)). 

\begin{figure}[htb]
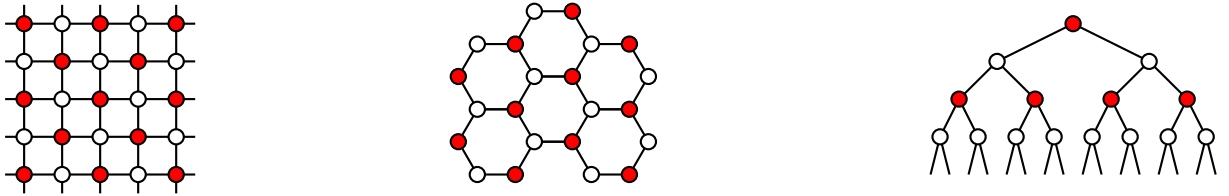

\begin{center}
\psset{unit=0.5cm}
\pspicture(-0.5,-1.5)(4.5,4.5)
\psline(-0.5,0)(4.5,0)
\psline(-0.5,1)(4.5,1)
\psline(-0.5,2)(4.5,2)
\psline(-0.5,3)(4.5,3)
\psline(-0.5,4)(4.5,4)
\psline(0,-0.5)(0,4.5)
\psline(1,-0.5)(1,4.5)
\psline(2,-0.5)(2,4.5)
\psline(3,-0.5)(3,4.5)
\psline(4,-0.5)(4,4.5)
\pscircle[fillstyle=solid,fillcolor=red](0,0){0.2}
\pscircle[fillstyle=solid](0,1){0.2}
\pscircle[fillstyle=solid,fillcolor=red](0,2){0.2}
\pscircle[fillstyle=solid](0,3){0.2}
\pscircle[fillstyle=solid,fillcolor=red](0,4){0.2}
\pscircle[fillstyle=solid](1,0){0.2}
\pscircle[fillstyle=solid,fillcolor=red](1,1){0.2}
\pscircle[fillstyle=solid](1,2){0.2}
\pscircle[fillstyle=solid,fillcolor=red](1,3){0.2}
\pscircle[fillstyle=solid](1,4){0.2}
\pscircle[fillstyle=solid,fillcolor=red](2,0){0.2}
\pscircle[fillstyle=solid](2,1){0.2}
\pscircle[fillstyle=solid,fillcolor=red](2,2){0.2}
\pscircle[fillstyle=solid](2,3){0.2}
\pscircle[fillstyle=solid,fillcolor=red](2,4){0.2}
\pscircle[fillstyle=solid](3,0){0.2}
\pscircle[fillstyle=solid,fillcolor=red](3,1){0.2}
\pscircle[fillstyle=solid](3,2){0.2}
\pscircle[fillstyle=solid,fillcolor=red](3,3){0.2}
\pscircle[fillstyle=solid](3,4){0.2}
\pscircle[fillstyle=solid,fillcolor=red](4,0){0.2}
\pscircle[fillstyle=solid](4,1){0.2}
\pscircle[fillstyle=solid,fillcolor=red](4,2){0.2}
\pscircle[fillstyle=solid](4,3){0.2}
\pscircle[fillstyle=solid,fillcolor=red](4,4){0.2}
\endpspicture
\hfill
\pspicture(-0.5,-1.5)(4.5,4.5)
\psline(-0.5,0.87)(0,0)(1,0)(1.5,0.87)(2.5,0.87)(3,0)(4,0)(4.5,0.87)
\psline(-0.5,0.87)(0,1.73)(1,1.73)(1.5,0.87)(2.5,0.87)(3,1.73)(4,1.73)(4.5,0.87)
\psline(-0.5,2.6)(0,1.73)(1,1.73)(1.5,2.6)(2.5,2.6)(3,1.73)(4,1.73)(4.5,2.6)
\psline(-0.5,2.6)(0,3.46)(1,3.46)(1.5,2.6)(2.5,2.6)(3,3.46)(4,3.46)(4.5,2.6)
\psline(1,3.46)(1.5,4.33)(2.5,4.33)(3,3.46)
\pscircle[fillstyle=solid,fillcolor=red](-0.5,0.87){0.2}
\pscircle[fillstyle=solid](0,0){0.2}
\pscircle[fillstyle=solid,fillcolor=red](1,0){0.2}
\pscircle[fillstyle=solid](1.5,0.87){0.2}
\pscircle[fillstyle=solid,fillcolor=red](2.5,0.87){0.2}
\pscircle[fillstyle=solid](3,0){0.2}
\pscircle[fillstyle=solid,fillcolor=red](4,0){0.2}
\pscircle[fillstyle=solid](4.5,0.87){0.2}
\pscircle[fillstyle=solid](0,1.73){0.2}
\pscircle[fillstyle=solid,fillcolor=red](1,1.73){0.2}
\pscircle[fillstyle=solid](3,1.73){0.2}
\pscircle[fillstyle=solid,fillcolor=red](4,1.73){0.2}
\pscircle[fillstyle=solid,fillcolor=red](-0.5,2.6){0.2}
\pscircle[fillstyle=solid](1.5,2.6){0.2}
\pscircle[fillstyle=solid,fillcolor=red](2.5,2.6){0.2}
\pscircle[fillstyle=solid](4.5,2.6){0.2}
\pscircle[fillstyle=solid](0,3.46){0.2}
\pscircle[fillstyle=solid,fillcolor=red](1,3.46){0.2}
\pscircle[fillstyle=solid](3,3.46){0.2}
\pscircle[fillstyle=solid,fillcolor=red](4,3.46){0.2}
\pscircle[fillstyle=solid](1.5,4.33){0.2}
\pscircle[fillstyle=solid,fillcolor=red](2.5,4.33){0.2}
\endpspicture
\hfill
\pspicture(-0.5,-1.5)(7.5,4.5)
\psline(0,0)(0.25,1)(0.5,0)
\psline(1,0)(1.25,1)(1.5,0)
\psline(2,0)(2.25,1)(2.5,0)
\psline(3,0)(3.25,1)(3.5,0)
\psline(4,0)(4.25,1)(4.5,0)
\psline(5,0)(5.25,1)(5.5,0)
\psline(6,0)(6.25,1)(6.5,0)
\psline(7,0)(7.25,1)(7.5,0)
\psline(0.25,1)(0.75,2)(1.25,1)
\psline(2.25,1)(2.75,2)(3.25,1)
\psline(4.25,1)(4.75,2)(5.25,1)
\psline(6.25,1)(6.75,2)(7.25,1)
\psline(0.75,2)(1.75,3)(2.75,2)
\psline(4.75,2)(5.75,3)(6.75,2)
\psline(1.75,3)(3.75,4)(5.75,3)
\pscircle[fillstyle=solid,fillcolor=red](3.75,4){0.2}
\pscircle[fillstyle=solid](1.75,3){0.2}
\pscircle[fillstyle=solid](5.75,3){0.2}
\pscircle[fillstyle=solid,fillcolor=red](0.75,2){0.2}
\pscircle[fillstyle=solid,fillcolor=red](2.75,2){0.2}
\pscircle[fillstyle=solid,fillcolor=red](4.75,2){0.2}
\pscircle[fillstyle=solid,fillcolor=red](6.75,2){0.2}
\pscircle[fillstyle=solid](0.25,1){0.2}
\pscircle[fillstyle=solid](1.25,1){0.2}
\pscircle[fillstyle=solid](2.25,1){0.2}
\pscircle[fillstyle=solid](3.25,1){0.2}
\pscircle[fillstyle=solid](4.25,1){0.2}
\pscircle[fillstyle=solid](5.25,1){0.2}
\pscircle[fillstyle=solid](6.25,1){0.2}
\pscircle[fillstyle=solid](7.25,1){0.2}
\endpspicture
\end{center}
  \caption{Bipartite graphs: $\ZZ^2$ endowed with the 1-neighborhood (left), the
    hexagonal pavement (middle), and the hierarchy (right)}
  \label{fig:blink}
\end{figure}

\medskip

We comment on the results obtained so far.
\begin{itemize}
\item We have found that, when $A$ is strict, the only
  possible stable configurations are the extreme ones $\scX$ and $\varnothing$,
  where agents are either all active or all inactive. This means that there
  cannot be polarization {\it stricto sensu}, i.e., a stable situation where
  forever a set of agents will be inactive while the other ones are active. We
  will see in Section~\ref{sec:boolean} that with Boolean aggregation functions,
  other fixed points exist.
\item If the network is a bipartite graph, and only under this condition, a second case
  of deterministic configuration exists, which is a cycle of length 2, where
  alternatively half of the agents are active, then inactive at the next time
  step and so on. This is independent of the aggregation function.
  Although such a cycle has almost no chance to occur in
  practice, it will reveal to be a fundamental tool in our analysis of
  convergence. What is more, we will show that, even when the network is not
  bipartite, there always exists a subnetwork which is bipartite, so that
  the analysis amounts to the same. These important facts lead to the structure
  of the rest of the paper: we conduct a full analysis of convergence for
  bipartite networks (Sections~\ref{sec:decomp_absorbing_transient} and \ref{sec:decomp_ergodicity}), then using these results as a
  basis, we extend them to the general case (Section~\ref{sec:noblinker}).
\item The result on the cycle of length 2 is reminiscent of the classical result
  on finite graphs with the threshold model (recall that this is a particular
  case of our Boolean aggregation function model), saying that the process
  either reaches a fixed point or enters a cycle of length 2 (see 
  \cite{gol-oli80}, \cite{pol-sur83}). We mention that this classical result is no
  more valid for infinite graphs as shown by 
  \cite{mor95} where we study
  the case of deterministic mechanism, we will see other examples of cycles.
\end{itemize}


From now on, the bipartition of a bipartite network $(\scX,\sim)$ will be
denoted by $\{\scX_e,\scX_o\}$, where $\scX_e$ is the set of {\it even
  positions} or nodes, and $\scX_o$ the set of {\it odd positions}, by analogy
with the example of $\ZZ^d$ and the 1-neighborhood. It
is important to keep in mind that any $x\in \scX_e$ (resp., $\scX_o$) has its
neighborhood contained in $\scX_o$ (resp., $\scX_e$). In particular, the status
at even stages at even positions is independent from the status at odd stages at
even positions.

\section{Strict aggregation functions and bipartite networks: Absorbing and transient sets}\label{sec:decomp_absorbing_transient}
In this section, we focus on the case of strict aggregation functions and bipartite
networks $(\scX,\sim)$. A key element of the analysis is to
define a partition of the set of configurations into blocks, according to the
``number'' of active/inactive nodes on odd and even positions.

\medskip

A block of the partition of $\Omega$ is characterized by a 4-uple
$(E_a,E_i,O_a,O_i)$, where $E_a,E_i,O_a,O_i$ are respectively the ``numbers'' of
nodes in $\scX_e$ (even positions) with inactive status, in $\scX_e$ with active
status, in $\scX_o$ with inactive status , and in $\scX_o$ with active
status. The ``numbers'' of nodes are coded by $0,\F,\infty$, where 0 means that
there is no node, $\F$ means that the number of nodes is finite but positive,
and $\infty$ means that there are infinitely many nodes.

There are a priori $3^4 = 81$ possible blocks in the partition, but actually
only 25 are nonempty. Indeed, if we restrict ourselves to even positions, either
0 or 1 has to appear infinitely often, therefore only 5 possible cases remain: $(0,\infty)$, $(\F,\infty)$, $(\infty,0)$, $(\infty,\F)$ and $(\infty,\infty)$. Since what happens on odd and on even positions are independent, there are $25$ nonempty blocks in the partition.

For example, with this notation we have:
\begin{itemize}
\item[-] $(\infty,0,\infty,0)$ is the singleton $\{\textbf{0}\}=\{\varnothing\}$,
\item[-] $(0,\infty, 0, \infty)$ is the singleton $\{\textbf{1}\}=\{\scX\}$,
\item[-] $(0, \infty, \infty, 0)$ is the singleton $\{\scX_e\}$ which is the configuration where even agents are active whereas odd agents are inactive,
\item[-] $ (\infty,0,0,\infty)$ is the singleton $\{\scX_o\}$ which is the configuration where even agents are inactive whereas odd agents are active.
\end{itemize}

We now establish the following theorem. 

\begin{theorem}\label{absorbing-transient}
The following sets of configurations are respectively:
\begin{enumerate}[(i)]
\item \label{positive} finite $\phi$-irreducible classes:
\begin{itemize}
\item[-] $(\infty,0,\infty,0)$,
\item[-] $(0,\infty,0,\infty)$,
\item[-] $(0,\infty,\infty, 0) \cup (\infty, 0, 0,\infty)$.
\end{itemize}
\item \label{null} infinite uncountable absorbing sets: 
\begin{itemize}
\item[-] $(\infty,\infty,\infty,\infty)$,
\item[-] $(\infty, 0 ,\infty,\infty)\cup (\infty,\infty,\infty,0)$,
\item[-] $(0, \infty ,\infty,\infty)\cup (\infty,\infty,0,\infty)$.
\end{itemize}
\item \label{transient} infinite transient sets: 
\begin{multicols}{2}
\begin{itemize}
\item[-] $(\infty,\F,\infty,\F)$,
\item[-] $(\F,\infty,\F,\infty)$,
\item[-] $(\F,\infty,\infty,\F) \cup (\infty,\F,\F,\infty)$, 
\item[-] $(\infty, \F ,\infty,\infty)\cup (\infty,\infty,\infty,\F)$,
\item[-] $(\F, \infty ,\infty,\infty)\cup (\infty,\infty,\F,\infty)$,
\item[-] $(\infty,\F,\infty,0) \cup (\infty,0,\infty,\F)$,
\item[-] $(\F,\infty,0,\infty) \cup(0,\infty,\F,\infty)$,
\item[-] $(0,\infty,\infty,\F) \cup (\infty,\F,0,\infty)$, 
\item[-] $(\F,\infty,\infty,0) \cup (\infty,0,\F,\infty)$.
\end{itemize}
\end{multicols}
\end{enumerate}
\end{theorem}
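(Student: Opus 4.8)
The plan is to turn the whole statement into combinatorics of the two operators $\int,\clo$ on subsets of the bipartition $\scX_e,\scX_o$, with the strictness of $A$ entering only through two scalar constants. First I record the tools. Since $A$ is strict and $\{0,1\}^{\leq\gamma}$ is finite, $c:=\min\{A(v)\mid v\in\{0,1\}^{\leq\gamma},\,v\neq\0\}>0$ and $C:=\max\{A(v)\mid v\in\{0,1\}^{\leq\gamma},\,v\neq\1\}<1$; hence at each step a node with at least one active (resp.\ inactive) neighbour becomes active (resp.\ inactive) with probability in $[c,1]$ (resp.\ $[1-C,1]$), independently across nodes. Writing the active set as $X=X_e\sqcup X_o$ with $X_e=X\cap\scX_e$, $X_o=X\cap\scX_o$, the parity structure gives that the active even nodes at time $t+1$ depend only on $X_o$, the active odd nodes only on $X_e$, the two updates being conditionally independent; by (\ref{eq:cert}) and the observation following it the active even set at $t+1$ lies a.s.\ in $[\int(X_o)\cap\scX_e,\clo(X_o)\cap\scX_e]$ and, if $X_o$ is finite, equals any prescribed member of that interval with positive probability. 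A pigeonhole argument ($\le\gamma$ neighbours each) shows that $Y\subseteq\scX_o$ infinite $\Rightarrow$ $\clo(Y)\cap\scX_e$ infinite; combined with Borel--Cantelli and the constants $c,C$ this yields: if $\clo(X_o)\cap\scX_e$ (resp.\ $\scX_e\setminus\int(X_o)$) is infinite then the number of active (resp.\ inactive) even nodes at $t+1$ is infinite a.s. Packaging these facts produces a ``one-step block table'': for $\omega$ in a given block the block at $t+1$ belongs a.s.\ to a short explicit list, and the list of blocks reachable with positive probability is also explicit (the only extra input being whether some interior is empty).

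The second ingredient is a graph-theoretic lemma about the operator $\Psi(Z):=\int\!\big(\int(Z)\cap\scX_o\big)\cap\scX_e$ on $Z\subseteq\scX_e$. Unfolding the parity flip one gets $\Psi(Z)=\{x\in\scX_e\mid x\in Z\text{ and every even node at distance }2\text{ from }x\text{ lies in }Z\}$, so $\Psi$ is a topological interior operator: $\Psi(Z)\subseteq Z$ and $\Psi$ is monotone. Hence on any finite $Z$ the sequence $\Psi^n(Z)$ is nonincreasing, so it stabilises at some $Z^\ast$ with $\Psi(Z^\ast)=Z^\ast$, i.e.\ $Z^\ast$ is a union of connected components of the graph on $\scX_e$ joining $x,x'$ iff they have a common neighbour in $\scX$. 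That graph is connected (lift an $\scX$-path and keep every second vertex, necessarily even) and — under the standing hypothesis $|\scX_e|=|\scX_o|=\infty$ — infinite, so its only finite union of components is $\varnothing$; therefore $\Psi^n(Z)=\varnothing$ for $n$ large. The same holds verbatim for the analogous operator on $\scX_o$, and for the conjugate $Z\mapsto\scX_e\setminus\Psi(\scX_e\setminus Z)$, which governs how an inactive even set evolves under the ``closure-maximal'' update (via $\scX_e\setminus(\clo(Y)\cap\scX_e)=\int(\scX_o\setminus Y)\cap\scX_e$).

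Now I assemble (i) and (ii). For (i): $\{\1\},\{\0\}$ are finite $\phi$-irreducible classes by Proposition~\ref{prop:01}; and $\{\scX_e\}\cup\{\scX_o\}$ is one because, $A$ being strict, $\scX_e$ maps deterministically to $\scX_o$ (as $\int(\scX_e)\cap\scX_o=\scX_o$ and $\clo(\varnothing)=\varnothing$) and symmetrically, so this two-point set is absorbing, a deterministic $2$-cycle, and — finite with each state reaching the other — a $\phi$-irreducible set. For (ii), each listed set is a union of blocks and, using the one-step table, I check it is absorbing by showing its infinite coordinates stay infinite. E.g.\ in $(\infty,\infty,\infty,\infty)$: from any $\omega$, $X_o$ infinite $\Rightarrow$ active even nodes at $t+1$ infinite a.s., and $\scX_o\setminus X_o$ infinite $\Rightarrow$ $\scX_e\setminus\int(X_o)\supseteq\clo(\scX_o\setminus X_o)\cap\scX_e$ infinite $\Rightarrow$ inactive even nodes at $t+1$ infinite a.s.; symmetrically on $\scX_o$. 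In $(\infty,0,\infty,\infty)\cup(\infty,\infty,\infty,0)$ one uses also that $X_e=\varnothing$ forces the active odd set to $\varnothing$, so this block maps into $(\infty,\infty,\infty,0)$ and back; similarly $X_e=\scX_e$ forces all odd nodes active, so $(0,\infty,\infty,\infty)$ maps into $(\infty,\infty,0,\infty)$. Uncountability is immediate (flip infinitely many nodes of a fixed infinite part).

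Part (iii) is the crux, and here is where the operator lemma is used. Every block in the list has a coordinate equal to $\F$ which, reading off the list, is always paired on the same part with an $\infty$ coordinate; hence the corresponding part is infinite. Fix $\omega$ in such a block $\cA$; up to swapping $\scX_e\leftrightarrow\scX_o$ assume $\scX_e$ is infinite and the $\F$ coordinate counts active (resp.\ inactive) even nodes, so the set $Z$ of active (resp.\ inactive) even nodes is finite nonempty. Exploiting the freedom to choose the update on the finite side — and, when the complementary part is infinite, the positive probability that the finitely many ``random'' nodes all move in the desired direction — I build a trajectory of partial configurations along which, every second step, $Z$ is replaced by $\Psi(Z)$ (resp.\ by the conjugate operator). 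By the lemma $\Psi^n(Z)=\varnothing$ for some $n$, at which stage the number of active (resp.\ inactive) even nodes has become $0$ (resp.\ $\infty$); one verifies the block thereby reached differs from every block of $\cA$, so $K^m(\omega,\cA^c)>0$ for some $m$ and $\cA$ is transient (and uncountable as before). Finally one remarks that (i)--(iii) together exhaust the $25$ nonempty blocks. I expect the main obstacle to be exactly this last part: the one-step table does \emph{not} by itself permit leaving $\cA$ (when all relevant interiors are nonempty the chain may stay in $\cA$ for several steps), so one needs the operator lemma together with a careful bookkeeping — over which of the four coordinates is the $\F$ one and which parts of the configuration are currently finite — to guarantee both that every chosen update has positive probability and that the block finally reached lies outside $\cA$ rather than being merely another block of $\cA$.
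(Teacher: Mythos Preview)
Your proposal is correct and follows essentially the same route as the paper. Part (ii) is the paper's ``persistence'' proposition (Appendix~B, Proposition~\ref{persistence}): an infinite active set on one side of the bipartition forces infinitely many active nodes on the other side at the next step, by independence and the uniform bound your $c$ (resp.\ $1-C$) provides. Part (iii) is the paper's Lemma~\ref{disappearing_lemma} and Proposition~\ref{disappearing}: your operator $\Psi$ is exactly $\int^2$ restricted to one side of the bipartition, and your statement $\Psi^n(Z)=\varnothing$ for finite $Z$ is equivalent to the paper's strict inclusion $\int^2(X)\subsetneq X$ iterated. Your proof of this lemma via connected components of the distance-$2$ graph on $\scX_e$ is a clean repackaging of the paper's direct path argument; both rest on connectedness of $(\scX,\sim)$, and both need $|\scX_e|=|\scX_o|=\infty$, which is not a standing hypothesis but follows from connectedness and bounded degree (if $|\scX_e|<\infty$ then $|\scX_o|\le\gamma|\scX_e|<\infty$).

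Your self-identified concern at the end is well placed but not a gap. The crucial observation is the one you already use: the chain factorises into two independent chains $X_e(0)\to X_o(1)\to X_e(2)\to\cdots$ and $X_o(0)\to X_e(1)\to X_o(2)\to\cdots$. Once the finite chain hits $\varnothing$ at some time $m$, the relevant coordinate is $0$; for the first five transient sets a direct check shows no block of $\cA$ has $0$ there. For the last four sets (those containing a $0$ coordinate already), the other chain starts at $\varnothing$ and stays there deterministically, so at time $m$ the full state is $\0$ (or $\1$, $\scX_e$, $\scX_o$ in the dual cases), which lies outside $\cA$. The paper is equally terse about this final verification, so your level of detail matches it.
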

(see proof in Appendix \ref{appB})

Theorem \ref{absorbing-transient}.(\ref{positive}) reformulates and synthetizes
previous results (Propositions~\ref{prop:01},~\ref{pro:2} and \ref{pro:3})
thanks to this new notation, on which we have already commented. Results
  (\ref{null}) and (\ref{transient}), although important and interesting, are
  not conclusive for our study of convergence. Indeed, we have not yet proved
  that these sets are irreducible, therefore we cannot conclude that these sets
  are classes. We postpone their interpretation till
  Theorem~\ref{irreducibility}, where this result will be established.

\section{Strict aggregation functions and bipartite networks:
  $\phi$-irreducible sets}\label{sec:decomp_ergodicity}

The next step is to study in detail $\phi$-irreducibility. Contrarily to the previous
results which hold without additional assumptions on the network, we need
now some particular conditions to establish $\phi$-irreducibility. In this
section, based on necessary conditions on $(\scX,\sim)$, we will prove that the sets highlighted in Theorem \ref{absorbing-transient} are $\phi$-irreducible. Hence, we provide a
decomposition of the set of configurations into three types: sets in case (\ref{positive}) are finite absorbing $\phi$-irreducible sets, those in case (\ref{null}) are infinite absorbing $\phi$-irreducible sets, and the ones in case (\ref{transient}) are transient
$\phi$-irreducible sets.

In the remainder, we may consider on $\Omega$ the following measure $\phi$ such that for every two finite disjoint sets $X$ and $Y$,
\[
\phi((X,Y)^+)=\frac{1}{2^{|X|+|Y|}}.
\]
For every $X^*$ and $Y^*$, we impose the restriction of $\phi$ conditionally on $(X^*,Y^*)^+$ defined on the trace of $\mathcal{T}$ on $(X^*,Y^*)^+$ such that for every two finite disjoint sets $X$ and $Y$ disjoint with $X^* \cup Y^*$,
\[
\phi_{X^*,Y^*}((X^* \cup X, Y^* \cup Y)^+)=\frac{1}{2^{|X|+|Y|}}.
\]

In general, the sets presented in the previous section are not
$\phi$-irreducible for any $(\scX,\sim)$. We
present two counterexamples in order to highlight two different types of
problem.

\begin{example}\label{counter_line}
Consider the network defined by $\scX=\ZZ$ and the 1-neighborhood:
\[
\forall x\in \scX, \Gamma(x)=\{x-1,x+1\}.
\]
We want to show that the set of configurations
$(\infty,\infty,\infty,0)\cup(\infty,0,\infty,\infty)$ is not
$\phi$-irreducible. Consider the configuration $\omega$ in this set defined by
\[
\omega(x)=\begin{cases} 0 \text{ if } x \text{ is odd},\\
0 \text{ if } x\leq 0 \text{ and } x \text{ is even},\\
 1 \text{ if } x>0  \text{ and } x \text{ is even}.
 \end{cases}
\]
and the cylinder $\cA=(X,Y)^+$ with $X=\{0,4\}$ and $Y=\{2\}$. We show that the
cylinder $\cA$ is not reachable from $\omega$. The reason is that at any stage, the current configuration $\omega'$ has the form $\omega'(x)=\omega(x-z)$, for some $z\in \ZZ$, so that the succession 1-0-1 (when looking only at even positions in an even stage or only at odd positions in an odd stage) is not possible. Let us show this for the first transition from $\omega$. The new configuration $\omega'$ is such that $\omega'(2k+1)=1$ for any $k\geq 1$, $\omega'(1)$ can be either 1 or 0, and $\omega'(x)=0$ otherwise. If  $\omega'(1)=0$, then the claim is true with $z=1$, and if $\omega'(1)=1$, then the claim is true with $z=-1$. As $\omega'$ is equal to $\omega$ up to a shift, the same reasoning applies at every stage.

\end{example}

\begin{example}\label{counter_collision}
Consider the plane graph completed by two additional agents linked to $(0,0)$. Formally, let $\scX=\ZZ^2 \cup \{\alpha,\beta\}$. We assume that the neighborhoods are given by
\begin{itemize}
\item $\Gamma(\alpha)=\Gamma(\beta)=\{(0,0)\}$,
\item $\Gamma((0,0))=\{(-1,0),(0,-1),(1,0),(0,1),\alpha,\beta\}$,
\item for every $(n,m)\neq (0,0)$, $\Gamma((n,m))=\{(n-1,m),(n,m-1),(n+1,m),(n,m+1)\}$.
\end{itemize}
Then, this network is bipartite with $\scX_e=\{(n,m) \text{ s.t. } n+m \text{
  is even}\}$ and $\scX_o=\{(n,m) \text{ s.t. } n+m \text{ is odd}\}\cup \{\alpha,\beta\}$. Moreover, since $\alpha$ and $\beta$ have a unique neighbor, we know that for every dynamic of opinion: the status of $\alpha$ (resp. $\beta$) at stage $t$ is the status of agent $(0,0)$ at stage $t-1$. In particular, both statuses are equal and it is impossible to reach the cylinder $(\{\alpha\},\{\beta\})^+$ of configurations where $\alpha$ is active whereas $\beta$ is inactive.
\end{example}

Intuitively, in Example~\ref{counter_line}, the set of configurations is not
$\phi$-irreducible because the graph is
  a line, and there is not enough room to move 0 and 1 without erasing patterns
  that we want to preserve. In Example~\ref{counter_collision}, the graph is
  enough large to move patterns, but as the nodes $\alpha,\beta$ share the same
  neighbor, there is no possibility to store somewhere a partial configuration
  where $\alpha$ and $\beta$ would take different values.\\

Based on these examples, in the next section we derive sufficient
  conditions on graphs in order to obtain $\phi$-irreducibility.

\subsection{Conditions on graphs}

The first definition is related to the problem encountered in Example \ref{counter_line}.

\begin{definition}
A {\it complex star} is a 7-uple $(s_*,s_1,s_2,s_3,s'_1,s'_2,s'_3)\in\scX^7$ such that:
\begin{itemize}
\item[-] $s_1$, $s_2$, $s_3$ are 3 distinct nodes;
\item[-] $\{s_1,s_2,s_3\}\subseteq\Gamma(s_*)$;
\item[-] $s'_1 \in \Gamma(s_1)$, $s'_2 \in \Gamma(s_2)$ and $s'_3 \in \Gamma(s_3)$;
\item[-] $s_* \notin \{s'_1,s'_2,s'_3\}$.
\end{itemize}
Informally, $s_*$ is the center of a star with three branches that have at least
a depth of two. Recall that by assumption $(\scX,\sim)$ is bipartite, and therefore
we know that $\{s'_1,s'_2,s'_3,s_*\}\cap \{s_1,s_2,s_3\}=\varnothing$. Also,
note that we do not assume that $s'_1$, $s'_2$ and $s'_3$ are distinct.
\end{definition}

Roughly speaking, a complex star is a device permitting to store an inactive
status or an active status along time. Its existence prevents the graph to be
similar to a line. In the following proposition, which characterizes
  networks having no complex stars, we identify a sequence with the set of its
values.

\begin{proposition}\label{no_star_graph}
The network $(\scX,\sim)$ does not contain a complex star if and only if it has
one of the two following forms:
\begin{itemize}
\item[-] First case:
\begin{enumerate}
\item there exists a sequence $(x_m)_{m\in \NN}$ in $\scX^{\NN}$ such that all
  nodes are different, $\Gamma(x_0)=\{x_1\} $ and for every $n\geq 1$,
 $\Gamma(x_n)\cap (x_m)_{m\in \NN}=\{x_{n-1},x_{n+1}\}$.
\item for all $x\in \scX$, if $x\notin (x_m)_{m\in \NN}$ then there exists $n\in \NN$ such that $\Gamma(x)=\{x_n\}$.
\end{enumerate}
\item[-] Second case:
\begin{enumerate}
\item there exists a sequence $(x_m)_{m\in \ZZ}$ in $\scX^{\ZZ}$ such that all
  nodes are different and for every $n \in \ZZ$, $
  \Gamma(x_n)\cap (x_m)_{m\in \ZZ}=\{x_{n-1},x_{n+1}\}$.
\item for all $x\in \scX$, if $x\notin (x_m)_{m\in \ZZ}$ then there exists $n\in \ZZ$ such that $\Gamma(x)=\{x_n\}$.
\end{enumerate}
\end{itemize}
\end{proposition}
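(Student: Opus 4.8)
The plan is to prove both directions. The easy direction is ($\Leftarrow$): if the network has one of the two stated forms, then one checks directly that no complex star exists. In the first case, the only candidate for a center $s_*$ with three distinct neighbors $s_1,s_2,s_3$ inside the "backbone" $(x_m)$ would have to be some $x_n$; but condition (1) forces $\Gamma(x_n)\cap(x_m)_m$ to have size at most $2$, and condition (2) says the only extra nodes attached to $x_n$ are leaves (their neighborhood is exactly $\{x_n\}$, hence they have no second-depth branch $s'_j\notin\{s_*\}$). So any node has at most two neighbors from which a branch of depth $\geq 2$ departs, and no complex star fits. The bi-infinite case is identical. This is a short verification and I would present it in a sentence or two.

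The substantive direction is ($\Rightarrow$): assume $(\scX,\sim)$ contains no complex star, and deduce one of the two forms. Here is the structure I would follow. First, I would show that \emph{every node has degree at most $2$, with at most one exception of degree $\leq 1$} — more precisely, that the set of nodes of degree $\geq 2$ forms a path (finite-endpoint or bi-infinite), and every other node is a leaf attached to this path. The key local observation: suppose some node $s_*$ has three distinct neighbors $s_1,s_2,s_3$. Since the graph is connected and infinite and $\scX\neq\Gamma(s_*)\cup\{s_*\}$ (a neighborhood is finite while $\scX$ is infinite), at least one $s_j$, say $s_1$, must have a neighbor $s_1'\neq s_*$; and because we may assume WLOG, after possibly relabeling, that a branch of depth $\geq 2$ exists, and by connectedness iterating this we can in fact arrange $s_1',s_2',s_3'$ with $s_j'\in\Gamma(s_j)\setminus\{s_*\}$ for \emph{all three} $j$ — here one uses that a node of degree $\geq 3$ whose three branches were all "dead ends" (leaves) would disconnect the rest of the graph, contradicting that $\scX$ is infinite and connected. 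This yields a complex star, a contradiction. Hence every node has degree $\leq 2$.

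Next, having established $\deg(x)\le 2$ for all $x$, I would invoke connectedness and local finiteness to conclude the graph is either a bi-infinite path, a one-sided infinite path, or a finite path — and since $\scX$ is infinite it must be one of the first two. Wait — this is not quite the full statement: the graph is \emph{not} just a path, it also carries leaves. So the correct intermediate claim is: let $D=\{x : \deg(x)\geq 2\}$. I claim $D$ induces a connected subgraph in which every vertex has degree (within $D$) at most $2$ and at most one vertex has $D$-degree $1$ — i.e. $D$ is a path, necessarily infinite (one-sided or bi-infinite) because if $D$ were finite the whole connected locally finite graph would be finite. The leaves (degree-$1$ nodes) and any degree-$2$ node not in... hmm, actually every degree-$2$ node \emph{is} in $D$; the nodes outside $D$ are exactly the degree-$\le 1$ nodes, and each such node, if it has a neighbor, is a leaf hanging off some $x_n\in D$. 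Finally I relabel $D=(x_m)$ with indices in $\NN$ (first case, with $x_0$ the unique $D$-degree-$1$ endpoint) or in $\ZZ$ (second case, no endpoint), verify conditions (1)–(2) hold, and note that in the first case if $x_0$ itself had extra leaf-neighbors we could absorb them, but the stated form requires $\Gamma(x_0)=\{x_1\}$ exactly, so I should double-check the bipartiteness/degree bookkeeping makes $x_0$ a genuine degree-$1$ vertex — if $x_0$ has a leaf neighbor it has degree $2$ and belongs in the interior, so the endpoint is forced to be leafless, consistent with the claim.

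The main obstacle is the middle step: turning "$s_*$ has three neighbors" into "a genuine complex star exists", i.e. showing one can always extend all three branches to depth $2$ simultaneously. The naive worry is that all three neighbors of $s_*$ are leaves — but then $\{s_*\}\cup\Gamma(s_*)$ would be a finite connected component (nothing else attaches), contradicting that $\scX$ is infinite and connected. More carefully, if $s_1,s_2$ are leaves but $s_3$ is not, then removing $s_*$ leaves $s_1,s_2$ isolated and the rest connected through $s_3$; we still get $s_3'\in\Gamma(s_3)\setminus\{s_*\}$, but we need \emph{three} depth-$2$ branches. The resolution: one shows that in a connected infinite graph, a degree-$\geq 3$ vertex must have \emph{at least} one branch reaching depth $2$ — and then, crucially, re-examines whether the \emph{definition} of complex star really requires all three $s_j'$ to be non-trivially chosen, or whether $s_j' = $ anything in $\Gamma(s_j)$ suffices (it does: $\Gamma(s_j)\ni s_*$ is allowed unless $s_j'=s_*$ is forbidden, and the last bullet forbids $s_*\in\{s_1',s_2',s_3'\}$, so each branch genuinely needs a node other than $s_*$). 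So a leaf $s_j$ (with $\Gamma(s_j)=\{s_*\}$) \emph{cannot} supply its $s_j'$. Therefore no-complex-star genuinely forces: no vertex has three neighbors each of which is a non-leaf, AND the leaf/branching bookkeeping above must be handled by the connectedness-plus-infinite argument to rule out the remaining configurations (two leaves, one branch; one leaf, two branches). I expect this case analysis — organized by "how many of $s_*$'s neighbors are leaves" — to be where the real work lies, and I would lay it out as a short sequence of contradictions, each time exhibiting a complex star or else forcing the graph to be finite.
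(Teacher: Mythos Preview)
Your approach differs from the paper's, and is in principle cleaner. The paper proceeds top-down: it first extracts a ray or double ray (invoking a K\"onig-type existence result for infinite locally finite connected graphs, with a reference to Diestel), then argues that any node off this ray must be a pendant leaf and that no chords exist; in the one-ended case it finishes with a finiteness argument on the remainder and a re-basing to a farthest node. Your plan of setting $D=\{x:\deg(x)\geq 2\}$ and showing directly that $D$ induces an infinite path avoids the external reference: the key lemma ``no vertex has three non-leaf neighbors'' (which you do eventually state) immediately gives $D$-degree $\leq 2$ for every node of $D$, and connectedness and infiniteness of $D$ are short arguments (a shortest path between two $D$-nodes stays in $D$; if $D$ were finite so would $\scX$ be, by bounded degree).

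There is, however, a genuine confusion you should excise. After stating the key lemma you write that one must still ``rule out the remaining configurations (two leaves, one branch; one leaf, two branches)'' and that ``this case analysis \ldots\ is where the real work lies.'' This is residue from your abandoned first claim that all degrees are $\leq 2$. Those configurations are precisely what is \emph{allowed}: a backbone node $x_n$ with one or two pendant leaves is exactly what condition~(2) describes. Once the key lemma holds, the local analysis is over; nothing further needs to be ruled out. There is also an off-by-one error in your endpoint bookkeeping: you label the $D$-endpoint as $x_0$ and then require $\Gamma(x_0)=\{x_1\}$, but any node in $D$ has total degree $\geq 2$, so this is impossible. The correct picture is that the $D$-endpoint has $D$-degree $1$ and hence (since its total degree is $\geq 2$) carries at least one pendant leaf; \emph{that leaf} is $x_0$, and the $D$-endpoint itself is $x_1$. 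With these two corrections your argument is complete.
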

(see proof in Appendix \ref{appA})

 The existence of a complex star is not sufficient to ensure $\phi$-irreducibility, as it is shown by the network of Example~\ref{counter_collision}. As explained above, the failure is due to the fact that the desired partial configuration cannot be stored on the unique neighbor of $\alpha,\beta$. This motivates the following definition. A similar definition can be stated on $\scX_e$ or on $\scX$.

\begin{definition}\label{saving}
We say that a partial configuration $(X,Y)$ of $\mathcal{X}_o$  can be {\it stored} if there exists a mapping $\theta$ from $\mathcal{X}_o$ to $\mathcal{X}_e$ such that 
\begin{enumerate}
\item for every $x\in X\cup Y$, $\theta(x)\in \Gamma(x)$,
\item for every $x\in X$ and every $y \in Y$, $\theta(x) \neq \theta(y)$.
\end{enumerate}
$\theta$ is called a \emph{storing function}. Observe that $\theta(X\cup
Y)\subseteq \overline{X\cup Y}$.
\end{definition}
It is easy to check that if a partial configuration $(X_1,Y_1)$ of
$\mathcal{X}_o$ can be stored and a partial configuration $(X_2,Y_2)$ of
$\mathcal{X}_e$ can be stored then the partial configuration $(X_1\cup
X_2,Y_1\cup Y_2)$ can also be stored. Moreover, if any partial configuration can
be stored, then any configuration can be stored as well.



We now present classes of graphs where a storing function exists. We focus on
injective storing functions which permit an easy analysis. Consider a partial
configuration $(X,Y)$ in $\scX_o$, storable by $\theta$ and suppose in addition
that $\theta$ is an injection from $X\cup Y$ to $\overline{X\cup Y}\subseteq
\scX_e$. In terms of graph theory, we may rephrase this by saying that the
bipartite graph $(X\cup Y,\overline{X\cup
  Y},\sim)$ admits a {\it matching of $X\cup Y$}.

There are classes of graphs $(\scX,\sim)$ which admit a matching of $X\cup Y$ for any
partial configuration $(X,Y)$ in $\scX_o$ and in $\scX_e$: these are the
$k$-regular graphs and the hierarchies (see Example~\ref{ex:hier}). A graph is
{\it $k$-regular} ($k\geq 2$, integer) if each node has exactly $k$ neighbors.
\begin{proposition}\label{prop:kreg}
If $(\scX,\sim)$ is $k$-regular or is a hierarchy, then any configuration can be stored.
\end{proposition}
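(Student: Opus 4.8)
The plan is to show that for both $k$-regular graphs and hierarchies, any partial configuration $(X,Y)$ in $\scX_o$ (and symmetrically in $\scX_e$) admits an injective storing function, i.e.\ a matching saturating $X\cup Y$ in the bipartite graph $(X\cup Y,\overline{X\cup Y},\sim)$; by the remark following Definition~\ref{saving}, combining the storing functions on $\scX_o$ and $\scX_e$ and then passing to arbitrary (not necessarily finite) configurations then gives the claim. The natural tool is Hall's marriage theorem in its version for (possibly infinite) locally finite bipartite graphs: a matching saturating one side $S$ exists provided the Hall condition $|N(T)|\ge |T|$ holds for every finite $T\subseteq S$. Since every node has at most $\gamma$ neighbors, the graph $(X\cup Y,\overline{X\cup Y},\sim)$ is locally finite, so this version applies.

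First I would treat the $k$-regular case. Fix a finite set $T\subseteq X\cup Y\subseteq\scX_o$ and let $N(T)\subseteq\overline{X\cup Y}\subseteq\scX_e$ denote its neighborhood within the bipartite graph. Counting edges between $T$ and $N(T)$: each node of $T$ contributes exactly $k$ edges (all its neighbors lie in $\scX_e\supseteq\overline{X\cup Y}$, and since $T\subseteq\scX_o$ every neighbor of a node of $T$ is in $\overline{T}\subseteq\overline{X\cup Y}$), so there are exactly $k|T|$ edges; each node of $N(T)$ absorbs at most $k$ of them, since it too has exactly $k$ neighbors total. Hence $k|N(T)|\ge k|T|$, giving $|N(T)|\ge|T|$, so Hall's condition holds and a saturating matching exists. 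The same argument works verbatim with $\scX_o$ replaced by $\scX_e$.

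For the hierarchy case, I would again verify Hall's condition for a finite $T\subseteq X\cup Y$. Here I would use the tree structure: orient each edge from a node to its unique superior, and observe that the map sending each $x\in\scX$ to its superior is an injection on any antichain-free collection — more carefully, I would map each node of $T$ to its parent (superior) and argue this is "almost injective", or alternatively map each node to a suitably chosen child, using that every node has at least one subordinate ($m(x)\ge1$). The cleanest route is: since the hierarchy is a tree and hence has no cycles, every finite subgraph of the bipartite graph $(T,N(T),\sim)$ is a forest, and a bipartite forest always satisfies Hall's condition on the smaller side because a forest on $|T|+|N(T)|$ vertices has at most $|T|+|N(T)|-1$ edges, which is incompatible with a subset $T'\subseteq T$ having $|N(T')|<|T'|$ (a standard deficiency-version counting argument). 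I expect the main obstacle to be precisely this step: making the hierarchy argument fully rigorous, in particular being careful that $T$ may contain nodes at many different levels and that a parent of one node of $T$ may be a child of another, so the "parent map" is not literally injective; the forest/edge-count argument sidesteps this but must be stated so that it covers infinite $\scX$ via the locally finite Hall theorem, applied only to finite test sets $T$. Finally I would invoke the two remarks after Definition~\ref{saving} to upgrade "every finite partial configuration can be stored" to "every configuration can be stored", completing the proof.
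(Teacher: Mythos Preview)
Your $k$-regular argument is essentially identical to the paper's: both count edges to verify Hall's condition and invoke Hall's theorem. Your care about the locally finite version of Hall for possibly infinite $X\cup Y$ is a welcome refinement that the paper leaves implicit.

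The hierarchy case, however, has a genuine gap. Your ``cleanest route'' claims that a bipartite forest always satisfies Hall's condition on one side, arguing via the edge bound $|T'|+|N(T')|-1$. This is false as stated: take a star with center $c$ and leaves $\ell_1,\ell_2,\ell_3$, set $T'=\{\ell_1,\ell_2,\ell_3\}$, $N(T')=\{c\}$; this is a tree, yet $|N(T')|=1<3=|T'|$. The forest edge bound gives only $|T'|\le |T'|+|N(T')|-1$, i.e.\ $|N(T')|\ge 1$, which is useless. To rescue the counting you would need every node of $T'$ to have degree at least $2$ in the bipartite subgraph, and in a hierarchy the root has only its subordinates as neighbors, so this is not automatic.

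The paper instead uses exactly the idea you mention in passing and then abandon: map each $x\in X\cup Y\subseteq\scX_o$ to one of its subordinates (which exists since $m(x)\ge 1$ and lies in $\scX_e$). Distinct nodes of a tree have disjoint sets of subordinates, so this map is injective and gives the required matching directly, without Hall. That two-line construction is the intended proof; your ``map to a child'' remark was the right instinct.
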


(see proof in Appendix \ref{appA})

This leads to the following condition on the network.
\begin{quote}
 {\bf Richness Assumption.} $(\scX,\sim)$ is said to be \emph{rich} if: 
\begin{enumerate}
\item There exists an infinite number of complex stars;
\item Any partial configuration $(X,Y)$ on $\mathcal{X}_o$ and on $\mathcal{X}_e$ can be stored.
\end{enumerate}
\end{quote}
Observe that all the networks introduced in Examples~\ref{ex:Z2} to
\ref{ex:hier} satisfy the richness assumption (except $(\ZZ^d,\sim)$ with $d=1$
which has no complex star), since they are all $k$-regular or a hierarchy, and
they contain infinitely many complex stars.

\begin{remark}
We already described several particular classes of graphs that admit perfect matching. Let us take another point of view. How likely is it for a random bipartite graph to admit a perfect matching? \cite{erd-ren66} showed that for finite random graphs, if there exists a connected component, then it is essentially unique and with probability close to 1 there exists a perfect matching. Hence, in terms of random graphs, connectivity essentially implies that any configuration can be stored.
\end{remark}

\subsection{Main result}

We can now state our result on $\phi$-reachability and irreducibility.
\begin{theorem}\label{irreducibility}
Assume that $(\scX,\sim)$ satisfies the Richness Assumption. Then, each set of Theorem \ref{absorbing-transient} is $\phi$-irreducible.
\end{theorem}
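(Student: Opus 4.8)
The plan is to prove $\phi$-irreducibility set by set, but to organize the work around a single reusable construction: given an arbitrary starting configuration $\omega$ in one of the sets $\cA$ of Theorem~\ref{absorbing-transient} and an arbitrary finite cylinder $\cB=(X^*,Y^*)^+$ of positive $\phi_\cA$-measure contained in $\cA$, I must exhibit a trajectory of partial configurations (Definition~\ref{def:path}) from $\omega$ to $\cB$, i.e.\ a finite sequence of partial configurations $(X_0,Y_0)=(\emptyset,\emptyset)$ (or rather the cylinder $\{\omega\}$), $(X_1,Y_1),\dots,(X_n,Y_n)$ with $(X_n,Y_n)=(X^*,Y^*)$, such that each step has positive kernel probability from every configuration in the previous cylinder. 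Because $A$ is strict, Equation~\eqref{eq:3} and the remark after it tell us that from a finite cylinder we can, in one step and with positive probability, realize \emph{any} partial configuration on a finite window that lies in the allowed interval $[\int,\clo]$ of the current active set; the whole game is therefore to (a) first drive $\omega$ into a configuration whose relevant finite neighborhood is \emph{completely known} (a finite cylinder that pins down a big enough window), then (b) within finitely many further deterministic-feasible steps reshape that window so that the projection on $X^*\cup Y^*$ equals $(X^*,Y^*)$, while using the complex stars and the storing functions to ``park'' the extra active/inactive agents that the block structure of $\cA$ requires (infinitely many $1$'s on even positions, etc.) far away and out of the way.

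First I would reduce to the bipartite setting once and for all (we are in Section~\ref{sec:decomp_absorbing_transient}'s hypotheses) and recall that on even stages the even positions evolve independently of the odd positions, so it suffices to treat a ``one-color'' storage problem and then superpose, using the remark after Definition~\ref{saving} that storability on $\scX_o$ and on $\scX_e$ combine. Second, I would handle the finite $\phi$-irreducible classes of case~(\ref{positive}) trivially: $\{\1\},\{\0\}$ are singletons, and for $(0,\infty,\infty,0)\cup(\infty,0,0,\infty)=\{\scX_e,\scX_o\}$ the trace $\sigma$-field is finite and $K$ just alternates the two points, so $\phi$-irreducibility is immediate from Corollary~\ref{cor:1}. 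Third, for each of the remaining (infinite) sets $\cA$ I would run the two-phase argument above. Phase~(a): since $X^*,Y^*$ are finite, pick a finite set $W\supseteq X^*\cup Y^*$ large enough that $\overline{W}$ still only meets finitely many of the infinitely many complex stars; in one strict-aggregation step from $\omega$ (or from the cylinder $\{\omega\}$, using $\sigma$-additivity of the kernel and that $K(X,\cdot)$ charges every finite compatible cylinder when $X$ is finite — but here $X$ may be infinite, so instead first do one step to a \emph{finite} cylinder by observing that $\omega$ already lies in a finite cylinder $\pi_W(\omega)^+$ of positive measure, and the kernel restricted to the window $W$ depends only on $\pi_{\overline W}(\omega)$, which is determined up to a finite cylinder) I obtain a finite cylinder. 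Phase~(b): now start from a finite cylinder $(X_1,Y_1)$; I build the target $(X^*,Y^*)$ plus the required ``reservoir'' of far-away active and inactive agents by a bounded-length sequence of moves, each move legal because $A$ is strict: I use a complex star whose center is outside $W$ to ratchet a single active (or inactive) status outward along one of its depth-$2$ branches — this is exactly the device the paper advertises after the definition of complex star — and I use the storing function $\theta$ from the Richness Assumption to choose, for the finitely many agents of $X^*\cup Y^*$, distinct witnessing neighbors so that the partial configuration $(X^*,Y^*)$ is genuinely realizable on a cylinder (no ``collision'' as in Example~\ref{counter_collision}). Iterating finitely often reaches a configuration in $\cB$, so $\sum_n K^n(\omega,\cB)>0$.

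The main obstacle I expect is Phase~(b) for the mixed infinite blocks such as $(\infty,\F,\infty,\infty)$-type sets and the transient ones with an $\F$ coordinate: there one must reach a cylinder $\cB\subseteq\cA$ whose positive $\phi_\cA$-measure forces $\cB$ to be compatible with \emph{exactly} the prescribed finite/infinite pattern, so the trajectory cannot accidentally flood a color to infinity or drain it to a finite set. Controlling this requires that after the finitely many reshaping moves, the ``tail'' of the configuration outside the manipulated window still certifies the correct $(E_a,E_i,O_a,O_i)$ code; this is where one genuinely needs \emph{both} that there are \emph{infinitely many} complex stars (so one can always find one untouched, arbitrarily far out, to absorb a unit of status without disturbing finite counts) and that storing functions exist globally (so an arbitrary finite pattern is feasible). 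Making precise that the reshaping is achievable in \emph{bounded} length — essentially a connectivity/degree-bound argument: each complex star moves a status ``one hop'' per step and there are only finitely many agents to place — and bookkeeping the interplay between even-stage/odd-stage parity and the even/odd bipartition is the technical heart, and I would expect the bulk of Appendix~\ref{appC} to be exactly this case analysis over the handful of structurally distinct blocks in Theorem~\ref{absorbing-transient}.
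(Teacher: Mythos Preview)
Your overall plan---reduce $\phi$-irreducibility to reaching an arbitrary finite cylinder, then use strictness of $A$ together with the Richness Assumption to build an explicit trajectory of partial configurations---matches the paper's strategy. However, two central points are off.

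First, your Phase~(a) is confused and unnecessary. You write that you must ``drive $\omega$ into a configuration whose relevant finite neighborhood is completely known'', but $\omega$ is a single fixed configuration; every finite window of $\omega$ is already determined. The paper's actual first phase is different: from $\omega$ it constructs a trajectory to the set $\cI$ of configurations in which one branch $s_i$ of a fixed complex star is active and another branch $s_j$ is inactive (Lemma~\ref{centrage}). The purpose is not to ``learn'' $\omega$ but to \emph{bring both a $0$ and a $1$ to a common, controllable location}.

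Second, and more importantly, you have not identified the actual role of the complex star. You describe it as a device to ``ratchet a single active (or inactive) status outward along one of its depth-$2$ branches'', but this misses the point. The star has \emph{three} branches of depth $\geq 2$ precisely so that it can simultaneously hold an active status on one branch and an inactive status on another, while the third branch and the center $s_*$ remain free for manipulation. The paper proves (Lemma~\ref{startostar}) that from any configuration in $\cI$ one can reach any other configuration in $\cI$ in an even number of steps---this is the ``reservoir'' property. Then, iterating by decreasing distance from $s_*$, one propagates the required status from the star to each target node of $X^*\cup Y^*$ while re-storing both statuses in the star after each placement (Lemmas~\ref{placement_ordonne} and \ref{placement_ordonne1}), using the global storing function to freeze the already-placed nodes. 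Your proposal has no analogue of this mechanism; without it, once you propagate your single known active agent to its destination you may have lost access to any inactive agent (or vice versa), and there is no way to place the next one.

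Two smaller corrections. The reason one needs \emph{infinitely many} complex stars is not to ``absorb a unit of status without disturbing finite counts''; it is simply to guarantee that \emph{one} star exists disjoint from $X^*\cup\overline{X^*}\cup Y^*\cup\overline{Y^*}$, so that the star's internal manipulations never interfere with the storing of the target pattern. And your worry that the trajectory must preserve the $(E_a,E_i,O_a,O_i)$ code of $\cA$ at every intermediate step is misplaced: $\phi$-irreducibility of $\cA$ only requires $K^n(\omega,\cB)>0$ for $\cB\subseteq\cA$, with no constraint on intermediate configurations. The ``technical heart'' you anticipate for the transient sets with an $\F$ coordinate is therefore a red herring; the same two propositions (Propositions~\ref{cylinder_odd} and \ref{cylinder_even}) cover all the non-singleton sets of Theorem~\ref{absorbing-transient} uniformly.
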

(see proof in Appendix \ref{appC})

Combining Theorems~\ref{absorbing-transient} and \ref{irreducibility}, we
  immediately obtain our final result on convergence.
  \begin{corollary}\label{cor:blk}
    Assume that $(\scX,\sim)$ satisfies the Richness Assumption. The following
    sets are
\begin{enumerate}[(i)]
\item  Finite $\phi$-irreducible classes:
\begin{itemize}
\item[\bf (1)]  $(0,\infty,0,\infty)$ (this is $\scX$);
\item[(2)] $(\infty,0,\infty,0)$ (this is $\varnothing$);
\item[\bf (3)] $(0,\infty,\infty, 0) \cup (\infty, 0, 0,\infty)$ (this is the
 2-cycle on the bipartition);
\end{itemize}
\item Infinite (uncountable) $\phi$-irreducible classes:
\begin{itemize}
\item[\bf (4)] $(\infty,\infty,\infty,\infty)$;
\item[\bf (5)] $(\infty, 0 ,\infty,\infty)\cup (\infty,\infty,\infty,0)$;
\item[(6)] $(0, \infty ,\infty,\infty)\cup (\infty,\infty,0,\infty)$;
\end{itemize}
\item Transient and $\phi$-irreducible sets: 
\begin{multicols}{2}
\begin{itemize}
\item[\bf (a)] $(\infty,\F,\infty,\F)$;
\item[(b)] $(\F,\infty,\F,\infty)$;
\item[\bf (c)] $(\F,\infty,\infty,\F) \cup (\infty,\F,\F,\infty)$; 
\item[\bf (d)] $(\infty, \F ,\infty,\infty)\cup (\infty,\infty,\infty,\F)$;
\item[(e)] $(\F, \infty ,\infty,\infty)\cup (\infty,\infty,\F,\infty)$;
\item[\bf (f)] $(\infty,\F,\infty,0) \cup (\infty,0,\infty,\F)$;
\item[(g)] $(\F,\infty,0,\infty) \cup(0,\infty,\F,\infty)$;
\item[\bf (h)] $(0,\infty,\infty,\F) \cup (\infty,\F,0,\infty)$; 
\item[(i)] $(\F,\infty,\infty,0) \cup (\infty,0,\F,\infty)$.
\end{itemize}
\end{multicols}
\end{enumerate}
  \end{corollary}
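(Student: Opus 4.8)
The plan is to deduce Corollary~\ref{cor:blk} by combining the two main theorems, treating the three families (i), (ii), (iii) in parallel with exactly the ingredient needed for each. First I would invoke Theorem~\ref{absorbing-transient}: it already classifies the twelve listed families (plus the three in item~(i)) as, respectively, finite $\phi$-irreducible classes (absorbing and $\phi$-irreducible by Propositions~\ref{prop:01}, \ref{pro:2}, \ref{pro:3} and Corollary~\ref{cor:1}), infinite uncountable absorbing sets, and infinite transient sets. Then, under the Richness Assumption, Theorem~\ref{irreducibility} upgrades \emph{every} set appearing in Theorem~\ref{absorbing-transient} to being $\phi$-irreducible. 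Putting these together: in case~(i) nothing new is needed (these are already complete from Theorem~\ref{absorbing-transient}, but I would restate them for bookkeeping and identify $(0,\infty,0,\infty)=\{\scX\}$, $(\infty,0,\infty,0)=\{\varnothing\}$, and $(0,\infty,\infty,0)\cup(\infty,0,0,\infty)=\{\scX_e,\scX_o\}$ as the $2$-cycle on the bipartition, using the partition dictionary established before the theorem); in case~(ii) we combine ``absorbing'' from Theorem~\ref{absorbing-transient}.(\ref{null}) with ``$\phi$-irreducible'' from Theorem~\ref{irreducibility} to conclude these are infinite uncountable $\phi$-irreducible classes (by the very definition of a $\phi$-irreducible class as an absorbing $\phi$-irreducible set); in case~(iii) we combine ``transient'' from Theorem~\ref{absorbing-transient}.(\ref{transient}) with ``$\phi$-irreducible'' from Theorem~\ref{irreducibility}.

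The only genuine point requiring care is that ``$\phi$-irreducible'' in Theorem~\ref{irreducibility} is asserted with respect to the specific measure $\phi$ (and its conditional versions $\phi_{X^*,Y^*}$) introduced just before that theorem, defined by $\phi((X,Y)^+)=2^{-|X|-|Y|}$ on finite cylinders; I would note that this $\phi$ is non-trivial and $\sigma$-finite, so Definition of a $\phi$-irreducible set applies verbatim. Hence in case~(ii) I would check that ``absorbing $+$ $\phi$-irreducible set'' is literally the definition of ``$\phi$-irreducible class'', and in case~(iii) that transience is a property that does not interact with the choice of $\phi$, so the conjunction ``transient and $\phi$-irreducible'' is well-posed. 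There is no real obstacle here beyond this matching of definitions; the substantive work has been externalized to Appendices~\ref{appB} and~\ref{appC}.

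Concretely, the proof would read:

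\begin{proof}
All statements follow by combining Theorems~\ref{absorbing-transient} and~\ref{irreducibility}, using that the Richness Assumption is in force.

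\textbf{(i)} By Theorem~\ref{absorbing-transient}.(\ref{positive}), the three listed sets are finite $\phi$-irreducible classes. The identifications with $\scX$, $\varnothing$ and the $2$-cycle $\{\scX_e,\scX_o\}$ on the bipartition follow from the partition dictionary (namely $(0,\infty,0,\infty)=\{\1\}=\{\scX\}$, $(\infty,0,\infty,0)=\{\0\}=\{\varnothing\}$, $(0,\infty,\infty,0)=\{\scX_e\}$ and $(\infty,0,0,\infty)=\{\scX_o\}$) together with the remark following Corollary~\ref{cor:1} that $\{\scX_e,\scX_o\}$ is a periodic absorbing class of period~$2$.

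\textbf{(ii)} By Theorem~\ref{absorbing-transient}.(\ref{null}), each of the three sets $(\infty,\infty,\infty,\infty)$, $(\infty,0,\infty,\infty)\cup(\infty,\infty,\infty,0)$ and $(0,\infty,\infty,\infty)\cup(\infty,\infty,0,\infty)$ is an (infinite, uncountable) absorbing set. By Theorem~\ref{irreducibility}, each of them is also a $\phi$-irreducible set. An absorbing $\phi$-irreducible set is, by definition, a $\phi$-irreducible class; hence these three sets are infinite uncountable $\phi$-irreducible classes.

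\textbf{(iii)} By Theorem~\ref{absorbing-transient}.(\ref{transient}), each of the nine listed sets is transient, and by Theorem~\ref{irreducibility} each is $\phi$-irreducible. Hence they are transient $\phi$-irreducible sets.
\end{proof}
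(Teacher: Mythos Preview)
Your proposal is correct and matches the paper's own argument essentially verbatim: the paper simply states that Corollary~\ref{cor:blk} follows immediately by combining Theorems~\ref{absorbing-transient} and~\ref{irreducibility}, and your write-up just makes that combination explicit for each of the three families. The additional remarks you include about the specific measure $\phi$ and the identification of the singletons and the 2-cycle are accurate and already present in the text surrounding the two theorems.
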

\noindent  We make some comments on these results.
  \begin{itemize}
  \item We have found 3 finite classes, 3 infinite (uncountable) classes and 9
    transient sets. We observe that these numbers can be reduced taking into
    account symmetries. Indeed, the collection of classes and the collection of
    transient sets are invariant by interchanging ``active'' and ``inactive''
    (coordinates 1 and 2, and 3 and 4). This yields respectively 2 types of
    finite classes ((1) and (3)), 2 types of infinite classes ((4) and (5)), and
    5 types of transient sets ((a), (c), (d), (f) and (h)) (we have put these
    labels in bold). It is therefore
    enough to restrict our analysis to these classes and sets.

    We also observe that every class and every transient set is invariant
    to interchanging ``odd'' and ``even'' (that is, coordinates 1 and 3, and
    2 and 4).
  \item A striking fact is that every set of configurations with a finite number
    of active or inactive nodes is transient, which means that any such
    configuration is doomed to disappear, sooner or later. This means that there
    cannot be polarization of any type in the long run: a finite set of agents
    (this can be called {\it group}) cannot remain active or inactive forever,
    nor any cycle or periodic phenomenon involving several groups, nor any kind
    of weak version of groups (e.g., finite sets with some fluctuation) can
    exist.
  \item Recall that our fundamental question was how the diffusion
    evolves from a finite set of active agents. We have partly answered above:
    no such configuration can last very long as finite sets of active/inactive
    agents fade out. Any initial configuration must evolve and finish in one of
    the classes we have enumerated. But in which classes may finish a given
    initial configuration with a finite set of active or inactive agents? We do
    not have a general result specifying this, but it is not difficult to find,
    at least with $\ZZ^2$ and the 1-neighborhood, finite sets of active agents
    which can evolve in any desired class from (1) to (6). Hence, we conjecture that in general,
    any class can be reached. However, saying with which
    probability appears to be extremely complex.
  \item Among the classes we have 2 singletons ((1) and (2)), one cycle of
    length 2 ((3)), two periodic classes of length 2 ((5) and (6)), and one
    aperiodic class ((4)). Classes (5) and (6) are periodic, because taking for
    instance (5) and starting from a configuration in $(\infty,0,\infty,\infty)$,
    we have at next stage a configuration in $(\infty,\infty,\infty,0)$, and
    conversely. This is because if there are no active agents on even nodes, all
    odd nodes have all their neighbors inactive, so all the odd nodes must
    become inactive at next stage. This is similar to a 2-cycle, up to the
    difference that the status of agents of one block (odd or even,
    alternatively) is not determined.
   \item Converging to class (4) means that the diffusion is erratic but
     homogeneous, in the sense that everywhere there are active and inactive
     agents, on odd and even positions. Any such configuration could happen,
     which means that there is no region in the network with specific
     properties.
  \end{itemize}

\section{Strict aggregation functions and nonbipartite networks}\label{sec:noblinker}

In this section, we focus on strict aggregation functions but we assume that the
graph is not bipartite. We saw in Sections
\ref{sec:decomp_absorbing_transient} and \ref{sec:decomp_ergodicity} that we
could separate what was happening at odd positions (respectively even positions)
at odd times from what was happening at odd positions (respectively even
positions) at even times. This is not possible anymore, so that the results of
Section \ref{sec:decomp_absorbing_transient} cannot be applied. On the other
hand, we will see that any graph admits at least one bipartite
  subgraph. Since a path in the subgraph is still a path in the graph, the
procedures of Section \ref{sec:decomp_ergodicity} can still be used.

\begin{example}
Consider $\scX=\ZZ^2$ and the $\sqrt{2}$-neighborhood (for the
Euclidean distance) defined by, for any $(n,m) \in \scX$,
\[
\Gamma((n,m))=\{(p,q), \ p\in\{n-1,n,n+1\} \text{ and } q\in \{m-1,m,m+1\}\}.
\]
(see Figure~\ref{fig:subgr} (left)). This graph is not bipartite. Nevertheless, by focusing on the $1$-neighborhood, we can still define
\[
\scX_e=\{(n,m) \text{ s.t. } n+m \text{ is even}\}
\]
and
\[
\scX_o=\{(n,m) \text{ s.t. } n+m \text{ is odd}\}.
\]
This amounts to considering the subgraph where all diagonal links (i.e., links
  internal to $\scX_e$ and $\scX_o$) have
  been deleted. This subgraph is bipartite (see Figure~\ref{fig:subgr}
  (middle)) and we can apply the procedures of Section
  \ref{sec:decomp_ergodicity}. 
\end{example}

%
%

More generally, any network admits a bipartite subnetwork. Indeed, given $(\scX,\sim)$ and $c\in \scX$, one can define the distance from any node $x$ to $c$:
\[
\forall x\in \scX, d(x,c)=\inf \{n \in \NN, \exists (x_l)_{1 \leq l\leq n} \text{ path between } x \text{ and } c\}.
\]
This yields naturally two sets 
\[
\scX^c_o=\{x\in \scX, \text{ s.t. } d(x,c) \text{ is odd}\}
\]
and 
\[
\scX^c_e=\{x\in \scX, \text{ s.t. } d(x,c) \text{ is even}\}.
\]
Let us define from $\sim$ the new relation $\approx^c$ on $\scX$ as follows: internal edges
in $\scX^c_o$ and in $\scX^c_e$ for $\sim$ are deleted, and only edges between a
node in $\scX^c_o$ and $\scX^c_e$ for $\sim$ remain. The induced subnetwork
  $(\scX,\approx^c)$ has a bipartition exactly given by $\scX^c_o$ and
$\scX^c_e$. Moreover, first this subgraph is connected, because by construction
  every node is connected to $c$, and second  every node has a finite number of
  neighbors. Hence, $\approx^c$ is a neighborhood relation.
This yields the following proposition.

\begin{proposition}
Given a network $(\scX,\sim)$, there exists a subnetwork $(\scX,\approx)$ 
  which is bipartite.
\end{proposition}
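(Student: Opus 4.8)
The plan is to make rigorous the construction sketched in the paragraph preceding the statement: fix a basepoint and keep only the $\sim$-edges that cross between consecutive BFS layers around it.

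First I would fix an arbitrary $c\in\scX$. By connectedness of $(\scX,\sim)$, for every $x\in\scX$ the set over which $d(x,c)$ is defined is a nonempty subset of $\NN$, so $d(x,c)$ is a well-defined natural number (with $d(c,c)=0$). This partitions $\scX$ into the two classes $\scX^c_e$ and $\scX^c_o$ according to the parity of $d(x,c)$, with $c\in\scX^c_e$. Define $\approx^c$ by declaring $x\approx^c y$ iff $x\sim y$ and $x,y$ lie in different parity classes; equivalently, one deletes every $\sim$-edge whose two endpoints have distances to $c$ of the same parity. I would then set $\approx\,:=\,\approx^c$.

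Next I would verify that $\approx^c$ is a neighborhood relation in the sense of Section~\ref{subsec:societies}. Irreflexivity and symmetry are inherited directly from $\sim$. Boundedness holds as well, since $\Gamma^{\approx^c}(x)\subseteq\Gamma^\sim(x)$ for every $x$, so every node still has at most $\gamma$ neighbors. The only nontrivial axiom is connectedness, which I expect to be the crux. The key elementary facts are: (i) $|d(x,c)-d(y,c)|\leq 1$ whenever $x\sim y$ (extend a shortest $y$-to-$c$ path by the edge $xy$, and symmetrically); and (ii) if $x_0=x,x_1,\dots,x_k=c$ is a shortest $\sim$-path from $x$ to $c$, then $d(x_i,c)=k-i$ for every $i$ (the tail is a path of length $k-i$, and any strictly shorter $x_i$-to-$c$ route would shorten the whole path). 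Fact (ii) forces consecutive vertices on a geodesic to $c$ to have distances of opposite parity, hence $x_i\approx^c x_{i+1}$ for each $i$; so every vertex is joined to $c$ by an $\approx^c$-path. Concatenating two such paths, reversing one and using symmetry, shows that any two vertices of $\scX$ are $\approx^c$-connected. Therefore $\approx^c$ is a neighborhood relation and $(\scX,\approx^c)$ is a network.

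Finally, bipartiteness is immediate from the definition of $\approx^c$: every $\approx^c$-edge has one endpoint in $\scX^c_e$ and one in $\scX^c_o$, so $\{\scX^c_e,\scX^c_o\}$ is a bipartition of $(\scX,\approx^c)$, which completes the proof. The main obstacle is thus solely the connectedness verification; once facts (i) and (ii) about graph distance to $c$ are recorded, the remaining arguments are routine bookkeeping, and the bipartiteness itself is built into the construction.
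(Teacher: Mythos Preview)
Your proof is correct and follows essentially the same construction as the paper: fix a basepoint $c$, partition $\scX$ by parity of the $\sim$-distance to $c$, and keep only the $\sim$-edges crossing between the two classes. The paper disposes of connectedness in a single clause (``by construction every node is connected to $c$''); your facts (i) and (ii) spell out precisely why a $\sim$-geodesic to $c$ is already an $\approx^c$-path, which is the content behind that clause.
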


\begin{remark}
  \begin{enumerate}
  \item Suppose $(\scX,\sim)$ is not bipartite and consider a 
    bipartite subgraph
    $(\scX,\approx)$. Does there always exist a node $c$ such
    that $\approx$ is equal to $\approx^c$, the relation induced by the distance to
    $c$? In other words, is the above mechanism able to generate any
      bipartite subgraph? The answer is no, as shown by the above example with
    $\scX=\ZZ^2$. Figure~\ref{fig:subgr} (right) shows the subgraph induced by the
    distance to the central node $(0,0)$ (or to any arbitrary node), which fairly
    differs from the one on the middle figure.
  \item 
    Notice that there may exist several bipartite subgraphs, as shown
    in Figure ~\ref{fig:subgr}. The choice of the subgraph is important, since it is
    possible for some partial configuration to be stored for one subgraph and
    not for another one, as shown by Example~\ref{exa:choice_blinker}.
  \end{enumerate}
\end{remark}

  \begin{figure}[htb]
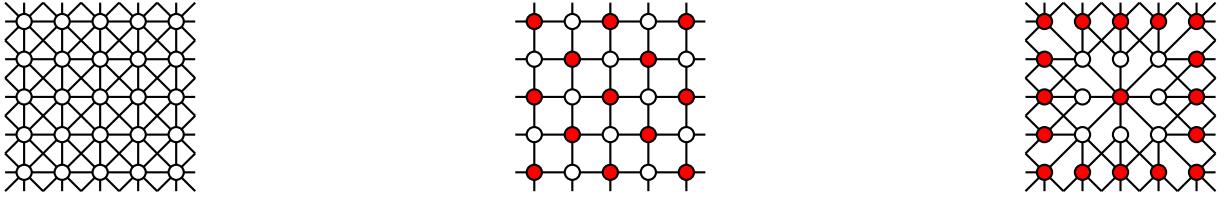

    \begin{center}
\psset{unit=0.5cm}
\pspicture(-0.5,-1.5)(4.5,4.5)
\psline(-0.5,0)(4.5,0)
\psline(-0.5,1)(4.5,1)
\psline(-0.5,2)(4.5,2)
\psline(-0.5,3)(4.5,3)
\psline(-0.5,4)(4.5,4)
\psline(0,-0.5)(0,4.5)
\psline(1,-0.5)(1,4.5)
\psline(2,-0.5)(2,4.5)
\psline(3,-0.5)(3,4.5)
\psline(4,-0.5)(4,4.5)
\psline(-0.5,-0.5)(4.5,4.5)
\psline(0.5,-0.5)(4.5,3.5)
\psline(1.5,-0.5)(4.5,2.5)
\psline(2.5,-0.5)(4.5,1.5)
\psline(3.5,-0.5)(4.5,0.5)
\psline(-0.5,0.5)(3.5,4.5)
\psline(-0.5,1.5)(2.5,4.5)
\psline(-0.5,2.5)(1.5,4.5)
\psline(-0.5,3.5)(0.5,4.5)
\psline(-0.5,4.5)(4.5,-0.5)
\psline(0.5,4.5)(4.5,0.5)
\psline(1.5,4.5)(4.5,1.5)
\psline(2.5,4.5)(4.5,2.5)
\psline(3.5,4.5)(4.5,3.5)
\psline(-0.5,3.5)(3.5,-0.5)
\psline(-0.5,2.5)(2.5,-0.5)
\psline(-0.5,1.5)(1.5,-0.5)
\psline(-0.5,0.5)(0.5,-0.5)
\pscircle[fillstyle=solid](0,0){0.2}
\pscircle[fillstyle=solid](0,1){0.2}
\pscircle[fillstyle=solid](0,2){0.2}
\pscircle[fillstyle=solid](0,3){0.2}
\pscircle[fillstyle=solid](0,4){0.2}
\pscircle[fillstyle=solid](1,0){0.2}
\pscircle[fillstyle=solid](1,1){0.2}
\pscircle[fillstyle=solid](1,2){0.2}
\pscircle[fillstyle=solid](1,3){0.2}
\pscircle[fillstyle=solid](1,4){0.2}
\pscircle[fillstyle=solid](2,0){0.2}
\pscircle[fillstyle=solid](2,1){0.2}
\pscircle[fillstyle=solid](2,2){0.2}
\pscircle[fillstyle=solid](2,3){0.2}
\pscircle[fillstyle=solid](2,4){0.2}
\pscircle[fillstyle=solid](3,0){0.2}
\pscircle[fillstyle=solid](3,1){0.2}
\pscircle[fillstyle=solid](3,2){0.2}
\pscircle[fillstyle=solid](3,3){0.2}
\pscircle[fillstyle=solid](3,4){0.2}
\pscircle[fillstyle=solid](4,0){0.2}
\pscircle[fillstyle=solid](4,1){0.2}
\pscircle[fillstyle=solid](4,2){0.2}
\pscircle[fillstyle=solid](4,3){0.2}
\pscircle[fillstyle=solid](4,4){0.2}
\endpspicture
\hfill
\pspicture(-0.5,-1.5)(4.5,4.5)
\psline(-0.5,0)(4.5,0)
\psline(-0.5,1)(4.5,1)
\psline(-0.5,2)(4.5,2)
\psline(-0.5,3)(4.5,3)
\psline(-0.5,4)(4.5,4)
\psline(0,-0.5)(0,4.5)
\psline(1,-0.5)(1,4.5)
\psline(2,-0.5)(2,4.5)
\psline(3,-0.5)(3,4.5)
\psline(4,-0.5)(4,4.5)
\pscircle[fillstyle=solid,fillcolor=red](0,0){0.2}
\pscircle[fillstyle=solid](0,1){0.2}
\pscircle[fillstyle=solid,fillcolor=red](0,2){0.2}
\pscircle[fillstyle=solid](0,3){0.2}
\pscircle[fillstyle=solid,fillcolor=red](0,4){0.2}
\pscircle[fillstyle=solid](1,0){0.2}
\pscircle[fillstyle=solid,fillcolor=red](1,1){0.2}
\pscircle[fillstyle=solid](1,2){0.2}
\pscircle[fillstyle=solid,fillcolor=red](1,3){0.2}
\pscircle[fillstyle=solid](1,4){0.2}
\pscircle[fillstyle=solid,fillcolor=red](2,0){0.2}
\pscircle[fillstyle=solid](2,1){0.2}
\pscircle[fillstyle=solid,fillcolor=red](2,2){0.2}
\pscircle[fillstyle=solid](2,3){0.2}
\pscircle[fillstyle=solid,fillcolor=red](2,4){0.2}
\pscircle[fillstyle=solid](3,0){0.2}
\pscircle[fillstyle=solid,fillcolor=red](3,1){0.2}
\pscircle[fillstyle=solid](3,2){0.2}
\pscircle[fillstyle=solid,fillcolor=red](3,3){0.2}
\pscircle[fillstyle=solid](3,4){0.2}
\pscircle[fillstyle=solid,fillcolor=red](4,0){0.2}
\pscircle[fillstyle=solid](4,1){0.2}
\pscircle[fillstyle=solid,fillcolor=red](4,2){0.2}
\pscircle[fillstyle=solid](4,3){0.2}
\pscircle[fillstyle=solid,fillcolor=red](4,4){0.2}
\endpspicture
\hfill
\pspicture(-0.5,-1.5)(4.5,4.5)
\psline(-0.5,0)(0,0)\psline(4,0)(4.5,0)
\psline(-0.5,4)(0,4)\psline(4,4)(4.5,4)
\psline(0,-0.5)(0,0)\psline(0,4)(0,4.5)
\psline(4,-0.5)(4,0)\psline(4,4)(4,4.5)
\psline(-0.5,1)(1,1)\psline(3,1)(4.5,1)
\psline(-0.5,2)(4.5,2)
\psline(-0.5,3)(1,3)\psline(3,3)(4.5,3)
\psline(1,-0.5)(1,1)\psline(1,3)(1,4.5)
\psline(2,-0.5)(2,4.5)
\psline(3,-0.5)(3,1)\psline(3,3)(3,4.5)
\psline(-0.5,-0.5)(4.5,4.5)
\psline(0.5,-0.5)(2,1)\psline(3,2)(4.5,3.5)
\psline(1.5,-0.5)(4.5,2.5)
\psline(2.5,-0.5)(3,0)\psline(4,1)(4.5,1.5)
\psline(3.5,-0.5)(4.5,0.5)
\psline(-0.5,0.5)(1,2)\psline(2,3)(3.5,4.5)
\psline(-0.5,1.5)(2.5,4.5)
\psline(-0.5,2.5)(0,3)\psline(1,4)(1.5,4.5)
\psline(-0.5,3.5)(0.5,4.5)
\psline(-0.5,4.5)(4.5,-0.5)
\psline(0.5,4.5)(2,3)\psline(3,2)(4.5,0.5)
\psline(1.5,4.5)(4.5,1.5)
\psline(2.5,4.5)(3,4)\psline(4,3)(4.5,2.5)
\psline(3.5,4.5)(4.5,3.5)
\psline(-0.5,3.5)(1,2)\psline(2,1)(3.5,-0.5)
\psline(-0.5,2.5)(2.5,-0.5)
\psline(-0.5,1.5)(0,1)\psline(1,0)(1.5,-0.5)
\psline(-0.5,0.5)(0.5,-0.5)
\pscircle[fillstyle=solid,fillcolor=red](0,0){0.2}
\pscircle[fillstyle=solid,fillcolor=red](0,1){0.2}
\pscircle[fillstyle=solid,fillcolor=red](0,2){0.2}
\pscircle[fillstyle=solid,fillcolor=red](0,3){0.2}
\pscircle[fillstyle=solid,fillcolor=red](0,4){0.2}
\pscircle[fillstyle=solid,fillcolor=red](1,0){0.2}
\pscircle[fillstyle=solid](1,1){0.2}
\pscircle[fillstyle=solid](1,2){0.2}
\pscircle[fillstyle=solid](1,3){0.2}
\pscircle[fillstyle=solid,fillcolor=red](1,4){0.2}
\pscircle[fillstyle=solid,fillcolor=red](2,0){0.2}
\pscircle[fillstyle=solid](2,1){0.2}
\pscircle[fillstyle=solid,fillcolor=red](2,2){0.2}
\pscircle[fillstyle=solid](2,3){0.2}
\pscircle[fillstyle=solid,fillcolor=red](2,4){0.2}
\pscircle[fillstyle=solid,fillcolor=red](3,0){0.2}
\pscircle[fillstyle=solid](3,1){0.2}
\pscircle[fillstyle=solid](3,2){0.2}
\pscircle[fillstyle=solid](3,3){0.2}
\pscircle[fillstyle=solid,fillcolor=red](3,4){0.2}
\pscircle[fillstyle=solid,fillcolor=red](4,0){0.2}
\pscircle[fillstyle=solid,fillcolor=red](4,1){0.2}
\pscircle[fillstyle=solid,fillcolor=red](4,2){0.2}
\pscircle[fillstyle=solid,fillcolor=red](4,3){0.2}
\pscircle[fillstyle=solid,fillcolor=red](4,4){0.2}
\endpspicture
    \end{center}
    \caption{The $\ZZ^2$ network with the $\sqrt{2}$-neighborhood (left); The
      $\ZZ^2$ subnetwork with the 1-neighborhood (middle); The $(\ZZ^2,\approx^c)$ subnetwork
      where $c=(0,0)$ (right). White and red nodes indicate the bipartition.}
    \label{fig:subgr}
  \end{figure}

\begin{example}\label{exa:choice_blinker}
We slightly modify the network $(\scX,\sim)$ of Example~\ref{counter_collision} as
  follows:
\begin{itemize}
\item $\Gamma^\sim(\alpha)=\{(0,0),\beta\}$,
\item $\Gamma^\sim(\beta)=\{(0,0),\alpha\}$,
\end{itemize}
i.e., $\alpha$ and $\beta$ are linked together. This network is not bipartite since we have a cycle of length $3$. If we consider the graph generated by
the distance to $(0,0)$ as subgraph, we obtain exactly the network of Example
\ref{counter_collision} where some configurations cannot be stored. On the
contrary, let us consider the subgraph $(\scX,\approx)$ where the link between $(0,0)$
and $\beta$ has been deleted, hence:
\begin{itemize}
\item $\Gamma^\approx(\alpha)=\{(0,0),\beta\}$,
\item $\Gamma^\approx(\beta)=\{\alpha\}$,
\item $\Gamma^\approx((0,0))=\{(-1,0),(0,-1),(1,0),(0,1),\alpha\}$,
\item for every $(n,m)\neq (0,0)$, $\Gamma^\approx((n,m))=\Gamma^\sim((n,m))$.
\end{itemize}
Then, this new graph is bipartite and every configuration can be stored by storing separately what happens on $\{\alpha,\beta\}$ and $\ZZ^2$.
\end{example}

%

%
Let $(\scX,\approx)$ be a bipartite subnetwork of $(\scX,\sim)$ and
let $(\scX_e,\scX_o)$ be the bipartition. We consider the
partition of the set of configurations where we count how many $1$ and $0$ exist
in $\omega$, without separating odd and even positions. Formally, a block of the
partition is a $2$-uple $(a,b)$ such that $a,b \in \{0,\F,\infty\}$ and there are
$a$ positions with value $0$ and $b$ positions with value $1$. Since there is an
infinite number of positions, there are only $5$ different blocks of the
partition which are nonempty.

\begin{theorem}\label{dec}
Assume that there exists $(\scX,\approx)$ a bipartite subnetwork of $(\scX,\sim)$ that satisfies the Richness Assumption. We have the following decomposition:
\begin{itemize}
\item[-] $(\infty,0)$ and $(0,\infty)$ are finite $\phi$-irreducible classes,
\item[-] $(\infty,\infty)$ is an infinite $\phi$-irreducible class,
\item[-] $(\infty,\F)$ and $(\F,\infty)$ are transient and $\phi$-irreducible sets.
\end{itemize}
\end{theorem}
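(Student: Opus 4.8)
The plan is to reduce Theorem~\ref{dec} to Theorems~\ref{absorbing-transient} and~\ref{irreducibility} applied to the bipartite subnetwork $(\scX,\approx)$, by showing that the coarser partition $\{(0,\infty),(\F,\infty),(\infty,\infty),(\infty,\F),(\infty,0)\}$ (counting $0$'s and $1$'s in a configuration, ignoring the even/odd distinction) is obtained by merging blocks of the finer $25$-block partition, and that the dynamics on $(\scX,\sim)$ interacts with these merged blocks in the right way. Concretely, I would first observe that each coarse block is a disjoint union of fine blocks: for instance $(\infty,\infty)$ is the union of all $(E_a,E_i,O_a,O_i)$ with $E_a+O_a=\infty$ \emph{and} $E_i+O_i=\infty$ in the obvious arithmetic on $\{0,\F,\infty\}$, hence exactly the fine blocks $(\infty,\infty,\infty,\infty)$, $(\infty,0,\infty,\infty)$, $(\infty,\infty,\infty,0)$, $(0,\infty,\infty,\infty)$, $(\infty,\infty,0,\infty)$, $(\infty,\F,\infty,\F)$, and the remaining ``$\F$-on-one-side'' combinations. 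I would record this merging explicitly so that each coarse block is seen to be a finite union of the classes and transient sets listed in Corollary~\ref{cor:blk}.

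Next, the absorbing/transient statements. Both $(\infty,0)$ and $(0,\infty)$ are exactly the singletons $\{\varnothing\}$ and $\{\scX\}$, so they are finite $\phi$-irreducible classes by Proposition~\ref{prop:01}; nothing about bipartiteness is needed here. For $(\infty,\infty)$: it is the union of the three infinite absorbing sets (4), (5), (6) of Corollary~\ref{cor:blk} together with several transient sets, but crucially a configuration with infinitely many active and infinitely many inactive agents stays in $(\infty,\infty)$ forever — this is because, by (\ref{eq:3}), every agent in $\intX$ stays active and every agent outside $\cloX$ stays inactive, and for an infinite $X$ with infinite complement one checks (using boundedness of degrees and connectedness, exactly as in the proofs behind Theorem~\ref{absorbing-transient}) that $\intX$ and the complement of $\cloX$ are both infinite; hence $(\infty,\infty)$ is absorbing. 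For transience of $(\F,\infty)$ and $(\infty,\F)$: here we pass to the bipartite subnetwork. A configuration with finitely many inactive agents (say) restricts, under the bipartition of $(\scX,\approx)$, to one of the fine blocks of type $(\F,\infty,\cdot,\cdot)$ or $(\cdot,\cdot,\F,\infty)$ with at least one ``$\F$'' among the inactive counts, all of which are transient by Theorem~\ref{absorbing-transient}.(\ref{transient}); since the transition kernel of $(\scX,\sim)$ dominates, for the purpose of leaving a cylinder, what is available in the subnetwork (every trajectory of partial configurations built using $\approx$-neighborhoods is also realizable using $\sim$-neighborhoods with positive probability, because $\Gamma^\approx(x)\subseteq\Gamma^\sim(x)$ and the aggregation function is strict), transience carries over. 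I would make this domination argument precise once and reuse it.

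Finally, $\phi$-irreducibility of all five coarse blocks. Here the Richness Assumption on $(\scX,\approx)$ is used via Theorem~\ref{irreducibility}, but there is a genuine subtlety: Theorem~\ref{irreducibility} gives $\phi$-irreducibility of the \emph{fine} blocks, and a finite union of $\phi$-irreducible sets need not be $\phi$-irreducible in general. The point to check is that within a coarse block the process can move between the constituent fine blocks: e.g. within $(\infty,\infty)$ one must argue that starting from a configuration in $(\infty,0,\infty,\infty)$ one can reach, with positive probability, a target cylinder whose compatible configurations lie in, say, $(\infty,\F,\infty,\F)$. This is exactly the kind of reachability statement proved in Appendix~\ref{appC} for Theorem~\ref{irreducibility} — one uses complex stars to create and park finitely many active agents on the previously-all-inactive side — so the honest thing is to invoke those trajectory-of-partial-configurations constructions, noting that they only ever use edges of a bipartite subgraph and a strict aggregation function, hence apply verbatim here. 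I expect the main obstacle to be precisely this last point: assembling the fine-block $\phi$-irreducibility results into $\phi$-irreducibility of the coarse blocks, i.e. exhibiting trajectories that cross between fine blocks inside a common coarse block, and checking that the conditional measure $\phi_{\cA}$ on the coarse block $\cA$ assigns positive mass only to sets that are already ``seen'' by the fine-block arguments. Everything else is bookkeeping about the arithmetic of $\{0,\F,\infty\}$ and the subgraph domination of trajectories.
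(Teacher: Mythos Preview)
Your reduction strategy is the right overall shape, and the transience and absorbing-singleton parts go through essentially as you describe (once the subgraph-domination observation is made precise). But there are two genuine problems.

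First, a smaller one: your argument that $(\infty,\infty)$ is absorbing is wrong as written. You assert that for infinite $X$ with infinite complement, $\intX$ and $(\cloX)^c$ are both infinite. This is false: take $(\scX,\sim)=\ZZ^2$ with the $\sqrt{2}$-neighborhood and $X=\{(n,m):n+m\text{ even}\}$; then every node has a neighbour outside $X$, so $\intX=\varnothing$. The correct argument (and the one the paper gives) is probabilistic, in the style of Proposition~\ref{persistence}: $\clo(X)$ is infinite, each of its elements becomes active with probability bounded away from $0$, and these events are independent across agents, so almost surely infinitely many become active. Your deterministic shortcut through $\intX$ does not work.

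Second, and this is the real gap: you correctly flag that the obstacle is ``exhibiting trajectories that cross between fine blocks inside a common coarse block'', but you then claim that the Appendix~\ref{appC} constructions resolve it because ``they only ever use edges of a bipartite subgraph \ldots\ hence apply verbatim here''. That is precisely why they are \emph{insufficient}. In the $\approx$-chain the set $(\infty,0,\infty,\infty)\cup(\infty,\infty,\infty,0)$ is an \emph{absorbing} class (Corollary~\ref{cor:blk}, item (5)); consequently no trajectory built solely from $\approx$-edges can carry a configuration in $(\infty,0,\infty,\infty)$ to a cylinder that simultaneously requires an active even agent and an active odd agent. To get $\phi$-irreducibility of the coarse block $(\infty,\infty)$ for the $\sim$-chain you must use at least one edge of $\sim$ that is \emph{not} in $\approx$. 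This is exactly what the paper does: since $(\scX,\sim)$ is not bipartite, there is an edge $c\sim d$ with $c,d$ in the same block of the bipartition; one propagates a status through $\approx$-edges to $c$, crosses to $d$ via this internal $\sim$-edge in a single step (which breaks the stage-parity that all bipartite constructions preserve), and only then invokes the Appendix~\ref{appC} machinery to reach the target cylinder. Without this parity-breaking step your proposal cannot establish that the constituent fine classes inside $(\infty,\infty)$ communicate, and the $\phi$-irreducibility claim fails.
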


(see proof in Appendix \ref{appD})

Compared to the results where the network is bipartite
(Corollary~\ref{cor:blk}), we obtain results of the same nature but much
simpler. The disappearance of the distinction between odd and even positions has
wippen out all cycles and periodic classes, since the periodicity was only due
to the alternation of a pattern on odd and even positions. In the two finite
classes we recognize the classes (1) and (2), the infinite class is the former
class (4), while all previous transient $\phi$-irreducible sets reduce only to
two such sets. Our conclusion and interpretation given for the case of
  bipartite networks therefore remains identical, up to the disappearance of
the periodic classes.

\section{Boolean aggregation functions}\label{sec:boolean}
The aggregation function being Boolean, the diffusion model (\ref{eq:1}) becomes
deterministic. In the previous sections, we used heavily the fact that the aggregation function was strict. Since it is not true anymore, we obtain very different results. Let us restrict our attention to anonymous functions. Then there
exists a threshold $0\leq\widetilde{q}\leq\gamma$ such that
\begin{equation}\label{eq:5a}
A(1_S) = \begin{cases}
  1 & \text{if } |S|\geq \widetilde{q}\\
  0 & \text{otherwise.}
  \end{cases}
\end{equation}
This yields the contagion model of \citet{mor00} which is also the
  threshold model introduced, e.g., by \citet{gra78}. In this model, the rule
of contagion with threshold $0\leq q\leq 1$ is the following. Given a
configuration $X(t)$ at time $t$, next configuration $X(t+1)$ is the set of
agents having a proportion of neighbors in $X(t)$ at least equal to $q$:
\begin{equation}\label{eq:6}
X(t+1) = \left\{x\in\scX\mymid \frac{|\Gamma(x)\cap X(t)|}{|\Gamma(x)|}\geq q\right\}.
\end{equation}
Comparing (\ref{eq:6}) with (\ref{eq:1}) and (\ref{eq:5a}) yields the equivalence
between the two models. The \emph{contagion threshold} $\xi$ is the largest $q$
such that infection spreads over $\scX$ from some finite group
$X(0)$.  \citet{mor00} has shown that for any network (in the sense of Section~\ref{subsec:societies}), $\xi\leq\frac{1}{2}$.

By definition, if $q$ is below the contagion threshold, then the
absorbing states are the trivial states $\varnothing$ and $\scX$. Otherwise, other
nontrivial absorbing classes may occur, as will be shown.

\medskip

Let us try to find first absorbing states. As we noted earlier, $X$ is an absorbing
state if it is a fixed point, i.e., $X(t+1)=X(t)$. We introduce the
  following convenient notation. The {\it frontier} of $X$ is the set
  \[
\partial X= \cloX\setminus\intX = \{x\in \scX\mymid \Gamma(x)\cap
X^c\neq\varnothing \text{ and } \Gamma(x)\cap X\neq\varnothing\}.
  \]
We also introduce $\partial^+X = \partial X\cap X$ the {\it inner frontier} of $X$,
and $\partial^-X = \partial X\cap X^c$ the {\it outer frontier} of $X$. The following results are
immediate.
\begin{proposition}
Suppose $A$ is an anonymous Boolean aggregation function. $X$ is an absorbing state
if and only if:
\begin{enumerate}
\item Each inner frontier node $x\in\partial^+X$ has at least $\lceil \gamma q\rceil$
  neighbors in $X$, and
\item Each outer frontier node $x\in\partial^-X$ has at least $\lfloor
  \gamma(1-q)\rfloor+1$ neighbors in $X^c$.
\end{enumerate}
\end{proposition}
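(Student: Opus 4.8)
The plan is to unwind the definitions: since $A$ is Boolean the update rule (\ref{eq:6}) is deterministic, so $X$ is absorbing precisely when it is a fixed point, $X(t+1)=X$. This is the conjunction of two inclusions: (a) $X\subseteq X(t+1)$, every active node stays active, and (b) $X(t+1)\subseteq X$, every inactive node stays inactive. I would first rewrite these node by node. By (\ref{eq:6}), $x\in X(t+1)$ iff $|\Gamma(x)\cap X|\ge q\,|\Gamma(x)|$; as the left side is an integer this is iff $|\Gamma(x)\cap X|\ge\lceil q\,|\Gamma(x)|\rceil$. Dually $x\notin X(t+1)$ iff $|\Gamma(x)\cap X|<q\,|\Gamma(x)|$, and substituting $|\Gamma(x)\cap X|=|\Gamma(x)|-|\Gamma(x)\cap X^c|$ this reads $|\Gamma(x)\cap X^c|>(1-q)|\Gamma(x)|$, i.e. $|\Gamma(x)\cap X^c|\ge\lfloor(1-q)|\Gamma(x)|\rfloor+1$. (Throughout, $\gamma$ in the statement denotes the degree $|\Gamma(x)|$ of the node being tested; one may also assume $0<q$, since for $q=0$ one has $X(t+1)=\scX$, so $\scX$ is the only fixed point, consistent with its empty frontier.)

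Next I would restrict the two conditions to the frontier. If $x\in\intX$ then $\Gamma(x)\subseteq X$, so $|\Gamma(x)\cap X|=|\Gamma(x)|\ge q|\Gamma(x)|$ and (a) holds at $x$ automatically; symmetrically, if $x\in\int(X^c)=(\cloX)^c$ then $\Gamma(x)\cap X=\varnothing$ and (b) holds at $x$ automatically. It remains to identify the nodes that are not of this type with $\partial^+X$ and $\partial^-X$. For this I would record that a fixed point necessarily satisfies $\intX\subseteq X\subseteq\cloX$: from (a), every $x\in X$ has $|\Gamma(x)\cap X|\ge q|\Gamma(x)|>0$, hence $x\in\cloX$; from (b), every $x\in X^c$ has $|\Gamma(x)\cap X|<|\Gamma(x)|$, hence $\Gamma(x)\not\subseteq X$, i.e. $x\notin\intX$. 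Given these containments, $\partial^+X=X\setminus\intX$ and $\partial^-X=\cloX\cap X^c$, so (a) is equivalent to condition 1 and (b) to condition 2. The converse is the same chain of equivalences read backwards: for an $X$ with $\intX\subseteq X\subseteq\cloX$, conditions 1 and 2 together with the automatic behaviour on $\intX$ and on $(\cloX)^c$ force $X(t+1)=X$.

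The main (and essentially only) point requiring care is the identification of the test sets in the second step: since in general $\int$ and $\clo$ are not nested around $X$ (Example~\ref{ex:1}), one must first establish that fixed points do satisfy $\intX\subseteq X\subseteq\cloX$ before replacing the set of active (resp. inactive) nodes by the inner frontier $\partial^+X$ (resp. the outer frontier $\partial^-X$). Everything else is the elementary rewriting of the proportion threshold $|\Gamma(x)\cap X|/|\Gamma(x)|\ge q$ in ceiling/floor form.
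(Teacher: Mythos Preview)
Your approach is exactly the paper's: split $X(t+1)=X$ into the two inclusions and rewrite the threshold condition node by node in ceiling/floor form. The paper's proof is terser and actually states the conditions for \emph{every} $x\in X$ and every $x\in X^c$, leaving the passage to the frontier implicit.

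Your paragraph on $\intX\subseteq X\subseteq\cloX$ is a genuine refinement over the paper, not just bookkeeping. You correctly observe that without this containment one cannot replace ``every $x\in X$'' by ``every $x\in\partial^+X$'', and in fact the ``if'' direction of the proposition as literally stated fails without it: take a bipartite network and $X=\scX_e$; then $\intX=\cloX=X^c$, so $\partial X=\varnothing$ and conditions 1 and 2 are vacuous, yet $X$ is a $2$-cycle, not a fixed point. The paper's proof sidesteps this by never actually restricting to the frontier; your explicit hypothesis $\intX\subseteq X\subseteq\cloX$ for the converse is the right repair of the statement.
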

\begin{proof}
$X$ will not lose an element if each element of $X$ has at least $\gamma q $
neighbors in $X$, and $X$ will not attract a new element if each element of
$\scX\setminus X$ has strictly less than $\gamma q$ neighbors in $X$, hence  at
least $\lfloor \gamma(1-q)\rfloor+1$ neighbors in $\scX\setminus X$. 
\end{proof}

We examine two particular networks from Example \ref{ex:Z2} for illustration.

\paragraph{The case $\ZZ^2$ with the 1-neighborhood}

Let us consider various values of
$q$. The following table gives the minimal number of neighbors in $X$ for frontier nodes, so that $X$ is an absorbing state. 

\begin{center}
\begin{tabular}{|p{5cm}|cccc|}\hline
$q$ & 1/4 & 1/2 & 3/4  & 1\\ \hline
{\footnotesize minimal nb of interior neighbors for nodes in $X$} & 1 & 2 & 3 & 4
\\ \hline
{\footnotesize minimal nb of interior neighbors for nodes outside $X$} & 4 & 3 & 2 & 1
\\ \hline
\end{tabular}
\end{center} 
Let us describe the shape of the frontier of a connected set $X$ with $|X|>1$
satisfying the above requirements. To this end, we consider a frontier node $x$
of $X$ and the number of its interior neighbors $\eta (x) := |\Gamma(x)\cap X|$.  
\begin{enumerate}
\item $\eta (x)\geq 1$ for every frontier node $x$. This is true for any $X$ such that $|X|\geq 2$.
\item $\eta (x)\geq 2$ for every frontier node $x$. Only the situation
  $\eta(x)=1$ is forbidden, which corresponds to ``antennas'' (see
  Figure~\ref{fig:2}, left). Therefore, $X$ is any shape without antennas, with
  $|X|\geq 3$.
\item $\eta(x)\geq 3$ for every frontier node $x$. Then additional forbidden
  situations are ``convex corners'' and ``isthms'' (see
  Figure~\ref{fig:2}). Then
  $X$ is any shape without antennas, convex corners and isthms, and $|X|\geq
  4$. Note that since $X$ may be infinite, one can have shapes with concave
  corners (which are allowed) but without convex corners (see Figure~\ref{fig:3}).  
\item Evidently, $\eta(x)\geq 4$ is impossible by definition of frontier nodes.
\end{enumerate}
\begin{figure}[htb]
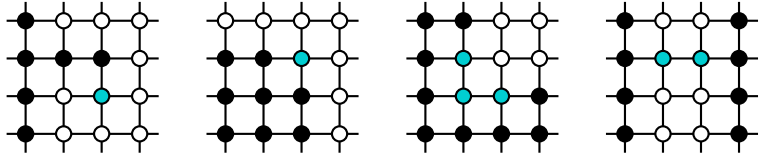
 
\begin{center}
\psset{unit=0.5cm}
\pspicture(-0.5,-0.5)(3.5,3.5)
\psline(-0.5,0)(3.5,0)
\psline(-0.5,1)(3.5,1)
\psline(-0.5,2)(3.5,2)
\psline(-0.5,3)(3.5,3)
\psline(0,-0.5)(0,3.5)
\psline(1,-0.5)(1,3.5)
\psline(2,-0.5)(2,3.5)
\psline(3,-0.5)(3,3.5)
\pscircle[fillstyle=solid,fillcolor=black](0,0){0.2}
\pscircle[fillstyle=solid,fillcolor=black](0,1){0.2}
\pscircle[fillstyle=solid,fillcolor=black](0,2){0.2}
\pscircle[fillstyle=solid,fillcolor=black](0,3){0.2}
\pscircle[fillstyle=solid](1,0){0.2}
\pscircle[fillstyle=solid](1,1){0.2}
\pscircle[fillstyle=solid,fillcolor=black](1,2){0.2}
\pscircle[fillstyle=solid](1,3){0.2}
\pscircle[fillstyle=solid](2,0){0.2}
\pscircle[fillstyle=solid,fillcolor=cyan](2,1){0.2}
\pscircle[fillstyle=solid,fillcolor=black](2,2){0.2}
\pscircle[fillstyle=solid](2,3){0.2}
\pscircle[fillstyle=solid](3,0){0.2}
\pscircle[fillstyle=solid](3,1){0.2}
\pscircle[fillstyle=solid](3,2){0.2}
\pscircle[fillstyle=solid](3,3){0.2}
\endpspicture
\hspace*{0.5cm}
\pspicture(-0.5,-0.5)(3.5,3.5)
\psline(-0.5,0)(3.5,0)
\psline(-0.5,1)(3.5,1)
\psline(-0.5,2)(3.5,2)
\psline(-0.5,3)(3.5,3)
\psline(0,-0.5)(0,3.5)
\psline(1,-0.5)(1,3.5)
\psline(2,-0.5)(2,3.5)
\psline(3,-0.5)(3,3.5)
\pscircle[fillstyle=solid,fillcolor=black](0,0){0.2}
\pscircle[fillstyle=solid,fillcolor=black](0,1){0.2}
\pscircle[fillstyle=solid,fillcolor=black](0,2){0.2}
\pscircle[fillstyle=solid](0,3){0.2}
\pscircle[fillstyle=solid,fillcolor=black](1,0){0.2}
\pscircle[fillstyle=solid,fillcolor=black](1,1){0.2}
\pscircle[fillstyle=solid,fillcolor=black](1,2){0.2}
\pscircle[fillstyle=solid](1,3){0.2}
\pscircle[fillstyle=solid,fillcolor=black](2,0){0.2}
\pscircle[fillstyle=solid,fillcolor=black](2,1){0.2}
\pscircle[fillstyle=solid,fillcolor=cyan](2,2){0.2}
\pscircle[fillstyle=solid](2,3){0.2}
\pscircle[fillstyle=solid](3,0){0.2}
\pscircle[fillstyle=solid](3,1){0.2}
\pscircle[fillstyle=solid](3,2){0.2}
\pscircle[fillstyle=solid](3,3){0.2}
\endpspicture
\hspace*{0.5cm}
\pspicture(-0.5,-0.5)(3.5,3.5)
\psline(-0.5,0)(3.5,0)
\psline(-0.5,1)(3.5,1)
\psline(-0.5,2)(3.5,2)
\psline(-0.5,3)(3.5,3)
\psline(0,-0.5)(0,3.5)
\psline(1,-0.5)(1,3.5)
\psline(2,-0.5)(2,3.5)
\psline(3,-0.5)(3,3.5)
\pscircle[fillstyle=solid,fillcolor=black](0,0){0.2}
\pscircle[fillstyle=solid,fillcolor=black](0,1){0.2}
\pscircle[fillstyle=solid,fillcolor=black](0,2){0.2}
\pscircle[fillstyle=solid,fillcolor=black](0,3){0.2}
\pscircle[fillstyle=solid,fillcolor=black](1,0){0.2}
\pscircle[fillstyle=solid,fillcolor=cyan](1,1){0.2}
\pscircle[fillstyle=solid,fillcolor=cyan](1,2){0.2}
\pscircle[fillstyle=solid,fillcolor=black](1,3){0.2}
\pscircle[fillstyle=solid,fillcolor=black](2,0){0.2}
\pscircle[fillstyle=solid,fillcolor=cyan](2,1){0.2}
\pscircle[fillstyle=solid](2,2){0.2}
\pscircle[fillstyle=solid](2,3){0.2}
\pscircle[fillstyle=solid,fillcolor=black](3,0){0.2}
\pscircle[fillstyle=solid,fillcolor=black](3,1){0.2}
\pscircle[fillstyle=solid](3,2){0.2}
\pscircle[fillstyle=solid](3,3){0.2}
\endpspicture
\hspace*{0.5cm}
\pspicture(-0.5,-0.5)(3.5,3.5)
\psline(-0.5,0)(3.5,0)
\psline(-0.5,1)(3.5,1)
\psline(-0.5,2)(3.5,2)
\psline(-0.5,3)(3.5,3)
\psline(0,-0.5)(0,3.5)
\psline(1,-0.5)(1,3.5)
\psline(2,-0.5)(2,3.5)
\psline(3,-0.5)(3,3.5)
\pscircle[fillstyle=solid,fillcolor=black](0,0){0.2}
\pscircle[fillstyle=solid,fillcolor=black](0,1){0.2}
\pscircle[fillstyle=solid,fillcolor=black](0,2){0.2}
\pscircle[fillstyle=solid,fillcolor=black](0,3){0.2}
\pscircle[fillstyle=solid](1,0){0.2}
\pscircle[fillstyle=solid](1,1){0.2}
\pscircle[fillstyle=solid,fillcolor=cyan](1,2){0.2}
\pscircle[fillstyle=solid](1,3){0.2}
\pscircle[fillstyle=solid](2,0){0.2}
\pscircle[fillstyle=solid](2,1){0.2}
\pscircle[fillstyle=solid,fillcolor=cyan](2,2){0.2}
\pscircle[fillstyle=solid](2,3){0.2}
\pscircle[fillstyle=solid,fillcolor=black](3,0){0.2}
\pscircle[fillstyle=solid,fillcolor=black](3,1){0.2}
\pscircle[fillstyle=solid,fillcolor=black](3,2){0.2}
\pscircle[fillstyle=solid,fillcolor=black](3,3){0.2}
\endpspicture
\end{center}
\caption{4$\times$4 portion of $\ZZ^2$, where $X$ is in black or
  blue. By convention, the color of nodes extends infinitely in any direction. From left to right: antenna ($x$ such that $\eta(x)=1$ in blue), convex
  corner ($x$ such that $\eta(x)=2$ in blue), concave corner in blue, isthm ($x$ such that $\eta(x)=2$ in blue)}
\label{fig:2}
\end{figure}
From the previous analysis, it follows that only $q=\frac{1}{2}$ or
$\frac{3}{4}$ may lead to nontrivial absorbing classes. Also, these two cases are
exact complements of each other, in the sense that $X$ is a possible absorbing
state for $q=\frac{1}{2}$ if and only if $\scX\setminus X$ is a possible
absorbing state for $q=\frac{3}{4}$. Taking for example the latter, each
connected component of $X$ should be of size at least 4, and should have no
convex corner, no antenna and no isthm, while each connected component of the
complement set should be of size at least 3 and have no antennas. Obvious
examples are half-planes, concave corners extending infinitely in both
directions, and infinite shapes with squared ``holes'' of width greater than 1
(see the example in Figure~\ref{fig:3}). 
\begin{figure}[htp]
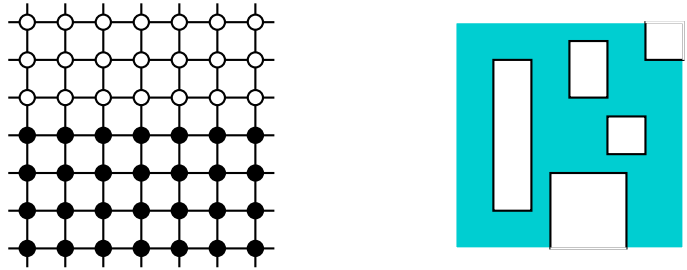

\begin{center}
\psset{unit=0.5cm}
\pspicture(-0.5,-0.5)(6.5,6.5)
\psline(-0.5,0)(6.5,0)
\psline(-0.5,1)(6.5,1)
\psline(-0.5,2)(6.5,2)
\psline(-0.5,3)(6.5,3)
\psline(-0.5,4)(6.5,4)
\psline(-0.5,5)(6.5,5)
\psline(-0.5,6)(6.5,6)
\psline(0,-0.5)(0,6.5)
\psline(1,-0.5)(1,6.5)
\psline(2,-0.5)(2,6.5)
\psline(3,-0.5)(3,6.5)
\psline(4,-0.5)(4,6.5)
\psline(5,-0.5)(5,6.5)
\psline(6,-0.5)(6,6.5)
\pscircle[fillstyle=solid,fillcolor=black](0,0){0.2}
\pscircle[fillstyle=solid,fillcolor=black](1,0){0.2}
\pscircle[fillstyle=solid,fillcolor=black](2,0){0.2}
\pscircle[fillstyle=solid,fillcolor=black](3,0){0.2}
\pscircle[fillstyle=solid,fillcolor=black](4,0){0.2}
\pscircle[fillstyle=solid,fillcolor=black](5,0){0.2}
\pscircle[fillstyle=solid,fillcolor=black](6,0){0.2}
\pscircle[fillstyle=solid,fillcolor=black](0,1){0.2}
\pscircle[fillstyle=solid,fillcolor=black](1,1){0.2}
\pscircle[fillstyle=solid,fillcolor=black](2,1){0.2}
\pscircle[fillstyle=solid,fillcolor=black](3,1){0.2}
\pscircle[fillstyle=solid,fillcolor=black](4,1){0.2}
\pscircle[fillstyle=solid,fillcolor=black](5,1){0.2}
\pscircle[fillstyle=solid,fillcolor=black](6,1){0.2}
\pscircle[fillstyle=solid,fillcolor=black](0,2){0.2}
\pscircle[fillstyle=solid,fillcolor=black](1,2){0.2}
\pscircle[fillstyle=solid,fillcolor=black](2,2){0.2}
\pscircle[fillstyle=solid,fillcolor=black](3,2){0.2}
\pscircle[fillstyle=solid,fillcolor=black](4,2){0.2}
\pscircle[fillstyle=solid,fillcolor=black](5,2){0.2}
\pscircle[fillstyle=solid,fillcolor=black](6,2){0.2}
\pscircle[fillstyle=solid,fillcolor=black](0,3){0.2}
\pscircle[fillstyle=solid,fillcolor=black](1,3){0.2}
\pscircle[fillstyle=solid,fillcolor=black](2,3){0.2}
\pscircle[fillstyle=solid,fillcolor=black](3,3){0.2}
\pscircle[fillstyle=solid,fillcolor=black](4,3){0.2}
\pscircle[fillstyle=solid,fillcolor=black](5,3){0.2}
\pscircle[fillstyle=solid,fillcolor=black](6,3){0.2}
\pscircle[fillstyle=solid](0,4){0.2}
\pscircle[fillstyle=solid](1,4){0.2}
\pscircle[fillstyle=solid](2,4){0.2}
\pscircle[fillstyle=solid](3,4){0.2}
\pscircle[fillstyle=solid](4,4){0.2}
\pscircle[fillstyle=solid](5,4){0.2}
\pscircle[fillstyle=solid](6,4){0.2}
\pscircle[fillstyle=solid](0,5){0.2}
\pscircle[fillstyle=solid](1,5){0.2}
\pscircle[fillstyle=solid](2,5){0.2}
\pscircle[fillstyle=solid](3,5){0.2}
\pscircle[fillstyle=solid](4,5){0.2}
\pscircle[fillstyle=solid](5,5){0.2}
\pscircle[fillstyle=solid](6,5){0.2}
\pscircle[fillstyle=solid](0,6){0.2}
\pscircle[fillstyle=solid](1,6){0.2}
\pscircle[fillstyle=solid](2,6){0.2}
\pscircle[fillstyle=solid](3,6){0.2}
\pscircle[fillstyle=solid](4,6){0.2}
\pscircle[fillstyle=solid](5,6){0.2}
\pscircle[fillstyle=solid](6,6){0.2}
\endpspicture
\hspace*{2cm}
\psset{unit=0.5cm}
\pspicture(-0.5,-0.5)(6.5,6.5)
\pspolygon[fillstyle=solid,fillcolor=cyan,linecolor=white](0,0)(0,6)(6,6)(6,0)
\pspolygon[fillstyle=solid](1,1)(1,5)(2,5)(2,1)
\pspolygon[fillstyle=solid](3,4)(3,5.5)(4,5.5)(4,4)
\pspolygon[fillstyle=solid](2.5,0)(2.5,2)(4.5,2)(4.5,0)
\psline[linecolor=white](2.5,0)(4.5,0)
\pspolygon[fillstyle=solid](5,5)(5,6)(6,6)(6,5)
\psline[linecolor=white](5,6)(6,6)(6,5)
\pspolygon[fillstyle=solid](4,2.5)(4,3.5)(5,3.5)(5,2.5)
\endpspicture
\end{center}
\caption{Examples of absorbing states with $q=\frac{3}{4}$. Left: horizontal
  half-plane. Right: shape with squared holes (in blue; the grid is not represented)}
\label{fig:3}
\end{figure}

\paragraph{The case of $\ZZ^2$ with the $\sqrt{2}$-neighborhood}

We obtain the following table.
\begin{center}
\begin{tabular}{|p{5cm}|ccccccc|}\hline
$q$ & 1/8 & 1/4 & 3/8 & 1/2 & 5/8 & 3/4 & 7/8 \\ \hline
{\footnotesize minimal nb of interior neighbors for nodes in $X$} & 1 & 2 & 3
& 4 & 5 & 6 & 7 \\ \hline
{\footnotesize minimal nb of interior neighbors for nodes outside $X$} & 8 & 7 & 6
& 5 & 4 & 3 & 2 \\ \hline
\end{tabular}
\end{center} 
The analysis of possible shapes in this topology reveals to be much more
complex and we will not provide a full analysis as in the previous
example. Obviously, $q=\frac{1}{8}$ is ruled out. If $\eta (x)=7$, then the only
possibility is that $\scX\setminus X$ is a singleton. Since then $\scX\setminus
X$ cannot satisfy the condition $\eta (x)\geq 2$, this rules out $q=\frac{1}{4}$
and $q=\frac{7}{8}$. If $\eta (x)\geq 3$, as before, antennas and isthms of
length greater than 2 are forbidden, but not convex corners. For $\eta (x)\geq 4$,
squared convex corners are no more allowed (but $45^\circ$ corners are
possible). For $\eta (x)\geq 5$, no corners are allowed except concave ones. For
$\eta (x)\geq 6$, a remarkable situation is where $\scX\setminus X$ is a square
with edges of length 1. Therefore, as possible absorbing states (we do not
pretend to be exhaustive) we find:
\begin{enumerate}
\item $q=3/8$: a connected component of $X$ is a square with edge of length 1 (see Figure~\ref{fig:4}, left)
\item $q=1/2$ or $q=5/8$: the two cases are complements of each other. Taking the
  latter, a connected component of $X$ is any shape of size at least 6 without
  convex corners, antennas nor isthms, and each connected component of the
  complement set should be of size at least 5 and have no antennas, isthms and
  squared convex corners. For example, half-planes are allowed, as well as
  infinite shapes with squared holes whose corners are ``rounded'' (see
  Figure~\ref{fig:4}, right).
\item $q=3/4$: a connected component of $\scX\setminus S$ is a square with edge of length 1. 
\end{enumerate}  
\begin{figure}[htp]
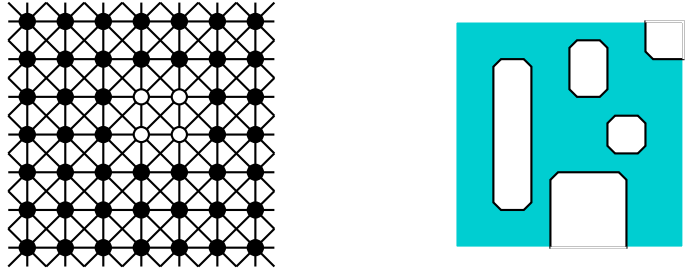

\begin{center}
\psset{unit=0.5cm}
\pspicture(-0.5,-0.5)(6.5,6.5)
\psline(-0.5,0)(6.5,0)
\psline(-0.5,1)(6.5,1)
\psline(-0.5,2)(6.5,2)
\psline(-0.5,3)(6.5,3)
\psline(-0.5,4)(6.5,4)
\psline(-0.5,5)(6.5,5)
\psline(-0.5,6)(6.5,6)
\psline(0,-0.5)(0,6.5)
\psline(1,-0.5)(1,6.5)
\psline(2,-0.5)(2,6.5)
\psline(3,-0.5)(3,6.5)
\psline(4,-0.5)(4,6.5)
\psline(5,-0.5)(5,6.5)
\psline(6,-0.5)(6,6.5)
\psline(-0.5,0.5)(0.5,-0.5)
\psline(-0.5,1.5)(1.5,-0.5)
\psline(-0.5,2.5)(2.5,-0.5)
\psline(-0.5,3.5)(3.5,-0.5)
\psline(-0.5,4.5)(4.5,-0.5)
\psline(-0.5,5.5)(5.5,-0.5)
\psline(-0.5,6.5)(6.5,-0.5)
\psline(0.5,6.5)(6.5,0.5)
\psline(1.5,6.5)(6.5,1.5)
\psline(2.5,6.5)(6.5,2.5)
\psline(3.5,6.5)(6.5,3.5)
\psline(4.5,6.5)(6.5,4.5)
\psline(5.5,6.5)(6.5,5.5)
\psline(-0.5,5.5)(0.5,6.5)
\psline(-0.5,4.5)(1.5,6.5)
\psline(-0.5,3.5)(2.5,6.5)
\psline(-0.5,2.5)(3.5,6.5)
\psline(-0.5,1.5)(4.5,6.5)
\psline(-0.5,0.5)(5.5,6.5)
\psline(-0.5,-0.5)(6.5,6.5)
\psline(0.5,-0.5)(6.5,5.5)
\psline(1.5,-0.5)(6.5,4.5)
\psline(2.5,-0.5)(6.5,3.5)
\psline(3.5,-0.5)(6.5,2.5)
\psline(4.5,-0.5)(6.5,1.5)
\psline(5.5,-0.5)(6.5,0.5)
\pscircle[fillstyle=solid,fillcolor=black](0,0){0.2}
\pscircle[fillstyle=solid,fillcolor=black](1,0){0.2}
\pscircle[fillstyle=solid,fillcolor=black](2,0){0.2}
\pscircle[fillstyle=solid,fillcolor=black](3,0){0.2}
\pscircle[fillstyle=solid,fillcolor=black](4,0){0.2}
\pscircle[fillstyle=solid,fillcolor=black](5,0){0.2}
\pscircle[fillstyle=solid,fillcolor=black](6,0){0.2}
\pscircle[fillstyle=solid,fillcolor=black](0,1){0.2}
\pscircle[fillstyle=solid,fillcolor=black](1,1){0.2}
\pscircle[fillstyle=solid,fillcolor=black](2,1){0.2}
\pscircle[fillstyle=solid,fillcolor=black](3,1){0.2}
\pscircle[fillstyle=solid,fillcolor=black](4,1){0.2}
\pscircle[fillstyle=solid,fillcolor=black](5,1){0.2}
\pscircle[fillstyle=solid,fillcolor=black](6,1){0.2}
\pscircle[fillstyle=solid,fillcolor=black](0,2){0.2}
\pscircle[fillstyle=solid,fillcolor=black](1,2){0.2}
\pscircle[fillstyle=solid,fillcolor=black](2,2){0.2}
\pscircle[fillstyle=solid,fillcolor=black](3,2){0.2}
\pscircle[fillstyle=solid,fillcolor=black](4,2){0.2}
\pscircle[fillstyle=solid,fillcolor=black](5,2){0.2}
\pscircle[fillstyle=solid,fillcolor=black](6,2){0.2}
\pscircle[fillstyle=solid,fillcolor=black](0,3){0.2}
\pscircle[fillstyle=solid,fillcolor=black](1,3){0.2}
\pscircle[fillstyle=solid,fillcolor=black](2,3){0.2}
\pscircle[fillstyle=solid](3,3){0.2}
\pscircle[fillstyle=solid](4,3){0.2}
\pscircle[fillstyle=solid,fillcolor=black](5,3){0.2}
\pscircle[fillstyle=solid,fillcolor=black](6,3){0.2}
\pscircle[fillstyle=solid,fillcolor=black](0,4){0.2}
\pscircle[fillstyle=solid,fillcolor=black](1,4){0.2}
\pscircle[fillstyle=solid,fillcolor=black](2,4){0.2}
\pscircle[fillstyle=solid](3,4){0.2}
\pscircle[fillstyle=solid](4,4){0.2}
\pscircle[fillstyle=solid,fillcolor=black](5,4){0.2}
\pscircle[fillstyle=solid,fillcolor=black](6,4){0.2}
\pscircle[fillstyle=solid,fillcolor=black](0,5){0.2}
\pscircle[fillstyle=solid,fillcolor=black](1,5){0.2}
\pscircle[fillstyle=solid,fillcolor=black](2,5){0.2}
\pscircle[fillstyle=solid,fillcolor=black](3,5){0.2}
\pscircle[fillstyle=solid,fillcolor=black](4,5){0.2}
\pscircle[fillstyle=solid,fillcolor=black](5,5){0.2}
\pscircle[fillstyle=solid,fillcolor=black](6,5){0.2}
\pscircle[fillstyle=solid,fillcolor=black](0,6){0.2}
\pscircle[fillstyle=solid,fillcolor=black](1,6){0.2}
\pscircle[fillstyle=solid,fillcolor=black](2,6){0.2}
\pscircle[fillstyle=solid,fillcolor=black](3,6){0.2}
\pscircle[fillstyle=solid,fillcolor=black](4,6){0.2}
\pscircle[fillstyle=solid,fillcolor=black](5,6){0.2}
\pscircle[fillstyle=solid,fillcolor=black](6,6){0.2}
\endpspicture
\hspace*{2cm}
\psset{unit=0.5cm}
\pspicture(-0.5,-0.5)(6.5,6.5)
\pspolygon[fillstyle=solid,fillcolor=cyan,linecolor=white](0,0)(0,6)(6,6)(6,0)
\pspolygon[fillstyle=solid](1.2,1)(1,1.2)(1,4.8)(1.2,5)(1.8,5)(2,4.8)(2,1.2)(1.8,1)
\pspolygon[fillstyle=solid](3.2,4)(3,4.2)(3,5.3)(3.2,5.5)(3.8,5.5)(4,5.3)(4,4.2)(3.8,4)
\pspolygon[fillstyle=solid](2.5,0)(2.5,1.8)(2.7,2)(4.3,2)(4.5,1.8)(4.5,0)
\psline[linecolor=white](2.5,0)(4.5,0)
\pspolygon[fillstyle=solid](5.2,5)(5,5.2)(5,6)(6,6)(6,5)
\psline[linecolor=white](5,6)(6,6)(6,5)
\pspolygon[fillstyle=solid](4.2,2.5)(4,2.7)(4,3.3)(4.2,3.5)(4.8,3.5)(5,3.3)(5,2.7)(4.8,2.5)
\endpspicture
\end{center}
\caption{Left: example of absorbing state with $q=\frac{3}{4}$. Right: example
  with $q=\frac{5}{8}$ (in blue; the grid is not represented)} 
\label{fig:4}
\end{figure}

\paragraph{Other absorbing classes}
 Let us reconsider the notion of trajectory. As the process is
  deterministic, a (deterministic) trajectory is merely a sequence of states
  $X_1,X_2,X_3,\ldots$. We show that cycles (periodic trajectories) and
infinite (aperiodic) trajectories may exist. Figure
\ref{fig:5} presents an example of a cycle, and Figure~\ref{fig:6} an example of
an infinite trajectory.
\begin{figure}[htp]
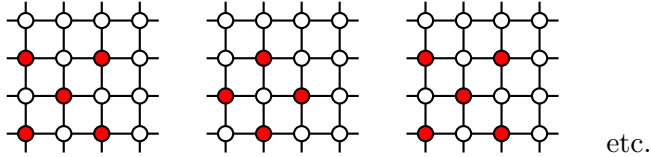

\begin{center}
\psset{unit=0.5cm}
\pspicture(-0.5,-0.5)(3.5,3.5)
\psline(-0.5,0)(3.5,0)
\psline(-0.5,1)(3.5,1)
\psline(-0.5,2)(3.5,2)
\psline(-0.5,3)(3.5,3)
\psline(0,-0.5)(0,3.5)
\psline(1,-0.5)(1,3.5)
\psline(2,-0.5)(2,3.5)
\psline(3,-0.5)(3,3.5)
\pscircle[fillstyle=solid,fillcolor=red](0,0){0.2}
\pscircle[fillstyle=solid](0,1){0.2}
\pscircle[fillstyle=solid,fillcolor=red](0,2){0.2}
\pscircle[fillstyle=solid](0,3){0.2}
\pscircle[fillstyle=solid](1,0){0.2}
\pscircle[fillstyle=solid,fillcolor=red](1,1){0.2}
\pscircle[fillstyle=solid](1,2){0.2}
\pscircle[fillstyle=solid](1,3){0.2}
\pscircle[fillstyle=solid,fillcolor=red](2,0){0.2}
\pscircle[fillstyle=solid](2,1){0.2}
\pscircle[fillstyle=solid,fillcolor=red](2,2){0.2}
\pscircle[fillstyle=solid](2,3){0.2}
\pscircle[fillstyle=solid](3,0){0.2}
\pscircle[fillstyle=solid](3,1){0.2}
\pscircle[fillstyle=solid](3,2){0.2}
\pscircle[fillstyle=solid](3,3){0.2}
\endpspicture
\hspace*{0.5cm}
\pspicture(-0.5,-0.5)(3.5,3.5)
\psline(-0.5,0)(3.5,0)
\psline(-0.5,1)(3.5,1)
\psline(-0.5,2)(3.5,2)
\psline(-0.5,3)(3.5,3)
\psline(0,-0.5)(0,3.5)
\psline(1,-0.5)(1,3.5)
\psline(2,-0.5)(2,3.5)
\psline(3,-0.5)(3,3.5)
\pscircle[fillstyle=solid](0,0){0.2}
\pscircle[fillstyle=solid,fillcolor=red](0,1){0.2}
\pscircle[fillstyle=solid](0,2){0.2}
\pscircle[fillstyle=solid](0,3){0.2}
\pscircle[fillstyle=solid,fillcolor=red](1,0){0.2}
\pscircle[fillstyle=solid](1,1){0.2}
\pscircle[fillstyle=solid,fillcolor=red](1,2){0.2}
\pscircle[fillstyle=solid](1,3){0.2}
\pscircle[fillstyle=solid](2,0){0.2}
\pscircle[fillstyle=solid,fillcolor=red](2,1){0.2}
\pscircle[fillstyle=solid](2,2){0.2}
\pscircle[fillstyle=solid](2,3){0.2}
\pscircle[fillstyle=solid](3,0){0.2}
\pscircle[fillstyle=solid](3,1){0.2}
\pscircle[fillstyle=solid](3,2){0.2}
\pscircle[fillstyle=solid](3,3){0.2}
\endpspicture
\hspace*{0.5cm}
\pspicture(-0.5,-0.5)(3.5,3.5)
\psline(-0.5,0)(3.5,0)
\psline(-0.5,1)(3.5,1)
\psline(-0.5,2)(3.5,2)
\psline(-0.5,3)(3.5,3)
\psline(0,-0.5)(0,3.5)
\psline(1,-0.5)(1,3.5)
\psline(2,-0.5)(2,3.5)
\psline(3,-0.5)(3,3.5)
\pscircle[fillstyle=solid,fillcolor=red](0,0){0.2}
\pscircle[fillstyle=solid](0,1){0.2}
\pscircle[fillstyle=solid,fillcolor=red](0,2){0.2}
\pscircle[fillstyle=solid](0,3){0.2}
\pscircle[fillstyle=solid](1,0){0.2}
\pscircle[fillstyle=solid,fillcolor=red](1,1){0.2}
\pscircle[fillstyle=solid](1,2){0.2}
\pscircle[fillstyle=solid](1,3){0.2}
\pscircle[fillstyle=solid,fillcolor=red](2,0){0.2}
\pscircle[fillstyle=solid](2,1){0.2}
\pscircle[fillstyle=solid,fillcolor=red](2,2){0.2}
\pscircle[fillstyle=solid](2,3){0.2}
\pscircle[fillstyle=solid](3,0){0.2}
\pscircle[fillstyle=solid](3,1){0.2}
\pscircle[fillstyle=solid](3,2){0.2}
\pscircle[fillstyle=solid](3,3){0.2}
\endpspicture
\hspace*{0.5cm} etc.
\end{center}
\caption{Example of a cycle with $\ZZ^2$ and the 1-neighborhood ($q=\frac{1}{2}$)} 
\label{fig:5}
\end{figure}

\begin{figure}[htp]
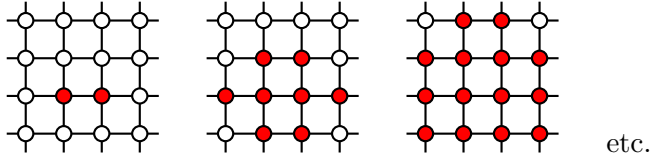

\begin{center}
\psset{unit=0.5cm}
\pspicture(-0.5,-0.5)(3.5,3.5)
\psline(-0.5,0)(3.5,0)
\psline(-0.5,1)(3.5,1)
\psline(-0.5,2)(3.5,2)
\psline(-0.5,3)(3.5,3)
\psline(0,-0.5)(0,3.5)
\psline(1,-0.5)(1,3.5)
\psline(2,-0.5)(2,3.5)
\psline(3,-0.5)(3,3.5)
\pscircle[fillstyle=solid](0,0){0.2}
\pscircle[fillstyle=solid](0,1){0.2}
\pscircle[fillstyle=solid](0,2){0.2}
\pscircle[fillstyle=solid](0,3){0.2}
\pscircle[fillstyle=solid](1,0){0.2}
\pscircle[fillstyle=solid,fillcolor=red](1,1){0.2}
\pscircle[fillstyle=solid](1,2){0.2}
\pscircle[fillstyle=solid](1,3){0.2}
\pscircle[fillstyle=solid](2,0){0.2}
\pscircle[fillstyle=solid,fillcolor=red](2,1){0.2}
\pscircle[fillstyle=solid](2,2){0.2}
\pscircle[fillstyle=solid](2,3){0.2}
\pscircle[fillstyle=solid](3,0){0.2}
\pscircle[fillstyle=solid](3,1){0.2}
\pscircle[fillstyle=solid](3,2){0.2}
\pscircle[fillstyle=solid](3,3){0.2}
\endpspicture
\hspace*{0.5cm}
\pspicture(-0.5,-0.5)(3.5,3.5)
\psline(-0.5,0)(3.5,0)
\psline(-0.5,1)(3.5,1)
\psline(-0.5,2)(3.5,2)
\psline(-0.5,3)(3.5,3)
\psline(0,-0.5)(0,3.5)
\psline(1,-0.5)(1,3.5)
\psline(2,-0.5)(2,3.5)
\psline(3,-0.5)(3,3.5)
\pscircle[fillstyle=solid](0,0){0.2}
\pscircle[fillstyle=solid,fillcolor=red](0,1){0.2}
\pscircle[fillstyle=solid](0,2){0.2}
\pscircle[fillstyle=solid](0,3){0.2}
\pscircle[fillstyle=solid,fillcolor=red](1,0){0.2}
\pscircle[fillstyle=solid,fillcolor=red](1,1){0.2}
\pscircle[fillstyle=solid,fillcolor=red](1,2){0.2}
\pscircle[fillstyle=solid](1,3){0.2}
\pscircle[fillstyle=solid,fillcolor=red](2,0){0.2}
\pscircle[fillstyle=solid,fillcolor=red](2,1){0.2}
\pscircle[fillstyle=solid,fillcolor=red](2,2){0.2}
\pscircle[fillstyle=solid](2,3){0.2}
\pscircle[fillstyle=solid](3,0){0.2}
\pscircle[fillstyle=solid,fillcolor=red](3,1){0.2}
\pscircle[fillstyle=solid](3,2){0.2}
\pscircle[fillstyle=solid](3,3){0.2}
\endpspicture
\hspace*{0.5cm}
\pspicture(-0.5,-0.5)(3.5,3.5)
\psline(-0.5,0)(3.5,0)
\psline(-0.5,1)(3.5,1)
\psline(-0.5,2)(3.5,2)
\psline(-0.5,3)(3.5,3)
\psline(0,-0.5)(0,3.5)
\psline(1,-0.5)(1,3.5)
\psline(2,-0.5)(2,3.5)
\psline(3,-0.5)(3,3.5)
\pscircle[fillstyle=solid,fillcolor=red](0,0){0.2}
\pscircle[fillstyle=solid,fillcolor=red](0,1){0.2}
\pscircle[fillstyle=solid,fillcolor=red](0,2){0.2}
\pscircle[fillstyle=solid](0,3){0.2}
\pscircle[fillstyle=solid,fillcolor=red](1,0){0.2}
\pscircle[fillstyle=solid,fillcolor=red](1,1){0.2}
\pscircle[fillstyle=solid,fillcolor=red](1,2){0.2}
\pscircle[fillstyle=solid,fillcolor=red](1,3){0.2}
\pscircle[fillstyle=solid,fillcolor=red](2,0){0.2}
\pscircle[fillstyle=solid,fillcolor=red](2,1){0.2}
\pscircle[fillstyle=solid,fillcolor=red](2,2){0.2}
\pscircle[fillstyle=solid,fillcolor=red](2,3){0.2}
\pscircle[fillstyle=solid,fillcolor=red](3,0){0.2}
\pscircle[fillstyle=solid,fillcolor=red](3,1){0.2}
\pscircle[fillstyle=solid,fillcolor=red](3,2){0.2}
\pscircle[fillstyle=solid](3,3){0.2}
\endpspicture
\hspace*{0.5cm} etc.
\end{center}
\caption{Example of an infinite trajectory with $\ZZ^2$ and the 1-neighborhood
  ($q=\frac{1}{4}$). The initial $X$ constantly grows.}
\label{fig:6}
\end{figure}


\begin{proposition}
Suppose $A$ is an anonymous and Boolean aggregation function. Then absorbing
classes are either singletons $\{X\}$, where $X\in 2^{\scX}$, or cycles (periodic
trajectories) of nonempty sets $\{X_1,\ldots,X_k\}$ with the condition that all
sets are pairwise incomparable by inclusion.
\end{proposition}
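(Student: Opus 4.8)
The plan is to exploit two features special to the Boolean case: the dynamics is a single deterministic map, and that map is monotone for inclusion. Writing $\widetilde q$ for the threshold of~(\ref{eq:5a}), the one-step transition is the deterministic map $F\colon 2^\scX\to 2^\scX$ given by~(\ref{eq:6}), so that $K(X,\cdot)=\delta_{F(X)}$; hence a set $\cA\in\cT$ is absorbing iff $F(\cA)\subseteq\cA$, and a trajectory is just an orbit $X,F(X),F^2(X),\dots$. The key observation is that $F$ is \emph{monotone}: if $X\subseteq X'$ then $|\Gamma(x)\cap X|\le|\Gamma(x)\cap X'|$ for every $x$, so $x\in F(X)\Rightarrow x\in F(X')$, i.e.\ $F(X)\subseteq F(X')$ (equivalently, this is immediate from $A$ being nondecreasing in each coordinate). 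This monotonicity is the one ingredient that does the real work.

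First I would show that an absorbing class is a finite periodic orbit of $F$. As in the finite-state case, an absorbing class is a minimal nonempty absorbing set; in the present deterministic setting this also matches the notion of a $\phi$-irreducible class, since any finite orbit is $\phi$-irreducible for the uniform probability on its points. So let $\cA$ be a minimal nonempty absorbing set and pick $X_0\in\cA$. Its forward orbit $\cO(X_0):=\{F^n(X_0)\mymid n\ge 0\}$ lies in $\cA$ (as $\cA$ is absorbing) and is itself absorbing, so minimality gives $\cA=\cO(X_0)$. Applying this to $F(X_0)\in\cA$ yields $\cA=\cO(F(X_0))=\{F^n(X_0)\mymid n\ge 1\}$, whence $X_0=F^m(X_0)$ for some $m\ge 1$; letting $k$ be the least such $m$, we get $\cA=\{X_1,\dots,X_k\}$ with $X_i:=F^{i-1}(X_0)$ pairwise distinct and $F(X_i)=X_{i+1}$ (indices modulo $k$). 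If $k=1$ this is a singleton $\{X\}$ with $X$ a fixed point; if $k\ge 2$ it is a cycle of length $k$.

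Next, the members of a cycle are nonempty. By Proposition~\ref{pro:absorbing_partiel}, both $\varnothing$ and $\scX$ are fixed points of $F$; so if some $X_i$ in a cycle of length $k\ge 2$ equalled $\varnothing$ (or $\scX$), then $X_{i+1}=F(X_i)=X_i$, contradicting pairwise distinctness. Hence every member of a cycle is nonempty (and, incidentally, $\ne\scX$).

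Finally -- and this is the heart of the matter, where monotonicity is used -- I claim the $X_i$ in a cycle of length $k\ge 2$ are pairwise incomparable for inclusion. Suppose not, say $X_a\subseteq X_b$ with $a\ne b$, and set $d:=(b-a)\bmod k\in\{1,\dots,k-1\}$, so $X_b=F^d(X_a)$. Applying $F^d$ repeatedly to $X_a\subseteq F^d(X_a)$ and using monotonicity produces the increasing chain $X_a\subseteq F^d(X_a)\subseteq F^{2d}(X_a)\subseteq\cdots$. But the orbit is $k$-periodic, so $F^{jd}(X_a)=X_a$ whenever $k\mid jd$, e.g.\ for $j=k/\gcd(d,k)$; hence the chain closes up, $X_a\subseteq F^d(X_a)\subseteq\cdots\subseteq F^{jd}(X_a)=X_a$, forcing all inclusions to be equalities. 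In particular $X_b=F^d(X_a)=X_a$, contradicting $a\ne b$. This proves incomparability and completes the classification. I expect the only delicate point to be the reduction in the second step -- identifying an ``absorbing class'' with a genuine finite cycle rather than an arbitrary forward-invariant set (which could carry transient tails) -- which works because in a deterministic system the set reachable from a state is exactly its orbit, so that $\phi$-irreducibility collapses the class onto a periodic orbit; a converse also holds, since every finite orbit of $F$ is absorbing and $\phi$-irreducible, so the two listed types are precisely the absorbing classes.
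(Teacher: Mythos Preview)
Your proof is correct and in fact more complete than the paper's own argument. The paper's proof is very brief: it observes that the Boolean dynamics is deterministic, performs the trichotomy (fixed point / periodic trajectory / aperiodic infinite trajectory), and rules out the third option by noting that in an aperiodic orbit there is no transition from $X_{k+1}$ back to $X_k$, so such an orbit cannot be a class. That is all the paper does --- it never actually proves the pairwise-incomparability clause, nor the nonemptiness of the members of a cycle, both of which appear in the statement.

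Your argument covers these missing pieces and therefore takes a somewhat different route. For the classification part you argue via minimality (the orbit of $F(X_0)$ must again equal the class, forcing periodicity), which is a tidy variant of the paper's ``no backward transition'' remark. The substantive addition is your use of monotonicity of $F$: from $X_a\subseteq X_b=F^d(X_a)$ you iterate $F^d$ to obtain an increasing chain that periodicity forces to collapse, yielding $X_a=X_b$. The paper neither invokes monotonicity nor addresses incomparability at all, so this is a genuine extra ingredient on your side; it is also the right one, since monotonicity is exactly what anonymity plus nondecreasingness of $A$ gives you. Your nonemptiness argument (via $\varnothing$ and $\scX$ being fixed points) is likewise absent from the paper's proof. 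The one point you flag yourself --- identifying ``absorbing class'' with a minimal absorbing set / $\phi$-irreducible class in the deterministic setting --- is handled adequately, and the paper is no more precise about this than you are.
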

\begin{proof}
Every transition being with probability 1, the evolution is a deterministic trajectory, hence a sequence of configurations $X_0,X_1,\ldots,X_k,\ldots$. Note that if a repetition occurs, then the trajectory is periodic. The following cases are exhaustive and exclusive:
\begin{enumerate}
\item Fixed points, $X_{k+1}=X_k$ for $k\geq K$ exist, and are absorbing states.
\item Cycles (periodic trajectories) exist, and are periodic absorbing classes.
\item Otherwise, the trajectory does not have any repetition and is
  infinite. But this cannot be a class, because there would be no transition
  from state $X_{k+1}$ to $X_k$.
\end{enumerate} 
\end{proof}

 The results show that, contrarily to the probabilistic model with strict
  aggregation functions, polarization might arise: there could exist
  finite or infinite sets of agents being active or inactive forever, provided
  these sets satisfy some properties regarding their ``shape''. Cycles may also occur.

\section{Concluding remarks}\label{sec:conclusion}

Our main question was how the diffusion evolves from a finite set of active agents and more generally from any set of active agents. The main finding is that the answer depends on the aggregation function for some properties and on the structure of the network for others. We first showed that the transience/persistence of a state relies only on the type of diffusion
mechanism, which in our model amounts to the mechanism of aggregating the statuses of the neighbors, without any further restriction on the network. On the contrary, the possibility to go from one configuration to another one inside a class (irreducibility) is closely related to the structure of the graph. We provide a mild sufficient condition on the structure, called richness, which permits to obtain irreducibility. 

Among diffusion mechanisms, we clearly establish a distinction between the probabilistic and the deterministic mechanism, the latter being nothing other than the threshold model, studied by \cite{mor00}. In the former, we have supposed in our analysis that the probability of being active strictly increases with the number of active neighbors. Under this assumption, we have shown that no polarization can occur, even in the weak sense: no set of  active agents can remain stable, even if we allow some variation around this set. On the contrary, the diffusion is erratic and homogeneous on the whole   network, and does not fix on some peculiar region of it. By contrast, the deterministic model allows the appearance of stable finite or infinite sets of
  active/inactive agents, that is, polarization can appear, and under many  different forms.

\vspace{5mm}

The next step is to study nonstrict nonBoolean aggregation function, which reveals to be pretty tricky. To highlight the difficulties, let us  assume for simplicity that the aggregation function $A$ is anonymous. $A$
being nonstrict and nonBoolean implies that there exist
$\ell,r\in\{0,\ldots,\gamma-1\}$ such that
\begin{equation}\label{eq:4}
A(1_S)=\begin{cases}
0 & \text{iff }|S|\leq \ell\\
1 & \text{iff }|S|\geq \gamma-r,
\end{cases}
\end{equation}
with $\ell\vee r>0$ and $\ell+r<\gamma$. This case is more complex than the two previous ones. Indeed, it has the aspect both of the Boolean case and of the strict aggregation case.
For example, take $\ZZ^2$ with the 1-neighborhood and the following anonymous aggregation function
\begin{equation}\label{eq:5b}
A(1_S)=\begin{cases}
0 & \text{iff }|S|=0\\
1/2 & \text{iff }|S|=1,\\
1 & \text{iff }|S|\geq 2.
\end{cases}
\end{equation}

We see an asymmetry between the role of $0$ and $1$.

Proposition \ref{persistence} that focuses on the persistence of the number of elements is only partially true here: the two first bullet points concerning the case with no active agents or the case with a finite number of active agents are still valid. On the contrary, the third one concerning the case with an infinite number of agents is not true anymore, as shown by the following configuration. Let $\omega$ be such that
\[
\forall (n,m)\in \ZZ^2,\ \omega(n,m)=0 \text{ if and only if } n+m\in 10\ZZ.
\]
The transition is deterministic and leads to $\textbf{1}$ since the inactive agents are all isolated. Proposition \ref{disappearing} that proves the possibility for a finite number of active agents to disappear is not true anymore, since a square of active agents will never disappear. On the contrary, with positive probability a finite set of inactive agents may disappear. Lemma \ref{placement} that studies the possibility for a trajectory of diffusion to exist between two configurations seems to be true for active agents but not true for inactive agents.\\

This makes the study of nonstrict nonBoolean aggregation function pretty tricky, and we leave it for further research. It will need to disentangle the condition for each of the lemmas developed for strict aggregation functions and boolean aggregation functions.

\section*{Acknowledgment} The authors thank Matthew Jackson, Ron Peretz 
  and Marcus Pivato for fruitful discussions.

\section*{Funding}
M.~Grabisch and A.~Rusinowska acknowledge the financial support by European
Union’s Horizon 2020 research and innovation programme under the Marie
Sk{\l}odowska-Curie grant agreement No 721846, “Expectations and Social
Influence Dynamics in Economics (ExSIDE), and the National Agency for Research,
Project DynaMITE (ANR-13-BSH1-0010-01). X.~Venel acknowledges the financial
support by the National Agency for Research, Project CIGNE
(ANR-15-CE38-0007-01).

\section*{Conflict of interest}
The authors declare that they have no conflict of interest.


\bibliographystyle{plainnat}
\bibliography{references,rusinowska}

\appendix

\section{Proof of intermediate results}\label{appA}

\paragraph{Proof of Lemma~\ref{lem:kernel}.}

 By construction, $K(\omega,\cdot)$ satisfies the first hypothesis of a
 kernel. We now check that given $\mathcal{A} \in \mathcal{T}$, the function
 $K(\cdot,\mathcal{A})$ is measurable. We first consider a set $\mathcal{A}$ in
 the basis of $\mathcal{T}$. Hence, it is a finite cylinder, i.e., there exist
 two finite disjoint sets $X,Y$ such that $\mathcal{A}=(X,Y)^+$. Then we know
 that $K(\omega,(X,Y)^+)$ depends only on the restriction of $\omega$ to agents
 in $\overline{X\cup Y}$, hence on a finite set of agents. By definition of the
 product topology, the function is continuous and therefore measurable. Let
\[
\mathcal{H}=\{\mathcal{A} \in \mathcal{T}, \ \omega\rightarrow K(\omega,\mathcal{A}) \text{ is measurable} \} \subset \mathcal{T}
\]
be the set of sets in $\mathcal{T}$ such that the mapping is measurable. Then $\mathcal{H}$ is a $\sigma$-field since:
\begin{itemize}
\item[-] it is nonempty: $\varnothing \in \mathcal{H}$;
\item[-] it is stable under complementation since $K(\omega,\overline{\mathcal{A}})=1-K(\omega,\mathcal{A})$;
\item[-] it is stable under countable unions: Let $(\mathcal{A}_n)_{n\in \NN}$, then
\begin{align*}
K(\omega, \cup_{n\in \NN} \mathcal{A}_{n}) &= \sum_{n \in \NN} K(\omega,\mathcal{A}_n).
\end{align*}
The countable sum of measurable mappings is also measurable.
\end{itemize}
Since $\mathcal{H}$ contains the finite cylinders, it follows that
$\mathcal{T} \subset \mathcal{H}$ and we have the equality: for every $\mathcal{A}
\in \mathcal{T}$, the function $K(\cdot,\mathcal{A})$ is measurable.

\paragraph{Proof of Proposition~\ref{pro:3}.}

We first observe that $\intX=\cloX\neq X$ is equivalent to the two conditions:
\begin{itemize}
\item Any $x\in\scX$ has all its neighbors in $X$ or none of them.\hfill $(*)$
\item Either there exists $x\not\in X$ which has (all) neighbors in $X$ or there
  exists $x\in X$ with no neighbor in $X$. \hfill $(**)$
\end{itemize}
\begin{enumerate}
\item The two conditions $(*)$ and $(**)$ are invariant to the change
  $X\rightarrow X^c$.
\item Assume by $(**)$ that there exists $x\in X$ with no neighbor in $X$ (the
  other case works by (*)). We claim that $X\cap \cloX=\varnothing$. Indeed,
  suppose not and consider $y\in X\cap \cloX$. By connectedness, there exists a
  finite path $x_1=x,x_2,\ldots,x_k=y$ connecting $x$ to $y$. However,
  $x_2\in\Gamma(x_1)$, therefore $x_2\in X^c$. By symmetry of $\sim$,
  $\Gamma(x_2)\ni x_1$, which by (*) implies that $\Gamma(x_2)\subseteq X$,
  i.e., $x_2\in \cloX$. Therefore, $x_3\in X$, but $x_3\not\in \cloX$ because by
  symmetry $X^c\ni x_2\in \Gamma(x_3)$. Hence, $x_\ell$ with odd index $\ell$
  belongs to $X\cap (\cloX)^c$ while those with even indices belong to $X^c\cap
  \cloX$. Therefore, $y$ can never be reached, a contradiction.

By the same argument, it follows that
$X^c\cap(\cloX)^c=\varnothing$. Consequently, $X^c=\cloX$.
\end{enumerate}

\paragraph{Proof of Proposition~\ref{no_star_graph}.}
 The ``if'' part is obvious. As for the ``only if'' part, let us first assume
 that there exists a double ray in the graph, i.e., there exists a sequence
 $(x_m)_{m\in \ZZ}$ in $\scX^{\ZZ}$ such that all nodes are different and for
 every $n \in \ZZ$, $\{x_{n-1},x_{n+1}\} \subseteq \Gamma(x_n)$.

Moreover, since any node on the double ray admits already two branches of size $2$, the absence of complex star implies that:
\begin{itemize}
\item any additional node linked to the ray has no other neighbor. We will say that it forms an antenna.
\item for every distinct $i,j\in \ZZ$ which are not consecutive there is
  no link between $x_i$ and $x_j$, since similarly this would generate a complex
  star with $x_i$ as a center. Therefore, $\Gamma(x_n)\cap (x_m)_{m\in
      \ZZ}=\{x_{n-1},x_{n+1}\}$. 
\end{itemize}
Hence, we obtain the second case in our characterization of networks without complex stars.

\smallskip

Let us suppose now that there exists no double ray. Since $(\scX,\sim)$ is an
infinite connected graph with bounded degree, we know by Proposition 8.2.1. from
\cite{die05} that there exists a ray, i.e., there exists $(x_i)_{i\geq 0}$ such
that $x_1\in \Gamma(x_0)$ and
\[
\forall i\geq 1, \{x_{i-1},x_{i+1}\} \subseteq \Gamma(x_i).
\]
The preceding reasoning on additional nodes can be applied for every $i\geq 2$
(it is unknown if $x_0$ or $x_1$ have more than one neighbors): any node
not in $(x_i)_{i\geq 0}$ and connected to $(x_i)_{i\geq 2}$ has only one neighbor.

\medskip

Denote by $\scX_1=\cup_{i\geq 2}(\{x_i\} \cup \Gamma(x_i))$ and consider the
graph with the set of nodes $\scX_2=(\scX \setminus\scX_1)\cup\{x_1\}$ and the relation $\sim$. This new graph is still connected. Moreover, it is finite,
otherwise there would exist a ray which could be connected to the previous ray
to generate a double ray. Let us consider a node $x'$ at maximal distance from
$x_1$ in the subgraph $(\scX_2,\sim)$. We define the new sequence $(x'_n)_{n\geq
  0}$ with $x'_0=x'$, made from the path from $x'$ to $x_1$ concatenated with the ray $(x_n)_{n\geq 1}$.
\begin{itemize}
\item The reasoning on antennas can now be applied to any node in $\{x'_n,n\geq
  2\}$.
\item  Moreover, every neighbor of $x'_1$ (except $x'_2$) has only
  $x'_1$ as neighbor, otherwise $x'$ would not be a farthest node from $x_1$.
\item Finally, there cannot be any internal link in the ray $(x'_n)_{n\geq 0}$,
  since any such link has at least one extremity in $\{x'_n,n\geq 2\}$ and would
  create a complex star.
\end{itemize}

\paragraph{Proof of Proposition~\ref{prop:kreg}.}
  Due to the previous discussion, it suffices to prove that for any partial
  configuration $(X,Y)$ on $\scX_o$ (or $\scX_e$), the bipartite graph $(X\cup
  Y,\overline{X\cup Y},\sim)$ admits a matching of $X\cup Y$. We consider first
  the case of a $k$-regular graph. Observe that by assumption, the degree of
  each node of $X\cup Y$ in the bipartite graph is $k$, while the degree of each
  node of $\overline{X\cup Y}$ in the bipartite graph is at most $k$ (some of
  these nodes have necessarily neighbors outside $X\cup Y$).  Consider
  $S\subseteq X\cup Y$ and the set of neighbors of $S$, denoted by
  $\Gamma(S):=\bigcup_{x\in S}\Gamma(x)$. By the above observation, the number
  of edges from $S$ to $\Gamma(S)$ is $k|S|$, and the total number of edges
  arriving at $\Gamma(S)$ is at most $k|\Gamma(S)|$. Therefore $|\Gamma(S)|\geq
  |S|$. Then, by Hall's Theorem (see, e.g., \cite[Th. 2.1.2]{die05}), a matching
  of $X\cup Y$ exists.

  Suppose now that $(\scX,\sim)$ is a hierarchy. We know that $\scX_o$ is the
  set of nodes in the odd layers of the hierarchy. Take $x\in X\cup Y$. It
  suffices to define $\theta(x)$ as one of the subordinates of $x$,    which lies on the next even layer. This will form
  a matching  of $X\cup Y$ as two distinct nodes in $X\cup Y$ have
    necessarily disjoint sets of subordinates.

\section{Proof of Theorem~\ref{absorbing-transient}}\label{appB}
We have only to prove (\ref{null}) and
(\ref{transient}). In Section \ref{sec:pers}, we focus on the ``number" of
inactive/active nodes on even/odd positions and prove that the sets containing
only a null or infinite number of each element are absorbing. Then, in
Section~\ref{sec:trans}, we prove that the sets containing finitely many
elements are transient.

\subsection{Persistence of the ``number'' of elements}\label{sec:pers}

The following proposition states that the ``number'' of active agents on even
(respectively odd) positions yields the same ``number'' of active agents on odd
(respectively even) positions at the next stage. We establish the result for the
case of active agents on even positions. The same result holds if active is replaced by inactive and/or $\mathcal{X}_e$ by $\mathcal{X}_o$.

\begin{proposition}\label{persistence}
For every $\omega \in \Omega$, let $\omega'$ be a configuration randomly chosen
according to the kernel $K(\omega,\cdot)$. Then:
\begin{itemize}
\item if $\{x \in \mathcal{X}_e\mid \omega(x)=1\}$ is empty then $\omega'$ has almost surely no  active agents on odd positions;
\item if $\{x \in \mathcal{X}_e\mid \omega(x)=1\}$ is finite then $\omega'$ has almost surely a finite (maybe zero) number of active agents on odd positions;
\item if $\{x \in \mathcal{X}_e\mid \omega(x)=1\}$ is infinite then $\omega'$ has almost surely an infinite number of active agents on odd positions.
\end{itemize}
\end{proposition}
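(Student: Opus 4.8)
The plan is to exploit bipartiteness: for an odd node $y\in\scX_o$ we have $\Gamma(y)\subseteq\scX_e$, so the probability $P(y\mid\omega)=A(\pi_{\Gamma(y)}(\omega))$ that $y$ is active in $\omega'$ depends only on the restriction of $\omega$ to $\scX_e$. Set $E=\{x\in\scX_e\mid\omega(x)=1\}$ and let $\overline{E}=\{y\mid\Gamma(y)\cap E\neq\varnothing\}$ be its closure; by symmetry of $\sim$ and bipartiteness, $\overline{E}=\bigcup_{x\in E}\Gamma(x)\subseteq\scX_o$. If $y\notin\overline{E}$ then $\pi_{\Gamma(y)}(\omega)$ is the all-zeros vector, so $P(y\mid\omega)=0$; if $y\in\overline{E}$ then $\pi_{\Gamma(y)}(\omega)$ is a $0$-$1$ vector with at least one coordinate equal to $1$, and since $A$ is strict, $P(y\mid\omega)>0$.

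Two elementary size estimates follow from boundedness of the degree. If $E$ is finite, then $|\overline{E}|\le\sum_{x\in E}|\Gamma(x)|\le\gamma|E|<\infty$. If $E$ is infinite, then $\overline{E}$ is infinite: connectedness gives every $x\in E$ a neighbour, which lies in $\overline{E}$, whence $E\subseteq\overline{\overline{E}}$ and therefore $|E|\le\gamma|\overline{E}|$, forcing $|\overline{E}|=\infty$. I also need a uniform lower bound: because a configuration is $\{0,1\}$-valued, $\pi_{\Gamma(y)}(\omega)$ always belongs to the \emph{finite} set of $0$-$1$ vectors of length at most $\gamma$, and strictness of $A$ makes $A$ strictly positive on each nonzero such vector; hence $\delta:=\min\{A(v)\mid v\in\{0,1\}^{\le\gamma},\ v\text{ nonzero}\}>0$ and $P(y\mid\omega)\ge\delta$ for every $y\in\overline{E}$.

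The three cases are now short. If $E=\varnothing$ then $\overline{E}=\varnothing$, so $P(y\mid\omega)=0$ for every $y\in\scX_o$, and the event ``some odd agent is active in $\omega'$'' is a countable union of null events, hence null. If $E$ is finite, then $P(y\mid\omega)=0$ for every $y\in\scX_o\setminus\overline{E}$, so by the same countable-union argument, almost surely the set of active odd agents of $\omega'$ is contained in the finite set $\overline{E}$. If $E$ is infinite, then the events $\{\omega'(y)=1\}$ for $y\in\overline{E}$ are independent under $K(\omega,\cdot)$ (independence across agents), and $\sum_{y\in\overline{E}}P(y\mid\omega)\ge\sum_{y\in\overline{E}}\delta=+\infty$ because $\overline{E}$ is infinite; by the second Borel--Cantelli lemma, almost surely infinitely many of these events occur, i.e.\ $\omega'$ has infinitely many active odd agents.

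The main obstacle is the infinite case: the easy remark that each $y\in\overline{E}$ is active with positive probability does not by itself yield ``infinitely many active almost surely'', since those probabilities could a priori be summable. The essential point is the uniform lower bound $\delta>0$ — this is exactly where strictness of $A$ (together with the finiteness of the range of $A$ on Boolean inputs) is used — which makes $\sum_y P(y\mid\omega)$ diverge, so that Borel--Cantelli II applies; everything else is bookkeeping with the closure operator and countable unions of null sets. The companion statements (active replaced by inactive, and/or $\scX_e$ by $\scX_o$) are obtained by the same argument, replacing $E$ by the relevant set, the relation $A(\text{all-zeros})=0$ by $A(\text{all-ones})=1$ in the inactive version, and $\delta$ by $1-\max\{A(v)\mid v\in\{0,1\}^{\le\gamma},\ v\text{ not all-ones}\}>0$.
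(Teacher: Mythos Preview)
Your proof is correct and follows the same overall strategy as the paper: use bipartiteness so that the status of odd nodes depends only on $\omega|_{\scX_e}$, bound the closure $\overline{E}$ via the degree bound $\gamma$, and exploit strictness of $A$ together with independence across agents.

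In the infinite case you take a slightly different and in fact cleaner route. The paper argues indirectly: if $\omega'$ had only finitely many active odd nodes, then infinitely many $x\in\Theta$ would have \emph{all} their neighbours inactive, and ``by independence'' this has probability $0$. But the events $\{\forall x'\in\Gamma(x),\ \omega'(x')=0\}$ for different $x\in\Theta$ are not independent when neighbourhoods overlap, so the paper's argument tacitly needs either an extraction of an infinite subfamily with disjoint neighbourhoods or a uniform bound. Your version sidesteps this by working directly on the odd side: the events $\{\omega'(y)=1\}$ for $y\in\overline{E}$ \emph{are} independent by construction of the kernel, and your uniform lower bound $\delta=\min\{A(v):v\in\{0,1\}^{\le\gamma},\ v\neq 0\}>0$ (which uses that $A$ is strict and that there are only finitely many Boolean input vectors) makes $\sum_{y\in\overline{E}}P(y\mid\omega)$ diverge, so the second Borel--Cantelli lemma applies cleanly. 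This is a genuine, if minor, improvement in rigour over the paper's presentation.
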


\begin{proof}
Denote by $\Theta=\{x \in \mathcal{X}_e\mid \omega(x)=1\}$ the set of active agents on an even position.

First, it is clear that if this set is empty, then every odd position is surrounded by only inactive agents and therefore is inactive with probability $1$ at the next stage. Hence, there is  almost surely no active node on odd positions in $\omega'$.

Secondly, let $x$ be an odd position. Since the aggregation function is strict,
the probability that $\omega'(x)$ is equal to $1$ is strictly positive, if and
only if, there exists a neighbor $x'$ of $x$ such that $\omega(x')=1$. The set
of neighbors of odd positions is the set of even positions and each agent has at
most $\gamma$ neighbors, therefore almost surely 
\[
|\{x \in \mathcal{X}_o \mid \omega'(x)=1\}| \leq \gamma |\{x\in \mathcal{X}_e \mid\omega(x)=1 
\}|.
\]
Hence, if the right-hand side is finite, so is the left-hand side.

Finally, we assume that $\{x \in \mathcal{X}_e\mid \omega(x)=1\}$ is infinite. We compute the probability that $\omega'$ has a finite number of active nodes on odd states. If this is the case, it implies that there exists an infinite number of elements  $x\in \Theta$ such that
\[
\forall x'\in \Gamma(x),\ \omega'(x')=0.
\]

But by assumption, $x'$ has at least one neighbor which is active, it follows that the probability that $\omega'(x')=0$ is strictly smaller than $1$. Since the probabilities are independent, it follows that the total probability of the event is $0$.
\end{proof}

 Theorem~\ref{absorbing-transient}.(\ref{null}) is then an immediate consequence of the above proposition.

\subsection{Transient set}\label{sec:trans}
We now prove that if there is a strictly positive but finite number of active agents at odd positions, then there is a positive probability to reach a configuration where there are no active agents anymore on odd positions. Similar results hold for inactive and/or for odd positions. This result can be expressed in terms of reachability.


\begin{lemma}\label{disappearing_lemma}
Let $X\subset \scX_o$.
\begin{itemize}
\item[-] If $X$ is a finite set, then $\int(\int (X))\subset X$.
\item[-] If $X$ is a co-finite set, then $X \subset \clo(\clo (X)))$.
\end{itemize}
\end{lemma}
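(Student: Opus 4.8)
\emph{Approach.} The plan is to prove the two bullets together, since they are dual to each other: for every $Y\subseteq\scX$ one has $\clo(Y)=(\int(Y^c))^c$, hence $\clo(\clo(X))=(\int(\int(X^c)))^c$, so the inclusion ``$X\subseteq\clo(\clo(X))$'' is exactly ``$\int(\int(X^c))\subseteq X^c$''. Thus the two inclusions are two readings of a single elementary fact, which I would isolate first: for \emph{any} $X$ in a connected network with at least two nodes, $\int(\int(X))\subseteq X$, and equivalently $X\subseteq\clo(\clo(X))$. Finiteness and co-finiteness are not needed for the inclusions themselves; they only serve to make the inclusions \emph{proper}, which is the form used afterwards to drive a finite group to disappearance.

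\emph{The inclusions.} Let $x\in\int(\int(X))$, so $\Gamma(x)\subseteq\int(X)$. Since $(\scX,\sim)$ is connected with $|\scX|\geq 2$, no node is isolated, so $\Gamma(x)\neq\varnothing$; pick $y\in\Gamma(x)$. Then $y\in\int(X)$, i.e. $\Gamma(y)\subseteq X$, and by symmetry of $\sim$ we have $x\in\Gamma(y)$, whence $x\in X$. This gives $\int(\int(X))\subseteq X$. The mirror statement $X\subseteq\clo(\clo(X))$ is just as quick: for $x\in X$ pick $y\in\Gamma(x)$; then $x\in\Gamma(y)\cap X$, so $y\in\clo(X)$, and then $y\in\Gamma(x)\cap\clo(X)$, so $x\in\clo(\clo(X))$. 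When $X\subseteq\scX_o$ these also record the obvious parity information, $\int(X)\subseteq\scX_e$ and $\int(\int(X))\subseteq\scX_o$ (and similarly for $\clo$), which is why the lemma is phrased on $\scX_o$.

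\emph{Strictness.} The single extra ingredient is: $\scX$ contains no nonempty set $A$ together with a set $B$ such that every neighbor of a node of $A$ lies in $B$ and every neighbor of a node of $B$ lies in $A$ \emph{with $A\cup B$ finite}; indeed such $A\cup B$ would be closed under taking neighbors, hence, being nonempty, equal to $\scX$ by connectedness, contradicting that $\scX$ is infinite. For the first bullet, suppose $X$ is finite and $\int(\int(X))=X$; then $\Gamma(x)\subseteq\int(X)$ for every $x\in X$ and $\Gamma(z)\subseteq X$ for every $z\in\int(X)$, so $A=X$, $B=\int(X)$ is such a pair, with $A\cup B$ finite (note $\int(X)$ is contained in the set of neighbors of $X$, hence finite); therefore $X=\varnothing$, and for $X\neq\varnothing$ the inclusion is proper. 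For the second bullet, set $F=\scX_o\setminus X$, which is finite by co-finiteness; a direct computation gives $\clo(X)=\scX_e\setminus G$ with $G=\{y\in\scX_e\mid\Gamma(y)\subseteq F\}$ finite, and then $\clo(\clo(X))=\scX_o\setminus H$ with $H=\{x\in\scX_o\mid\Gamma(x)\subseteq G\}$ finite. If $\clo(\clo(X))=X$ then $H=F$, so $A=F$, $B=G$ satisfies the closed-pair condition with $A\cup B$ finite, whence $F=\varnothing$; so for $X\neq\scX_o$ the inclusion is proper.

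\emph{Main obstacle.} There is essentially none: the whole content is that a double interior or a double closure ``washes out'' the set's own membership and only sees a two-step neighbour chase, so everything reduces to connectedness plus the symmetry of $\sim$. The one point requiring a little care is that ``co-finite'' is meant relative to $\scX_o$, so $X^c$ is \emph{not} finite and one cannot simply feed $X^c$ into the first bullet to obtain the strict half of the second; treating the second bullet directly as above sidesteps this, and the bookkeeping of the finite sets $G,H$ is the only routine computation involved.
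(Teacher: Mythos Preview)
Your proof is correct and follows essentially the same idea as the paper's: the two-step neighbour chase $x\to y\to x$ gives the inclusion, and connectedness of $(\scX,\sim)$ together with finiteness forces strictness. Two minor differences are worth noting. First, your inclusion argument is slightly cleaner than the paper's (which argues by contradiction and invokes $\int(X)\cap\scX_o=\varnothing$); you observe that $\int(\int(X))\subseteq X$ holds for \emph{every} $X$ in a connected graph with no isolated nodes, without using bipartiteness at this step. Second, the paper dispatches the co-finite case with ``immediately by complementation'', whereas you correctly spell out that complementation must be taken within $\scX_o$ (so that $F=\scX_o\setminus X$ is finite and the bipartite structure gives $\clo(\clo(X))=\scX_o\setminus\int(\int(F))$); your explicit computation of $G,H$ is exactly this. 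Your ``closed pair $A,B$ with $A\cup B$ finite'' device is just a repackaging of the paper's path argument for strictness, but it makes the parallel between the two bullets transparent.
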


\begin{proof}
We prove the first result. The second one can be deduced immediately by complementation.

\smallskip

Let us prove the result by contradiction. First, we know that $\int(\int(X))
\subset \scX_o$. Take $x\in \int(\int(X))$ and assume that $x\not\in X$. By
definition of $\scX_o$ and $\scX_e$, we know that $\int(X) \cap
\scX_o=\varnothing$, hence $\Gamma(x)\not\subseteq X$, i.e., there exists $y\in
\Gamma(x)$ s.t. $y\not\in X$. Since $x\in \int(\int(X))$, $\Gamma(x)\subseteq
\int(X)$ and therefore $y \in \int(X)$. It follows that $\Gamma(y)\subseteq
X$. However, by symmetry of $\sim$, we have $x\in \Gamma(y)$, so $x\in X$, a
contradiction. This proves that $\int(\intX)\subseteq X$.

\smallskip

Suppose that for all $x\in X$, $x\in\int(\int(X))$. It follows that for all $x\in X$, $\Gamma(x)\subseteq \int(X)$, hence
\begin{align}\label{chain}
\forall x\in X, \ \forall y\in \Gamma(x), \forall z \in \Gamma(y), z\in X.
\end{align}
Since $X$ is finite, pick some $t\in \scX_o \cap X^{c}$. By connectedness of $\sim$, there should exist a finite path $x_0=x,x_2,\ldots,x_{2k}=t$ from $x\in
X$ to $t \notin X$. However, no such path can exist by Equation (\ref{chain}).

Therefore, there exists $x\in X$ such that $x\not\in\int(\int(X))$. This proves that the inclusion is strict.
\end{proof}

We can immediately deduce from Lemma \ref{disappearing_lemma} the following proposition.

\begin{proposition}\label{disappearing}
For every $\omega \in \Omega$, if $\{x \in \mathcal{X}_o\mid \omega(x)=1\}$ is finite then the set of configurations with only $0$ on odd positions is reachable.
\end{proposition}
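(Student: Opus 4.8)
The plan is to use Lemma~\ref{disappearing_lemma} together with strictness of $A$ to construct an explicit trajectory of partial configurations (Definition~\ref{def:path}) that drives the active odd agents down to the empty set. Write $O_0:=\{x\in\scX_o\mid\omega(x)=1\}$, finite by hypothesis, and let $\cA:=(\varnothing,\scX_o)^+\in\cT$ be the target set of configurations with no active odd agent. If $O_0=\varnothing$ the claim is immediate, since then $\omega\in\cA$ (indeed every even, hence every odd, node is forced inactive, so $K^2(\omega,\cA)=1$); so assume $O_0\neq\varnothing$.

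The two structural facts I would rely on are: (i) because $(\scX,\sim)$ is bipartite, the status at time $t{+}1$ of a node of $\scX_o$ depends only on $\pi_{\scX_e}(\omega_t)$, and symmetrically; and (ii) by (\ref{eq:3}) and strictness, such a node is forced active if all of its neighbors are active, forced inactive if none of them is, and otherwise can be made either active or inactive with strictly positive probability. Using in addition that $\int(Z)\subseteq\clo(Z)$ is a finite subset of $\scX_e$ whenever $Z\subseteq\scX_o$ is finite (and the analogous statement with $\scX_e,\scX_o$ interchanged), I would steer the process so that, at the transition $2j{-}1\to 2j$, every coordinate of $\scX_o$ is pinned down, making the odd active set equal to $\int$ of the even active set at time $2j{-}1$, and, at the transition $2j\to 2j{+}1$, every coordinate of $\scX_e$ is pinned down, making the even active set equal to $\int$ of the odd active set at time $2j$. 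Concretely, I would take $\cA_1=\{\omega\}$ and, for $j\ge1$,
\[
\cA_{2j}=\bigl(\int^{2j-1}(O_0),\ \scX_e\setminus\int^{2j-1}(O_0)\bigr)^+,\qquad
\cA_{2j+1}=\bigl(\int^{2j}(O_0),\ \scX_o\setminus\int^{2j}(O_0)\bigr)^+,
\]
all of which lie in $\cT$. One then checks that $K(m,\cA_{i+1})>0$ for every $i$ and every $m\in\cA_i$: the cylinder $\cA_{i+1}$ constrains only the ``side'' of the coordinates (even or odd) whose next status is governed, by (i), precisely by the side of $m$ that $\cA_i$ already fixes; so $K(m,\cA_{i+1})$ does not depend on the free coordinates of $m$, and on the constrained side every coordinate is either forced to the prescribed value or lies in a finite ``frontier'' set on which (ii) gives each prescribed value positive probability, so the (finite) product of these positive numbers is positive. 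Thus $(\cA_1,\dots,\cA_n)$ is a trajectory from $\omega$ to $\cA_n$ for every $n$.

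It remains to make $\cA_n=\cA$. By Lemma~\ref{disappearing_lemma}, $\int^2(Z)\subsetneq Z$ for every nonempty finite $Z\subseteq\scX_o$; applying this repeatedly to $Z=\int^{2j}(O_0)$ shows that the decreasing sequence of finite sets $\bigl(\int^{2j}(O_0)\bigr)_{j\ge0}$ equals $\varnothing$ from some index $k\le|O_0|$ on. Then $\cA_{2k+1}=(\varnothing,\scX_o)^+=\cA$, so $(\cA_1,\dots,\cA_{2k+1})$ is a trajectory from $\omega$ to $\cA$, whence $\cA$ is reachable from $\omega$, as claimed. The step I expect to require the most care is the verification that $K(m,\cA_{i+1})>0$ uniformly in $m\in\cA_i$: one must argue that the uncontrolled half of the coordinates — which may carry an infinite active set when the initial even part of $\omega$ is large — is genuinely irrelevant to the next controlled transition, and that fixing infinitely many forced coordinates while keeping only a finite frontier of free coordinates still leaves positive probability; this is exactly where bipartiteness, finiteness of closures of finite sets, and strictness of $A$ are used.
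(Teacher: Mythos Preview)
Your proof is correct and follows essentially the same approach as the paper: both iterate the interior operator via Lemma~\ref{disappearing_lemma} to drive the odd active set to $\varnothing$, and both invoke strictness of $A$ to obtain positive transition probability along the resulting sequence. Your version is simply more explicit, spelling out the trajectory as cylinders that pin alternately the even and odd sides and verifying that only a finite frontier of free coordinates remains at each step, whereas the paper states this compactly as ``the probability to reach $\int^{2n+2}(X)$ from $\int^{2n}(X)$ is positive.''
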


\begin{proof}
As $X:=\{x \in \mathcal{X}_o\mid \omega(x)=1\}$ is finite, Lemma~\ref{disappearing_lemma} implies that the sequence $\int^2(X),
\int^4(X), \ldots$ converges to $\varnothing$ in a finite
number of steps. Since $A$ is a strict aggregation rule, for every $n\geq 0$, the probability to reach $\int^{2n+2}(X)$ from $\int^{2n}(X)$ is positive, hence we constructed a trajectory between $X$ and $\varnothing$ with even stages: hence a trajectory from a configuration in $(.,.,.,\F)$ to a configuration in $(.,.,.,0)$. 
\end{proof}

\begin{remark}
Let us stress out that this lemma does not say anything about the presence of active agents at an even stage on an even position. 
\end{remark}

We can prove in a similar way the same result for $X\subset \scX_e$. This yields the proof of  Theorem~\ref{absorbing-transient}.(\ref{transient}).

\section{Proof of Theorem~\ref{irreducibility}}\label{appC}
\noindent In order to prove Theorem \ref{irreducibility}, we establish the following two propositions. The theorem is then an immediate consequence.

\begin{proposition}\label{cylinder_odd}
Assume that $(\scX,\sim)$ satisfies the Richness Assumption. Let $\omega$ be a configuration such that there exist both $0$ and $1$ on an odd position. 
\begin{itemize}
\item[-] Let $X\subset \mathcal{X}_e$ and $Y\subset \mathcal{X}_e$ be two finite sets, then the cylinder   $(X,Y)^+$ is reachable from $\omega$ in an odd number of stages.
\item[-] Let $X\subset \mathcal{X}_o$ and $Y\subset \mathcal{X}_o$ be two finite sets, then the cylinder   $(X,Y)^+$ is reachable from $\omega$ in an even number of stages.
\end{itemize}
\end{proposition}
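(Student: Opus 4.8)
The plan is to build an explicit trajectory of partial configurations from $\omega$ to the target cylinder $(X,Y)^+$, exploiting the two ingredients of the Richness Assumption in two separate phases. We treat the first bullet (target in $\mathcal{X}_e$, reached in an odd number of stages); the second is symmetric, swapping the roles of $\mathcal{X}_o$ and $\mathcal{X}_e$ and adding one stage. First I would observe that, since $\omega$ has both a $0$ and a $1$ on odd positions and there are infinitely many complex stars, I can select finitely many complex stars $(s_*^j,s_1^j,s_2^j,s_3^j,{s'_1}^j,{s'_2}^j,{s'_3}^j)$, one devoted to each agent of $X\cup Y$, whose centers $s_*^j$ lie on odd positions and are pairwise ``far enough'' (all the nodes involved in the chosen stars are distinct, which is possible because the stars are infinitely many and the graph is locally finite). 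The key geometric fact is that a complex star lets one keep a prescribed status alive on its center across two time steps: if $s_*^j$ has status $b$ at time $t$, then by activating/deactivating the branch tips ${s'_i}^j$ appropriately (which is permitted because $A$ is strict, so \emph{any} admissible local configuration has positive probability by~(\ref{eq:3}) and the discussion following it) one can force $s_i^j$ to carry status $b$ at time $t+1$, and hence $s_*^j$ to carry status $b$ again at time $t+2$.

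The trajectory then proceeds in two phases. \textbf{Phase 1 (seeding and transport).} Starting from $\omega$, which has a $0$ and a $1$ on odd positions, I use one transition to create, on the odd positions, a partial configuration in which each complex-star center $s_*^j$ carries exactly the status that agent $j\in X\cup Y$ is required to have in the target (status $1$ if $j\in X$, status $0$ if $j\in Y$); this is possible because $s_*^j\in\mathcal{X}_o$ has a neighbor in $\mathcal{X}_e$ and the presence of both statuses somewhere gives, via connectedness and strictness, positive probability to prescribing finitely many odd-position values. Then I repeatedly apply the complex-star ``memory'' step above, two stages at a time, holding the prescribed statuses fixed at the centers $s_*^j$ while leaving everything else unconstrained, until I have used up the right parity budget. \textbf{Phase 2 (writing the target).} On the last pair of stages I no longer merely preserve the centers' statuses but \emph{route} them to the target agents: from the centers $s_*^j$ on odd positions I impose, on the even positions, exactly the partial configuration $(X,Y)$ together with whatever intermediate values are needed on the branch nodes $s_i^j$. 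Concretely, each $j\in X\cup Y$ lies on an even position and must have a neighbor on an odd position whose status at the previous stage determines it; the stored status at $s_*^j$ (propagated one more step through $s_i^j$, or directly if $j$ happens to be adjacent to a star branch) supplies that value. Here the storing function from the Richness Assumption is used to guarantee that the finitely many conflicting constraints ``$j$ active'' and ``$j'$ inactive'' can be realized simultaneously on disjoint sets of odd-position witnesses, so that the required odd-position partial configuration is genuinely admissible and, by strictness, has positive probability. Counting stages: one transition in Phase~1 to seed, an even number of transitions to adjust parity and to transport, and a final transition to write $(X,Y)^+$ — a total that is odd, as claimed.

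The main obstacle, and where the argument needs the most care, is Phase~2: ensuring that the accumulated constraints are mutually consistent as a genuine \emph{partial} configuration, i.e.\ that no odd-position node is simultaneously required to be $0$ (to force some $j'\in Y$) and $1$ (to force some $j\in X$). This is exactly the role of the storing function, whose existence is guaranteed by the Richness Assumption and which by Definition~\ref{saving}~(2) assigns disjoint witnesses $\theta(x)\neq\theta(y)$ to $x\in X$ and $y\in Y$; one must check that these witnesses can be taken among the branch structure already reserved, or else enlarge the finite collection of complex stars at the outset so that the storing witnesses do not collide with star nodes. A secondary bookkeeping point is that all the ``auxiliary'' active/inactive values introduced on branch tips during Phase~1 must themselves be achievable from the \emph{previous} stage's configuration — but since in Phase~1 we only ever constrain finitely many nodes and leave the rest free, and since $A$ is strict so that (\ref{eq:3}) gives positive probability to every admissible refinement on a finite cylinder, each step of the trajectory satisfies the condition $K(m,\mathcal{A}_{l+1})>0$ of Definition~\ref{def:path}. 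Once the trajectory of partial configurations is exhibited, reachability follows from the equivalence between the existence of a trajectory and reachability noted after Definition~\ref{def:path}.
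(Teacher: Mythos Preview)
Your proposal has a genuine gap in both phases: you never explain how values actually travel across the graph. In Phase~1 you claim that ``one transition'' suffices to place the desired status on each star center $s_*^j$, appealing to ``connectedness and strictness''. But after a single transition, the status at an odd node $s_*^j$ is determined solely by the statuses of its \emph{even} neighbors in $\omega$, about which the hypothesis says nothing (we only know there is a $0$ and a $1$ somewhere on \emph{odd} positions). Connectedness guarantees paths, not one-step influence; strictness only says that if some neighbor already has the right status then the node has positive probability of inheriting it. Getting a prescribed value from the initial active/inactive odd agents to a distant star center requires propagating along a path, one edge per time step, which takes as many stages as the path length --- this is exactly the content of Lemma~\ref{placement}, and it cannot be absorbed into a single sentence.

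The same gap reappears in Phase~2: your stars were chosen disjoint from $X\cup Y$, so the centers $s_*^j$ are typically far from the targets $j$. Your parenthetical ``propagated one more step through $s_i^j$, or directly if $j$ happens to be adjacent to a star branch'' does not cover the generic case where $j$ is many edges away. The paper handles this by a careful iterative procedure (Lemmas~\ref{placement_ordonne} and~\ref{placement_ordonne1}): it uses a \emph{single} complex star as a permanent reservoir holding both a $0$ and a $1$ (Lemma~\ref{startostar} shows the star can maintain and shuffle these internally), first brings the initial $0$ and $1$ onto that star via Lemma~\ref{centrage}, and then places the target values one node at a time in order of \emph{decreasing distance} to the star center, each placement being a multi-step propagation along a shortest path while the storing function preserves all previously fixed target values. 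Your scheme of one star per target followed by a single final ``write'' step never addresses how that last step can simultaneously realize constraints at arbitrary distances; the storing function you invoke guarantees non-collision of witnesses $\theta(x)\neq\theta(y)$, but not that those witnesses can be \emph{reached} with the correct values in one step. Finally, the parity bookkeeping (``an even number of transitions to adjust parity'') is incoherent as written: an even number of extra steps does not change parity.
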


\begin{proposition}\label{cylinder_even}
Assume that $(\scX,\sim)$ satisfies the Richness Assumption. Let $\omega$ be a configuration such that there exist both $0$ and $1$ on an even  position. 
\begin{itemize}
\item[-] Let $X\subset \mathcal{X}_e$ and $Y\subset \mathcal{X}_e$ be two finite sets, then the cylinder   $(X,Y)^+$ is reachable from $\omega$ in an even number of stages.
\item[-] Let $X\subset \mathcal{X}_o$ and $Y\subset \mathcal{X}_o$ be two finite sets, then the cylinder   $(X,Y)^+$ is reachable from $\omega$ in an odd number of stages.
\end{itemize}
\end{proposition}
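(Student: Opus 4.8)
The plan is to establish the first bullet of Proposition~\ref{cylinder_even} and to derive everything else from it. The second bullet follows by appending a single transition: to land in a finite cylinder $(X,Y)^+$ with $X,Y\subseteq\scX_o$ at an odd stage it suffices, by Equation~(\ref{eq:cert}) and strictness of $A$, to first reach at an even stage the finite $\scX_e$-partial configuration $(\theta(X),\theta(Y))$, where $\theta\colon\scX_o\to\scX_e$ is a storing function of the partial configuration $(X,Y)$ of $\scX_o$ (which exists by the Richness Assumption); the step from $(\theta(X),\theta(Y))^+$ to $(X,Y)^+$ is then valid, since each $x\in X$ has the active neighbour $\theta(x)$, each $y\in Y$ has the inactive neighbour $\theta(y)$, and $\theta(x)\neq\theta(y)$. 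Likewise, Proposition~\ref{cylinder_odd} is obtained from Proposition~\ref{cylinder_even} by interchanging the two blocks of the bipartition, since the dynamics and the Richness Assumption are symmetric in $\scX_e$ and $\scX_o$. Throughout, I would use the equivalence (Definition~\ref{def:path}) between reachability and the existence of a trajectory of partial configurations, together with the fact that, by strictness of $A$ and~(\ref{eq:cert}), a transition from a configuration with active set $Z$ enters a finite cylinder $(X',Y')^+$ with positive probability if and only if $X'\subseteq\clo(Z)$ and $Y'\cap\int(Z)=\varnothing$.

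So fix a target $(X,Y)$ with $X,Y\subseteq\scX_e$ finite and disjoint, and $\omega$ with an active even node $v$ and an inactive even node $u$; the trajectory is built by meeting in the middle. On the \emph{forward} side, one first checks that from $\omega$ the process reaches in one step a finite partial configuration of $\scX_o$ that is active at some $v'\in\Gamma(v)$ and inactive at some $u'\in\Gamma(u)$, with $v'\neq u'$ (the degenerate case where $v$ and $u$ are leaves attached to a common node is handled separately and is innocuous). From such a seed one then \emph{walks and splits tokens}: a partial configuration given by a finite active set $P$ and a finite inactive set $Q$ on one block may be replaced, at the next step, by any finite active set $P'$ and inactive set $Q'$ on the other block with $P'\subseteq\clo(P)$, $Q'\subseteq\clo(Q)$ and $P'\cap Q'=\varnothing$. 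Using connectedness and boundedness of $(\scX,\sim)$, the active cloud and the inactive cloud can be routed toward prescribed disjoint destinations while remaining disjoint and finite at every stage, so that after an even number of steps one produces a finite $\scX_e$-partial configuration of one's choosing. On the \emph{backward} side, one pulls $(X,Y)$ back one step at a time using storing functions supplied by the Richness Assumption: $(X,Y)$ on $\scX_e$ is a successor of the finite $\scX_o$-partial configuration $(\theta(X),\theta(Y))$, which is a successor of a finite $\scX_e$-partial configuration, and so on, so that after any even number of backward steps one reaches a finite $\scX_e$-partial configuration, which can be kept in a controllable region by choosing the $\theta$'s suitably. Splicing a long enough forward segment to a backward segment of matching parity then yields a trajectory from $\{\omega\}$ to $(X,Y)^+$ of even length.

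The one genuine difficulty — and the reason both clauses of the Richness Assumption are needed — is the interleaving of active and inactive parts when the target pattern demands an inactive even node whose relevant neighbourhood must be active (a ``hole''). Exactly as on the line $\ZZ$ in Example~\ref{counter_line}, a naive walk cannot create such a node: from the active set $\{a-1,a+1\}$ the node $a$ lies in the interior and is therefore forced active. The device that breaks this rigidity is a complex star: holding its third branch in the ``wrong'' status prevents its centre from being forced, so the offending token can be routed through the centre of a complex star; and since the Richness Assumption guarantees \emph{infinitely many} complex stars, one can always pick a fresh one disjoint from the region currently in use, infinitely many being required because the number of stars consumed grows with the target and with the workspace used to route the two clouds. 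The second condition, storability, is precisely what makes each backward pull-back realizable; its failure is the obstruction in Example~\ref{counter_collision}, where two nodes with a common unique neighbour can never carry distinct statuses. Thus the hard part is not any single step but the bookkeeping of the whole construction: synchronizing parities so that the $\scX_e$-pattern is attained at an even stage, keeping the active and inactive clouds disjoint and finite throughout, reserving a distinct complex star for each hole, and invoking a storing function at every backward step; Proposition~\ref{no_star_graph} is what guarantees that, away from the excluded line-like graphs, the routing room one needs is always there.
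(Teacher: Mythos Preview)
Your outline correctly isolates the two ingredients (complex stars and storing functions) and your reduction of the second bullet to the first via a single storing step is clean. But the heart of the argument---the ``walk and split tokens'' forward construction---is not a proof: you assert that the active and inactive clouds ``can be routed toward prescribed disjoint destinations while remaining disjoint and finite at every stage,'' and this assertion \emph{is} the proposition. Example~\ref{counter_line} shows exactly why naive routing fails, and your resolution (``route the offending token through the centre of a complex star,'' ``reserve a distinct complex star for each hole'') is not an algorithm; in particular it is unclear how you maintain already-placed statuses while routing the next token through a star, or why the clouds stay disjoint along the way. The backward/splicing device does not help: if the forward side can produce an arbitrary finite $\scX_e$-pattern, you are already done, and if it cannot, pulling back $(X,Y)$ by storing functions just produces another finite pattern with the same difficulty.

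The paper's construction supplies precisely the missing mechanism. It uses a \emph{single} complex star, chosen once and for all disjoint from $X\cup\overline{X}\cup Y\cup\overline{Y}$ (this is why infinitely many stars are assumed---not because many are ``consumed,'' but so that one exists outside any prescribed finite region). First the active and inactive seeds are routed to two branches of that star (Lemma~\ref{centrage}); Lemma~\ref{startostar} then shows the star can be reconfigured at will while never losing both statuses. From this base, the target statuses are placed \emph{one node at a time, in order of decreasing distance to the star's centre} (Lemmas~\ref{placement_ordonne} and~\ref{placement_ordonne1}): the propagation path from the star to the current target is a shortest path, hence it never touches nodes strictly farther from the centre, and those already-placed nodes are preserved across the step by the storing function of Lemma~\ref{global_saving}. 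This ordering is the key idea that turns your heuristic into a valid trajectory; without it (or an equivalent non-interference scheme) the argument has a genuine gap.
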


In Propositions \ref{cylinder_odd} and \ref{cylinder_even} we restricted ourselves to finite cylinders. We can extend these results by using the fact that $\Omega$ is a polish set. Let $\cA$ be such that $\phi(\cA)>0$. $\phi$ is a Borel measure and therefore is tight:
\[
\forall  \cA \in \mathcal{T},\ \phi(\cA)=\sup_{K \text{ compact }\subset A} \phi(K).
\]
It follows that there exists a closed set $K$ such  that $\phi(K)>0$. Any open set can be written as a countable union of finite cylinders, and therefore any closed set can be written as a countable intersection of complements of finite cylinders. However, the complement of a finite cylinder is a union of finite cylinders:
\[
((X,Y)^+)^c=\left(\bigcup_{x\in X} (\varnothing,\{x\})^+ \right) \cup \left(\bigcup_{y\in Y} (\{y\},\varnothing)^+ \right).
\]
Hence, a closed set can be written as a countable union of countable intersections of finite cylinders, and therefore one of these intersections has a non-zero measure under $\phi$: 
there exists $(X_n,Y_n)_{n\in \NN}$ a sequence of finite cylinders such that
\[
\phi \left(\bigcap_{n\in \NN} (X_n,Y_n)^+ \right) >0.
\]
This is only possible if $X_\infty=\bigcup_{n\in \NN} X_n$ and $Y_\infty=\bigcup_{n\in \NN} Y_n$ are finite and disjoint. In this case, we have
\[
\bigcap_{n\in \NN} (X_n,Y_n)^+=(X_\infty,Y_\infty)^+.
\]
$(X_\infty,Y_\infty)^+$ is reachable from $\omega$ and therefore $\cA$ is reachable from $\omega$.

In the next section, we will prove Proposition \ref{cylinder_odd}. Proposition \ref{cylinder_even} can be proven in a very similar way. The proof relies on constructing an explicit trajectory between the initial partial configuration $\omega$ to any other partial configuration $\omega'$ compatible with the condition in the lemma. In order to do so, the intuition is to fix the status of each agent in the network one by one. In doing so, one needs to be careful about not erasing a status that has been fixed before in the procedure.

\subsection{Proof of Proposition \ref{cylinder_odd}}

The exposition of the proof is divided into several parts. We first present two partial mechanisms of transitions. The first one is a storing mechanism of partial configuration conditionally on the existence of a storing function. The second one is the propagation of an active status or the propagation of an inactive status in the network. These two procedures are defined locally, i.e., on partial configurations.

Then, we will combine these partial mechanisms into a global well-defined procedure. Hence, it will induce a trajectory. The key point is to ensure that these elementary mechanisms are not conflicting each other by using the same nodes at the same time.

We restrict ourselves to what happens on $\mathcal{X}_o$. But any of the result
can be expressed on $\mathcal{X}_e$.  The bipartite structure of $(\scX,\sim)$ ensures that one can decompose the dynamic into two parts: the status of odd stages/odd positions and even stages/even positions on one side and the status of even stages/odd positions and odd stages/even positions on the other side.

\subsubsection{Storing and propagation}
We begin by remarking that one can store the status of a node on one of its neighbor. Let $\omega \in \Omega$ be a configuration such that agent
$x$ in $\mathcal{X}_o$ is active, and $x'$ be one of its neighbors. The
following sequence of cylinders does the job:
\begin{itemize}
\item for every $l\in 2\NN$, $\cA_l=(\{x\},\varnothing)^+$,
\item for every $l\in 2\NN+1$, $\cA_l=(\{x'\},\varnothing)^+$.
\end{itemize}

The next mechanism for storing partial configurations is merely a generalization of the previous idea.

\begin{lemma}\label{global_saving}
If a partial configuration $(X,Y)$ of $\mathcal{X}_o$ can be stored then for every $n\in 2\NN$,
\[
\forall \omega \in (X,Y)^+,\ K^n(\omega,(X,Y)^+)>0.
\]
\end{lemma}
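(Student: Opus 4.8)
Generalizing the single‑node storing mechanism described just above, the plan is to move back and forth between $(X,Y)^+$ and a companion cylinder living on $\mathcal{X}_e$. First I would fix a storing function $\theta\colon\mathcal{X}_o\to\mathcal{X}_e$ for $(X,Y)$ and set $\cB_o=(X,Y)^+$ and $\cB_e=(\theta(X),\theta(Y))^+$; the latter is a genuine cylinder because $\theta(X)\cap\theta(Y)=\varnothing$ by clause~2 of Definition~\ref{saving}, and it is a \emph{finite} cylinder since $X$ and $Y$ are finite. The whole argument rests on the following two‑sided positivity claim:
\[
\forall\,\omega\in\cB_o,\ K(\omega,\cB_e)>0
\qquad\text{and}\qquad
\forall\,\omega\in\cB_e,\ K(\omega,\cB_o)>0 .
\]

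I would prove the left‑hand statement; the right‑hand one follows by the same argument after exchanging the roles of $(X,Y)$ and $(\theta(X),\theta(Y))$ and using $X\cap Y=\varnothing$ in place of $\theta(X)\cap\theta(Y)=\varnothing$. Let $\omega\in\cB_o$. For $z\in\theta(X)$, pick $x\in X$ with $\theta(x)=z$; by symmetry of $\sim$ one has $x\in\Gamma(z)$, and $\omega(x)=1$, so $\pi_{\Gamma(z)}(\omega)$ is not a $0$‑vector and strictness of $A$ gives $P(z\mid\omega)=A(\pi_{\Gamma(z)}(\omega))>0$. Symmetrically, for $z\in\theta(Y)$ some $y\in Y$ with $\omega(y)=0$ lies in $\Gamma(z)$, so $\pi_{\Gamma(z)}(\omega)$ is not a $1$‑vector and $P(z\mid\omega)<1$. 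Since $\theta(X)$ and $\theta(Y)$ are disjoint and, conditionally on $\omega$, the next statuses are independent across agents, formula~(\ref{eq:5}) yields
\[
K(\omega,\cB_e)=\mu_{\theta(X)\cup\theta(Y),\,\omega}(\{\theta(X)\})
=\prod_{z\in\theta(X)}P(z\mid\omega)\;\prod_{z\in\theta(Y)}\bigl(1-P(z\mid\omega)\bigr)>0,
\]
the strict inequality holding because these are finite products of strictly positive numbers.

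To conclude, the case $n=0$ is immediate since $K^0(\omega,\cdot)$ is the Dirac mass at $\omega\in(X,Y)^+$. For an even $n\geq 2$, I would use the trajectory of partial configurations $(\cA_1,\dots,\cA_{n+1})$ with $\cA_1=\{\omega\}$, $\cA_l=\cB_e$ for even $l$, and $\cA_l=\cB_o$ for odd $l\geq 3$; since $n$ is even, $\cA_{n+1}=\cB_o=(X,Y)^+$, so the parity matches exactly. By the claim, $K(m,\cA_{l+1})>0$ for every $m\in\cA_l$ and every $l\leq n$; moreover each $\cA_l$ is a finite cylinder, so $K(m,\cA_{l+1})$ depends only on finitely many coordinates of $m$ and hence is bounded below on $\cA_l$ by some $c_l>0$. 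An induction on $l$ using the Chapman--Kolmogorov bound $K^{l}(\omega,\cA_{l+1})\geq K^{l-1}(\omega,\cA_l)\,\inf_{m\in\cA_l}K(m,\cA_{l+1})$ then gives $K^{n}(\omega,(X,Y)^+)\geq\prod_{l=1}^{n}c_l>0$, with $c_1=K(\omega,\cB_e)$.

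The two positivity computations are routine; the step deserving care is the last one, namely upgrading ``each transition of the trajectory has positive probability'' to ``the $n$‑step kernel is positive''. This is exactly where the finiteness of $X\cup Y$ is used: it makes every $\cA_l$ a finite cylinder, so the relevant kernels take only finitely many values and are uniformly bounded away from $0$ on each $\cA_l$, whereas for infinite cylinders the products displayed above need not converge to a positive limit.
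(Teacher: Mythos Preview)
Your proof is correct and follows essentially the same route as the paper: both define the alternating sequence of cylinders $(X,Y)^+$ and $(\theta(X),\theta(Y))^+$ and check that each one‑step transition has positive probability. Your write‑up is more explicit on two points the paper leaves implicit---the use of strictness of $A$ to make each factor in the product strictly between $0$ and $1$, and the Chapman--Kolmogorov bound with uniform positivity on finite cylinders to pass from one‑step to $n$‑step positivity---so if anything you have filled in details the paper sweeps into its trajectory formalism.
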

 
The procedure used in the proof will be called ``storing a partial configuration in the global procedure''.

\begin{proof}
We assume that the finite configuration $(X,Y)$ of $\scX_o$ can be stored. Hence, there exists $\theta$ such that for all $x\in X\cup Y$, $\theta(x) \in \Gamma(x)$ and for all $x \in X$, for all $y\in Y$, $\theta(x) \neq \theta(y)$. Define $\theta(X)=\{\theta(x),\ x\in X\}$ and $\theta(Y)=\{\theta(y),\ y\in Y\}$
We consider the following sequence of cylinders:
\begin{itemize}
\item for every $l\in 2\NN$, $\cA_l=(X,Y)^+$,
\item for every $l\in 2\NN+1$, $\cA_l=(\theta(X),\theta(Y))^+$.
\end{itemize}
By construction, $\theta(X) \subseteq \overline{X}$ and $\theta(Y) \subseteq \overline{Y}$. Moreover, by symmetry of the neighborhood relation, $X \subseteq \overline{\theta(X)}$ and $Y \subseteq \overline{\theta(Y)}$. By definition of $\theta$ we know that $\theta(X)\cap \theta(Y) =\varnothing$. Hence, this sequence is well defined and 
\[
\forall m\in \cA_l, \ K(m,\cA_{l+1})>0.
\]
Moreover, this can be ensured by only imposing restriction on the status of agents in $X\cup Y$ at even stages and in $\overline{X \cup Y}$ at odd stages.
\end{proof}

Several storing procedures can be run at the same time if they operate on disjoint sets of nodes with disjoint neighborhood.

\medskip

We now turn to the problem of propagating the status of a node in a partial configuration to any node in the network. This procedure will be called ``propagating in the global procedure''.

\begin{lemma}\label{placement}
Let $(X,Y)$ be a partial configuration of $\scX_o$ such that $X$ is not
empty. Then, for any $\omega\in(X,Y)^+$ and for every $x' \in \scX$,
\begin{itemize}
\item if $x' \in \mathcal{X}_e$  then there exists $n\in 2\NN$ such that
\[
K^n(\omega,(\{x'\},\varnothing)^+)>0,
\]
\item if $x' \in \mathcal{X}_o$ then there exists $n\in 2\NN+1$ such that
\[
K^n(\omega,(\{x'\},\varnothing)^+)>0.
\]
\end{itemize}
\end{lemma}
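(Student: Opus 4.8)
The statement is a \emph{propagation} result: since $X\neq\varnothing$ there is a node that is active in $\omega$, and the strictness of $A$ lets us spread activity along a path from this node until it reaches $x'$, each step of the spreading occurring with positive probability. Concretely, the plan is to fix $x\in X$, so that $\omega(x)=1$ and $x\in\scX_o$, and a shortest path $x=v_0\sim v_1\sim\cdots\sim v_k=x'$ (which exists by connectedness of $(\scX,\sim)$, and may be taken simple). Then I would consider the trajectory of partial configurations $\cA_0=\{\omega\}$ and $\cA_l=(\{v_l\},\varnothing)^+$ for $1\le l\le k$, and verify that it is admissible in the sense of Definition~\ref{def:path}.

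The verification reduces to checking that $K(m,\cA_{l+1})>0$ for every $m\in\cA_l$. Since membership in $\cA_l$ forces $v_l$ to be active, and $v_{l+1}\sim v_l$, the vector $\pi_{\Gamma(v_{l+1})}(m)$ has at least one coordinate equal to $1$; as $A$ is strict, $P(v_{l+1}\mid m)=A(\pi_{\Gamma(v_{l+1})}(m))>0$. Since the new status of $v_{l+1}$ is drawn independently with this probability, $K\bigl(m,(\{v_{l+1}\},\varnothing)^+\bigr)\ge P(v_{l+1}\mid m)>0$; this lower bound is even bounded away from $0$ over $m\in\cA_l$, because $\pi_{\Gamma(v_{l+1})}(m)$ ranges over a finite set of $0$--$1$ vectors on all of which $A$ is positive. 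Propagating this along the trajectory (exactly as in the proof of Lemma~\ref{global_saving}) and using the equivalence between the existence of a trajectory and reachability stated around Definition~\ref{def:path}, one gets $K^k(\omega,(\{x'\},\varnothing)^+)>0$.

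It then remains to land on an $n$ of the prescribed parity. Here I would invoke the bipartite structure: with $x\in\scX_o$ fixed, every walk from $x$ to $x'$ has length of the same parity as $k$, and this common parity is exactly the one demanded in the two items of the statement according to whether $x'\in\scX_e$ or $x'\in\scX_o$. Moreover, once $x'$ has been made active it can be kept active two stages later, by the elementary one-node storing device recalled just before Lemma~\ref{global_saving} (alternate $(\{x'\},\varnothing)^+$ with $(\{x''\},\varnothing)^+$ for some fixed $x''\in\Gamma(x')$); iterating this adds any even number of stages while keeping the trajectory admissible. Hence $K^n(\omega,(\{x'\},\varnothing)^+)>0$ for every $n\ge k$ with $n\equiv k\pmod 2$, and in particular for the smallest such $n$, which has the parity claimed.

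This argument is essentially routine once strictness is used. The only point needing a little care is the parity bookkeeping: one must combine the shortest-path length with the \emph{even} padding coming from the storing step, and in particular remember that this padding is even and therefore cannot repair a parity mismatch, which is why the parity of the shortest-path length has to be read off correctly from the bipartition. I do not expect a genuine obstacle here; the real combinatorial difficulty — running several propagations and several storings at once without them colliding on shared nodes — belongs to the global procedure assembling Proposition~\ref{cylinder_odd}, not to this lemma.
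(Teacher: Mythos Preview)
Your proposal is correct and follows essentially the same route as the paper: pick an active $x\in X\subseteq\scX_o$, choose a path $x=v_0\sim\cdots\sim v_k=x'$, set $\cA_l=(\{v_l\},\varnothing)^+$, and use strictness of $A$ to see that each transition has positive probability; the parity of $k$ is then read off from the bipartition. The paper's proof is in fact terser than yours---it does not spell out the strictness argument and simply says the parity ``depends on whether $x'$ is even or odd''---so your extra detail is fine, and the even-step padding via the one-node storing device is a harmless addendum (existence of a single $n$ of the right parity is all that is claimed, so the padding is not needed).
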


\begin{proof}
Let $(X,Y)$ be a partial configuration of $\scX_o$ such that $X$ is not empty. Let $x\in X$ be an active agent in $\mathcal{X}_o$ and $x'$ be another agent. Consider a path $x_0,\ldots,x_n$ between $x=x_0$ and $x'=x_n$. Informally, we propagate the status of $x$ to $x'$ along this path. More formally, consider the following sequence of cylinders: 
\[
\forall  l\in \{0,\ldots,n\},\ \cA_l=(\{x_l\},\varnothing)^+.
\]
This sequence is well defined, since for every $l\in \{0,\ldots,n-1\}$, $x_{l+1}\in \Gamma(x_l)$, and we have by construction
\[
\forall m\in \cA_l, \ K(m,\cA_{l+1})>0.
\]
We obtain that a configuration where $x'$ is active is reachable and moreover, depending if $x'$ is even or odd, we obtain the condition on the number of iterations.
\end{proof}
A similar procedure can be described for inactive agents.
%
%


A key point in the previous construction is that the propagation requires a
constraint at stage $l$ only on the status of $x_l$. All other agents can be
indifferently active or not. This will allow us to combine the procedure with
other procedures without having them interfering with each other.

%
%

\subsubsection{Global procedure}

We now turn to establish the general procedure. Let $X\subset \mathcal{X}_o$ and
$Y\subset \mathcal{X}_o$ be two disjoint finite sets. Since the graph satisfies
the Richness Assumption, we know that $(X,Y)$ can be stored. Moreover, there exists an infinite number of complex stars, therefore there exists a complex star such that all nodes of the star are disjoint from $X\cup \overline{X} \cup Y \cup \overline{Y}$. Denote by $(s_*,s_1,s_2,s_3,s'_1,s'_2,s'_3)$ the nodes of this complex star, we assume that $s_*\in \mathcal{X}_e$. Hence, $s_1,s_2,s_3\in \mathcal{X}_o$ and $s'_1,s'_2,s'_3 \in \mathcal{X}_e$.
We also define a distance on the graph as the length of a shortest path between a node and $s_*$:
\[
\forall x\in \mathcal{X}, d(x,s_*)=\inf \{n \in \NN, \exists (x_l)_{1 \leq l\leq n} \text{ a path between } x \text{ and } s_*\}.
\]
 This is a well-defined distance since the graph is connected and there exists
 at least one path between $s_*$ and $x$. We define $\mathcal{X}_{n}=\{x \in
 \mathcal{X}, d(s_*,x) \leq n\}$ and $\mathcal{X}_{n,e}=\mathcal{X}_n \cap
 \mathcal{X}_e$ and $\mathcal{X}_{n,o}=\mathcal{X}_n \cap \mathcal{X}_o$.

 \medskip
 
Let $\omega$ be a configuration with at least one active agent and one
inactive agent in $\mathcal{X}_o$. Let us prove that there exists a trajectory with an odd number of steps from $\omega$ to $(X,Y)^+$.

\medskip

In order to prove this result, we decompose the construction of the trajectory
into several steps. First, we will focus on partial configurations restricted to
the star where one agent is active and one agent is inactive. We will see
that there exists a trajectory between any such partial configurations. In
particular, it is possible to exchange the statuses of two agents. Next, we will prove
  that there is a trajectory from $\omega$ to a configuration where the active
  and inactive statuses are on the branches of the complex star.  Then, we
  will fix the status of every agent by decreasing distance to the center of the
  star. The key point is that while we fix a new status, the previous statuses
  are preserved.


\paragraph{From the complex star to the complex star.}

Let $\mathcal{I}$ be the set of configurations such that at least one node in $\{s_1,s_2,s_3\}$ has value $0$ and at least one node in $\{s_1,s_2,s_3\}$ has value $1$. Formally, we have
\[
\mathcal{I}=(\{s_1\},\{s_2\})^+ \cup (\{s_2\},\{s_1\})^+ \cup (\{s_1\},\{s_3\})^+ \cup (\{s_3\},\{s_1\})^+ \cup (\{s_2\},\{s_3\})^+ \cup (\{s_3\},\{s_2\})^+.
\]

\begin{lemma}\label{startostar}
For every pair of partial configurations $(X',Y')$ and $(X'',Y'')$ in $\mathcal{I}$ there exists a trajectory of even length between $(X',Y')$ and $(X'',Y'')$.
\end{lemma}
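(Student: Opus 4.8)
The plan is to reduce the claim to a short, finite combinatorial fact about the three branches of the star, and then to verify the needed transitions by hand using that $A$ is strict.

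First, observe that because $A$ is strict, a step $\cA_l\to\cA_{l+1}$ between finite cylinders is admissible in the sense of Definition~\ref{def:path} exactly when every node constrained in $\cA_{l+1}$ has, among the nodes constrained in $\cA_l$ (the remaining, unconstrained neighbours being treated as adversarial), at least one neighbour already carrying the value that node must take; such a product of local probabilities is positive by independence across agents as in~(\ref{eq:5}). Since $s_*\in\mathcal{X}_e$ we have $s_1,s_2,s_3\in\mathcal{X}_o$ and $s_*,s'_1,s'_2,s'_3\in\mathcal{X}_e$, so I build the trajectory two steps at a time: from an even-stage cylinder constraining $s_1,s_2,s_3$, to an odd-stage cylinder constraining $s_*$ and some of $s'_1,s'_2,s'_3$, and back. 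A direct check shows that any $(X',Y')$ lying in $\mathcal{I}$ — i.e. constraining at least one $s_i$ to $1$ and at least one $s_j$ to $0$ — is joined by a two-step trajectory to one of the six fully determined star-cylinders, and symmetrically for $(X'',Y'')$; so it suffices to connect, by trajectories of even length, the six fully determined cylinders $c=(c_1,c_2,c_3)\in\{0,1\}^3\setminus\{(0,0,0),(1,1,1)\}$ to one another.

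The heart of the argument is one double step out of such a $c$. At the intermediate odd stage I may set $s_*$ to either value, since $s_*$ has a neighbour among $s_1,s_2,s_3$ with value $1$ and one with value $0$; and for each $i$ I may set $s'_i$ to $c_i$, since $s_i\in\Gamma(s'_i)$ carries $c_i$. Crucially, in each move I force only the $s'_i$ I touch, and to one and the same value, so possible coincidences among $s'_1,s'_2,s'_3$ create no conflict. Putting $s_*=1$ together with $s'_i=0$ at the indices where a $0$ is wanted next, the subsequent even stage can be steered to any $c'$ with $c\le c'$ and $c'\ne(1,1,1)$ (active agents may appear, never filling the star); putting $s_*=0$ with $s'_i=1$ at the indices where a $1$ is wanted, it can be steered to any $c'$ with $c'\le c$ and $c'\ne(0,0,0)$. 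These moves keep every reached configuration inside $\mathcal{I}$, and on the six states they already form a strongly connected graph, e.g. $(1,0,0)\to(1,1,0)\to(0,1,0)\to(0,1,1)\to(0,0,1)\to(1,0,1)\to(1,0,0)$. Concatenating double steps keeps the total length even, and a self-loop $c\to c$ (a special case of the same construction, or of Lemma~\ref{global_saving}) absorbs any parity or equal-endpoint bookkeeping.

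The one point that genuinely needs care — and the reason for the particular moves above rather than, say, routing through the all-active state — is to stay within $\mathcal{I}$ at all times: if the star ever became uniformly active or uniformly inactive, then in the worst case over the unconstrained neighbours of the $s_i$ no $s_i$ could be flipped back, and the trajectory would be trapped. The construction never sets all $c_i$ equal at an even stage, and at the odd stages the configurations produced are automatically mixed because the $s'_i$ reproduce the vector $c$, so the apparent obstruction does not bite; the remaining steps are the routine positivity checks of the kind indicated in the first paragraph.
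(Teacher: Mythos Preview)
Your proof is correct and follows essentially the same route as the paper's: both use double steps through $s_*$ and the $s'_i$ to move between partial configurations on $\{s_1,s_2,s_3\}$, exploiting that $s_*$ always has both an active and an inactive neighbour among the $s_i$. The paper fixes the three coordinates one at a time (first $s_3$, then $s_1,s_2$) while you recast the same moves as going up or down in the product order on $\{0,1\}^3$ and check strong connectivity of the six nontrivial states; your treatment of possible coincidences among the $s'_i$ and of the parity bookkeeping is actually more explicit than the paper's.
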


\begin{proof}
Consider a restricted configuration on $\{s_1,s_2,s_3\}$. We assume without loss of generality that $s_1$ is active whereas $s_2$ is inactive. We want to prove that we can reach any restricted configuration on $\{s_1,s_2,s_3\}$.

First, let us prove that we can generate both the configuration where $s_3$ is
inactive or active without using any node outside the complex star
$\{s_*,s_1,s_2,s_3,s'_1,s'_2,s'_3\}$.  Let us show that we can reach the case where $s_3$ is infected. We consider the following sequence of restricted configurations:
\begin{itemize}
\item $s_1$ is active and $s_2$ is inactive,
\item $s_*$ is active and $s'_2$ is inactive,
\item $s_1,s_3$ are active and $s_2$ is inactive.
\end{itemize}
We constructed a trajectory between our original configuration and the one we wanted. Notice that in fact, we only use as auxiliary state $s_*$ and $s'_2$, hence there are no constraints on the statuses of $s'_1$ and $s'_3$ (if they are different from $s'_2$). By exchanging the role of $s_1$ and $s_2$, we obtain a procedure such that $s_3$ is inactive.

Once $s_3$ is fixed, it remains to fix by the same argument the statuses of
  $s_1,s_2$: if $s_3$ is active, by considering $s_2$ as the inactive, fix
  the status of $s_1$ as it should be and then finally fix the status of $s_2$;
  if $s_3$ is inactive, exchange $s_1$ with $s_2$.
\end{proof}

\paragraph{From $\omega$ to the complex star.}

We show that there exists a trajectory of odd length from $\omega$ to $\mathcal{I}$.
\begin{lemma}\label{centrage}
  Let $\omega \in \Omega$. Assume that there exist $x,x'\in \scX_o$ such that
  $\omega(x)=1$ and $\omega(x')=0$. Then there exists $n\in 2\NN$ such that
\[
K^n(\omega,\cI)>0.
\]
\end{lemma}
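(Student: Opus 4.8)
The plan is to exhibit, for a suitable even $n$, an explicit trajectory of finite partial configurations from $\omega$ to the cylinder $(\{s_1\},\{s_2\})^+\subseteq\cI$: one transports the active status sitting at $x$ onto the branch node $s_1$ and simultaneously transports the inactive status sitting at $x'$ onto the branch node $s_2$, so that the terminal cylinder lies in $\cI$.

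First I would isolate the two elementary moves that strictness of $A$ makes available, exactly as in the proof of Lemma~\ref{placement}. If a node $u$ is active at stage $l$ and $v\in\Gamma(u)$, then in the next configuration $v$ is active with positive probability: since $A$ is strict, $A(z)=0$ only for the all-zero vector, so $P(v\mid m)>0$ whenever $m$ has an active agent in $\Gamma(v)$, and this imposes a constraint on $v$ only. Dually, if $u$ is inactive at stage $l$ and $v\in\Gamma(u)$, then $v$ is inactive at stage $l+1$ with positive probability, again constraining $v$ only. Hence along any walk $u_0\sim u_1\sim\cdots$ one can ``carry'' an active (resp.\ inactive) status, and --- because conditionally on $m$ the updates of distinct agents are independent --- if one carries an active status along a walk $a_0,a_1,\dots$ and an inactive status along a walk $b_0,b_1,\dots$ of the same length, the cylinders $\cA_l=(\{a_l\},\{b_l\})^+$ form a valid trajectory as soon as $a_l\neq b_l$ for every $l$, the one-step probability then factorizing as $P(a_{l+1}\mid m)\,(1-P(b_{l+1}\mid m))>0$.

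The heart of the argument is therefore to produce two walks $a_0=x,\dots,a_n=s_1$ and $b_0=x',\dots,b_n=s_2$ of a common even length with $a_l\neq b_l$ throughout. Parities are compatible: $(\scX,\sim)$ is bipartite with $x,x',s_1,s_2\in\scX_o$, so every $x$--$s_1$ path and every $x'$--$s_2$ path has even length, and either walk can be prolonged by any even number of steps by bouncing its token back and forth along a single edge (which keeps the token alive by the moves above). To get the collision-free property I would use the complex star as a ``passing place'': route the two tokens towards the star, delaying by edge-bouncing whichever would otherwise meet the other; then park the active token on the edge $\{s_1,s'_1\}$ (bouncing $s_1\leftrightarrow s'_1$, which never touches $s_*$ since $s_*\notin\{s'_1\}$) while the inactive token is brought into $s_2$ along a path avoiding $s_1,s'_1$ (its last step may pass through $s_*$, which is distinct from $s_1,s'_1$); finally pad so both walks end at the same even time $n$ with $a_n=s_1$, $b_n=s_2$. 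All $\cA_l$ are finite cylinders with positive transition probability, so this is a trajectory and $K^n(\omega,\cI)\ge K^n\big(\omega,(\{s_1\},\{s_2\})^+\big)>0$.

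The only genuine difficulty is the collision-free routing: on a line-like graph two tokens cannot always pass each other, and this is precisely where the presence of the complex star (and, for the global statement, the structural dichotomy of Proposition~\ref{no_star_graph}) is used. The argument is the standard two-token motion-planning one sketched above; the only delicate point is the bookkeeping for the delays and detours when $x$ or $x'$ sits behind a degree-one node or a bottleneck vertex that both tokens must cross, which is resolved by choosing the bouncing edge and the order in which the tokens traverse any shared cut vertex. Everything else --- parity, independence of the updates, finiteness of the cylinders --- is routine.
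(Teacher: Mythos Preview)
Your overall scheme---carrying the active and inactive statuses simultaneously along two equal-length walks $a_0,\dots,a_n$ and $b_0,\dots,b_n$ with $a_l\neq b_l$ throughout---is a legitimate reformulation, but the step you flag as ``the only genuine difficulty'' is exactly where the argument is incomplete, and it is not routine. Take $\Gamma(x)=\Gamma(x')=\{z\}$: two odd leaves hanging off the same even vertex, as in Example~\ref{counter_collision}. Then $a_1=b_1=z$ is forced, and no bouncing or reordering avoids the collision at step~$1$; your scheme cannot even start. This configuration is excluded only by the \emph{storing} clause of the Richness Assumption, which you never invoke---you use only the existence of a complex star, and that alone does not rule out such bottlenecks far from the star. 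More generally, whenever one token sits in a pendant piece whose sole exit is a vertex the other token must also cross at the same parity, ``choosing the bouncing edge'' is unavailable because there is no alternative edge to bounce on.

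The paper sidesteps this by working sequentially rather than simultaneously: one token is propagated along a shortest path to the star while the other is \emph{stored} on one of its neighbors (this is precisely where the storing half of Richness enters), and only afterwards is the second token brought in, the first being already parked on a branch of the star. The case analysis on $d(x,s_*)$ versus $d(x',s_*)$ makes every avoidance explicit, and the two-leaves-on-one-vertex obstruction is dismissed exactly by appealing to storability. Your simultaneous-walk picture can be made rigorous, but in doing so you will rediscover the storing step: bouncing one token on a fixed edge while the other moves \emph{is} storing, and you need the second half of the Richness Assumption to guarantee that such an edge exists disjoint from the other token's route.
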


\begin{proof}
By assumption, there exists $x,y\in \mathcal{X}_o$ such that $x$ is active and
$y$ is inactive in configuration $\omega$. We will distinguish by which one is
closer to the center of the star $s_*$. Notice that since $x$ and $y$ are both
in $\mathcal{X}_o$ (we assumed the network is bipartite), their distance to $s_*$ differ by an
even number.  We choose a shortest path $p$ from $x$ to $s_*$.
\begin{itemize}
\item If $d(x,s_*)<d(y,s_*)$ then $p$ does not go through $y$. Consider the shortest path from $y$ to $s_*$. By definition of a complex star, there exists $i\in \{1,2,3\}$ such that $s_i$ is not on the shortest path between $y$ and $s_*$. Let us assume without loss of generality that it is $s_1$. We propagate the active status to $s_1$ through $p$ while storing $y$ on any of its neighbors. By the distance assumption, none of the neighbors of $y$ can be on $p$ or it will contradict the assumption on distances. We distinguish two cases:
\begin{itemize}
\item the shortest path from $y$ to $s_*$ goes through $s_2$ (resp. $s_3$): one propagates the inactive status to $s_2$ (resp. $s_3$) through the shortest path while storing the status of $s_1$ in $s_*$.
\item the shortest path from $y$ to $s_*$ does not go through $\{s_1,s_2,s_3\}$. Therefore, there exists $s_4 \in \Gamma(s_*)$ such that $s_4 \notin \{s_1,s_2,s_3\}$. One can propagate the inactive status to $s_4$ through the shortest path while storing the status of $s_1$ in $s_*$. To conclude, by applying Lemma \ref{startostar} to the star $\{s_*,s_1,s'_1,s_2,s'_2,s_4,s'_4\}$, one can place the inactive status in $s_2$ while preserving the active status at $s_1$.
\end{itemize}  
\item If $d(x,s_*)>d(y,s_*)$ then the proof is similar as the first case by inverting the role of active and inactive agents.
\item If $d(x,s_*)=d(y,s_*)$ then we need to distinguish three cases depending on the neighbors of $y$ and $x$:
\begin{itemize} 
\item If $y$ has only one neighbor, not on $p$: We follow the proof of the first case by storing the status of $y$ on this neighbor. 
\item If $y$ has more than one neighbor: Then, one of them has to be outside of $p$. Indeed, if two of them are on the path $p$, then it means than one of these neighbors is at distance less than $d(x,s_*)-3$ of $s_*$ which would put $y$ at a distance less than $d(x,s_*)-2$, a contradiction. We follow the proof of the first case by storing the status of $y$ on this neighbor. 
\item  If $y$ has only one neighbor which is on $p$: Then by assumption on the
  distances, there exists $z\in \scX$ in the neighborhods of $x$ and $y$. The
  case $\Gamma(x)=\Gamma(y)=\{z\}$ is impossible as the statuses of $x,y$
    cannot be stored in the same node $z$. Hence, $|\Gamma(x)|>1$ and we can treat this case by exchanging the role of $x$ and $y$: first propagate the status of $y$ while storing $x$ on its other neighbor.
\end{itemize}
\end{itemize}
\end{proof}

\paragraph{From the star to the cylinder $(X,Y)^+$.}

By assumption, the complex star is disjoined from $X\cup \overline{X} \cup Y \cup \overline{Y}$. This condition ensures that the storing algorithm for $(X,Y)^+$ from Lemma \ref{global_saving} can be applied without using the star. Hence, we are free to manipulate the status of nodes in the star. 


\begin{lemma}\label{placement_ordonne}
Let $\omega \in \mathcal{I}$. Then for every $z\in \mathcal{X}_o$, one can place an active agent at $z$ in $n\in 2\NN$ stages. Moreover, this can be done without changing the values for nodes at a distance greater than $z$ from $s_*$.
\end{lemma}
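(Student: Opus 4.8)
The plan is to use the active status that the hypothesis $\omega\in\cI$ guarantees inside the complex star as a source, and to carry it to $z$ by the propagation mechanism of Lemma~\ref{placement}, along a walk that never leaves the ball of radius $d(z,s_*)$ around $s_*$. Since $\omega\in\cI$, at least one of $s_1,s_2,s_3$, say $s_j$, is active in $\omega$; note that $s_j\in\scX_o$, because $s_*\in\scX_e$. Fix a shortest path $z=z_0\sim z_1\sim\cdots\sim z_d=s_*$ from $z$ to the centre, with $d=d(z,s_*)$; since $z\in\scX_o$ and $s_*\in\scX_e$, $d$ is odd. Concatenating the edges $s_j\sim s_*$ and $s_*\sim z_{d-1}$ with the remaining part $z_{d-1}\sim z_{d-2}\sim\cdots\sim z_0$ of the geodesic produces a walk $W$ from $s_j$ to $z$ of length $d+1$, which is even, in accordance with $s_j$ and $z$ both lying in $\scX_o$. (When $d=1$ the walk degenerates to $(s_j,s_*,z)$, and if $z=s_j$ there is nothing to prove.)

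I would then feed $W$ to the propagation of Lemma~\ref{placement} with source the active node $s_j$: since consecutive vertices of $W$ are adjacent, $\omega$ already belongs to $(\{s_j\},\varnothing)^+$, and $A$ is strict, this produces a valid trajectory of partial configurations from $\{\omega\}$ to $(\{z\},\varnothing)^+$ of even length $d+1$, whose cylinder at each stage constrains only the vertex of $W$ then reached; hence an active agent is placed at $z$ in an even number of stages, as claimed. The decisive structural observation is that every vertex of $W$ --- namely $s_j$ (at distance $1$ from $s_*$), $s_*$ itself (distance $0$), and $z_{d-1},\dots,z_1,z_0=z$ (at distances $1,2,\dots,d$) --- lies at distance at most $d(z,s_*)$ from $s_*$, the only one reaching distance exactly $d(z,s_*)$ being $z$, which moreover lies in $\scX_o$. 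Thus no vertex at distance $>d(z,s_*)$ is ever constrained, which is the ``moreover'' assertion. In the global procedure this is precisely what permits inserting the step without spoiling earlier work: the statuses of the vertices of $X\cup Y$ already fixed --- all at distance $\geq d(z,s_*)$ from $s_*$, since vertices are processed by decreasing distance, and all disjoint from the star by the choice of the latter --- can be carried along by the storing procedure of Lemma~\ref{global_saving} run in parallel, because a storing slot is a neighbour of an already-fixed vertex, hence lies at distance $\geq d(z,s_*)-1$ and in $\scX_e$, and one checks this forces it off $W$ (the interior vertices $z_1,\dots,z_{d-1}$ of $W$ sit at distance $<d(z,s_*)$, too close to be a slot of a strictly farther vertex, while $z\in\scX_o$ and the vertices $s_j,s_*$ of $W$ are not storing slots at all).

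The point genuinely requiring care is to keep the complex star usable as a reservoir for the subsequent steps of the global procedure, i.e.\ to guarantee that after this transition the configuration still exhibits both an active and an inactive witness among $s_1,s_2,s_3$. I would secure this by not touching the two branches of the star other than the one through $s_j$: at the stages of $W$ where $s_*$ is pinned active the spare branch-bases $s_i$ ($i\neq j$) can be stored on their own branch-ends $s'_i$, which are free since $W$ uses only $s_j$, $s_*$ and the $z_\ell$; and should $\omega$ carry the active status on a single branch, one would first apply Lemma~\ref{startostar} to spread it to a second branch, so that propagating along one of them does not erase the witnesses. I expect this interleaving --- the propagation along $W$ together with the parallel storing procedures for the already-fixed vertices and the spare star branches, plus the verification that the finitely many cylinders in play are pairwise compatible at every stage, and in particular a suitable handling of the vertices of $X\cup Y$ sharing the distance of $z$ --- to be the only delicate part; once pairwise disjointness of the vertex sets used by these sub-procedures is checked through the distance bookkeeping above, the trajectory is well defined and the lemma follows, the symmetric statement for placing an \emph{inactive} agent at $z$ resulting from exchanging the roles of $0$ and $1$.
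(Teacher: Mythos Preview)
Your overall plan coincides with the paper's: propagate an active status from a star branch through $s_*$ along a geodesic to $z$, run the global storing mechanism of Lemma~\ref{global_saving} in parallel for the already-fixed vertices of $X\cup Y$, and keep both witnesses alive on the spare star branches. The distance bookkeeping you give for the already-fixed vertices is also essentially the paper's argument.

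There is, however, a genuine gap in the step where you preserve the reservoir. You assert that the spare branch-ends $s'_i$ (for $i\neq j$) ``are free since $W$ uses only $s_j$, $s_*$ and the $z_\ell$'', and then conclude that the storing of $s_i$ on $s'_i$ never clashes with the propagation. But nothing prevents the geodesic $z_0,\dots,z_d$ from entering the star on a branch other than $s_j$: one may well have $z_{d-1}=s_i$ or $z_{d-2}=s'_i$ for some $i\neq j$. In that event your two sub-procedures pin the same vertex to incompatible values (propagation demands it active, storing demands it keep the old value of $s_i$, which may be $0$), so the trajectory is not well defined. Your invocation of Lemma~\ref{startostar} addresses only the separate issue of having two active witnesses when $\omega$ has just one; it does not repair this collision.

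The paper closes exactly this gap by an explicit case distinction: either some shortest path from $s_*$ to $z$ avoids $\{s_1,s_2,s_3,s'_1,s'_2,s'_3\}$ altogether, in which case one proceeds as you describe; or every shortest path enters the star through some $s_i$, and then one first applies Lemma~\ref{startostar} to move the active status onto that very branch $s_i$ before propagating along it, storing the (now inactive) witness on a different branch via $s_*$. The point is to choose the source branch $s_j$ adaptively with respect to the geodesic, rather than fixing $s_j$ first and hoping the geodesic cooperates. With this modification your argument goes through.
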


\begin{proof}
Let us consider $\omega_*$ a configuration in $\cI$ and $z\in \mathcal{X}_o$ not
on the complex star and with no neighbors on the complex star. Let us say without loss of generality that we want to place an active agent at $z$ and that $s_1$ is active whereas $s_2$ is inactive.

\smallskip

By assumption, there exists a global storing mechanism for any configuration denoted by $\theta$. Moreover, by the condition on the neighborhoods, the status of any agent $z'$ is stored on one of its neighbors which is not in $\{s_*,s_1,s_2,s_3,s'_1,s'_2,s'_3\}$.

\smallskip

Consider a shortest path between $s_*$ and $z$. There are two cases:
\begin{itemize}
\item there exists a shortest path not going through $\{s_1,s_2,s_3,s'_1,s'_2,s'_3\}$, then the procedure is as follows: propagate the active status from $s_1$ to $s_*$ and then to $z$ while storing $s_2$ on $s'_2$ and $s_1$ at $s_*$. On every agent at a distance further away from $z$, store the status by using the global storing mechanism.
\item every shortest path goes through $\{s_1,s_2,s_3,s'_1,s'_2,s'_3\}$. Assume that it goes through $s'_i$, then there also exists a shortest path such that it goes through $s_i$, since by construction $s'_i$ is at distance $2$ from $s_*$. We consider this shortest path. We have several cases:
\begin{itemize}
\item if it goes through $s_1$: then simply propagate $s_1$ to $z$ storing at $s'_1$, and store $s_2$ at $s_*$. Forget about $s_3$.
\item if it goes through $s_2$: first consider the path from the current
  configuration to the one where $s_2$ is active and $s_1$ is inactive by
  Lemma \ref{startostar}. Store $s_1$
   outside the path that we need. Then, we are back to the previous case.
\item if it goes through $s_3$: consider the path from the current configuration to the one where $s_3$ is active and $s_1$ is inactive by Lemma \ref{startostar} and finish like in the previous case.
\end{itemize}
\end{itemize}

The storing mechanism ensures that anything that happens outside of the path and the stars can be preserved.
\end{proof}

\begin{lemma}\label{placement_ordonne1}
Let $\omega \in \cI$. Then for every $z\in \mathcal{X}_o$, one can place an inactive agent at $z$ in $n\in 2\NN$ stages. Moreover, that can be done without changing the values for nodes at a distance greater than $z$ from $x_*$.
\end{lemma}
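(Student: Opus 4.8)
The plan is to obtain Lemma~\ref{placement_ordonne1} as the exact mirror image of Lemma~\ref{placement_ordonne}, interchanging the roles of the active and inactive statuses throughout. Since $\omega\in\cI$, at least one of $s_1,s_2,s_3$ is inactive and at least one is active, so I would relabel the branches so that $s_1$ is inactive and $s_2$ is active (the opposite of the convention used for the active version). Fix the target $z\in\mathcal{X}_o$, off the complex star and with no neighbor on it, and choose a shortest path $p$ from $s_*$ to $z$. First I would treat the case where $p$ can be taken to avoid $\{s_1,s_2,s_3,s'_1,s'_2,s'_3\}$: propagate the inactive status from $s_1$ to $s_*$ and then along $p$ to $z$ (invoking Lemma~\ref{placement} in its inactive form), storing the active status $s_2$ on $s'_2$, freezing $s_1$ at $s_*$, and storing every agent that lies farther from $s_*$ than $z$ by means of the global storing function $\theta$ supplied by the Richness Assumption (Lemma~\ref{global_saving}), whose image avoids the star. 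In the remaining case every shortest path meets the star, and one may be chosen through some $s_i$ (because $s'_i$ is at distance $2$ from $s_*$); if $i\neq 1$ I would first use Lemma~\ref{startostar} to pass inside $\cI$ to a configuration with $s_i$ inactive and $s_1$ active, parking $s_1$ off the part of $p$ we still need, which reduces matters to $i=1$, and then propagate the inactive status of $s_i$ to $z$ while storing the active status on $s'_i$.

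Concatenating these partial mechanisms yields a trajectory of even length: it is assembled from the one-stage propagation $s_1\to s_*$, the propagation $s_*\to z$ along $p$ (odd length, since $z\in\mathcal{X}_o$ and $s_*\in\mathcal{X}_e$), and the even-length storing segments of Lemma~\ref{global_saving} and star-to-star adjustments of Lemma~\ref{startostar}, so $n\in 2\NN$. Exactly as in Lemma~\ref{placement_ordonne}, the global storing function keeps unchanged the statuses of all agents lying off the propagation path and the star; in particular all nodes at distance greater than $d(z,s_*)$ from $s_*$ are preserved, which gives the ``moreover'' assertion.

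The only genuinely delicate point --- identical to the one already flagged in the proof of Lemma~\ref{placement_ordonne} --- is checking that at every stage the propagation path $p$, the seven nodes of the complex star, and the nodes used by $\theta$ form pairwise disjoint families with disjoint neighborhoods, so that the elementary transitions never overwrite one another. This is what the distance bookkeeping along $p$, the choice of a complex star disjoint from $X\cup\overline{X}\cup Y\cup\overline{Y}$, and the fact that a storing function sends each node to a neighbor outside the star are there to guarantee; I expect the write-up to consist almost entirely of these disjointness verifications, since all the combinatorial substance is already carried by Lemmas~\ref{startostar}, \ref{global_saving} and \ref{placement}.
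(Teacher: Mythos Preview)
Your proposal is correct and is precisely the argument the paper has in mind: the paper gives no separate proof for Lemma~\ref{placement_ordonne1} and simply states it immediately after Lemma~\ref{placement_ordonne}, relying on the evident symmetry between the active and inactive statuses (valid because $A$ is strict, so $0<A(z)<1$ whenever the neighborhood is mixed). Your mirroring of the case split, the use of Lemma~\ref{startostar} to rearrange the star, and the appeal to the global storing function for nodes farther than $z$ from $s_*$ all match the paper's intended reduction exactly.
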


By iterating the previous lemma by decreasing distance, we obtain Proposition \ref{cylinder_odd} conditionally on the fact that the complex star is centered at an even node. Nevertheless, the previous proof still holds when the star is centered in $\mathcal{X}_o$. Indeed, it is sufficient to notice that now the propagation from $\omega$ to $\cI$ is done in an odd number of stages and the propagation from the star to $\mathcal{X}_o$ (resp. $\mathcal{X}_e$) is done in an odd (resp. even) number of stages, yielding a total of even (resp. odd) number of stages. Hence, Proposition \ref{cylinder_odd} is proven.

\section{Proof of Theorem~\ref{dec}}\label{appD}
We will make use of the following observation:
\begin{observation}\label{pro:nobipartite}
Let $(\scX,\sim)$ be a nonbipartite network. For every subnetwork
$(\scX,\approx)$ with a bipartition $(\scX_o,\scX_e)$, at least one of the two following statements holds:
\begin{itemize}
\item[-] there exists $x\in \scX_o$ such that $\Gamma^\sim(x) \cap \scX_o \neq \varnothing$;
\item[-] there exists $x\in \scX_e$ such that $\Gamma^\sim(x) \cap \scX_e \neq \varnothing$.
\end{itemize}
\end{observation}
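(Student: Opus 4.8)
The plan is to proceed by contradiction. Assume that neither of the two statements in the observation holds. Since $\scX$ is the disjoint union of $\scX_o$ and $\scX_e$, the negation of the first statement says exactly that $\Gamma^\sim(x)\subseteq\scX_e$ for every $x\in\scX_o$, and the negation of the second says that $\Gamma^\sim(x)\subseteq\scX_o$ for every $x\in\scX_e$.

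From this I would deduce that every edge of $(\scX,\sim)$ joins a node of $\scX_o$ to a node of $\scX_e$. Indeed, if $x\sim y$ with $x\in\scX_o$, then $y\in\Gamma^\sim(x)\subseteq\scX_e$; symmetrically, if $x\in\scX_e$ then $y\in\Gamma^\sim(x)\subseteq\scX_o$. Hence no $\sim$-edge is internal to $\scX_o$ or to $\scX_e$, which means that $(\scX_o,\scX_e)$ is a bipartition of $(\scX,\sim)$. This contradicts the hypothesis that $(\scX,\sim)$ is nonbipartite, completing the argument.

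There is essentially no obstacle here; the only point that requires a little care is the passage from the complement condition $\Gamma^\sim(x)\cap\scX_o=\varnothing$ to the inclusion $\Gamma^\sim(x)\subseteq\scX_e$, which uses that $\{\scX_o,\scX_e\}$ is a partition of $\scX$. For completeness one may also remark that $\scX_o$ and $\scX_e$ are both nonempty, since the connected network $(\scX,\approx)$ on the infinite set $\scX$ carries at least one edge, so $(\scX_o,\scX_e)$ is a genuine bipartition; but this observation is not needed for the contradiction, as it already suffices that $\sim$ admits a partition of its vertex set into two independent sets.
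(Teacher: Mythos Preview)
Your argument is correct. The paper does not actually supply a proof of this statement: it is labeled as an \emph{Observation} and is simply used directly in the proof of Theorem~\ref{dec}. Your contradiction argument---showing that if neither alternative holds then $(\scX_o,\scX_e)$ would already bipartition $(\scX,\sim)$---is exactly the natural justification the authors are tacitly relying on.
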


The result concerning $(\infty,0)$ and $(0,\infty)$ has already
  been shown in the general case (Proposition~\ref{prop:01}). We denote by 
    $(\scX_o,\scX_e)$ the bipartition of the subgraph $(\scX,\approx)$.

\underline{$(\infty,\infty)$ is absorbing:}\\

We have to prove that if $\omega \in (\infty,\infty)$ and $\omega'$ is randomly
chosen with probability $K(\omega,.)$, then with probability $1$, $\omega'\in
(\infty,\infty)$. Let us compute the probability that $\omega'$ has a finite or
null number of $1$. Denote by
\[
\Theta=\{x \in \mathcal{X}, \ \omega(x)=1\}
\]
then $\Gamma(\Theta)=\cup_{x\in \Theta} \Gamma(x)$ is infinite. But by
assumption, any $x'\in \Gamma(\Theta)$ has at least one neighbor with value $1$,
it follows that the probability that $\omega'(x')=0$ is strictly smaller than
$\rho<1$. Since the probabilities are independent, it follows that the total
probability of infinitely many to be equal to $0$ is $0$. The same reasoning
holds for $0$.\\

\underline{$(\infty,\F)$ and $(\F,\infty)$ are transient:}\\

Let us consider $(\scX,\approx)$ and the partition defined in Section
\ref{sec:decomp_absorbing_transient}. Then, if $\omega\in(\infty,\F)$ then we
know that there are only a finite number of active agents and an
infinite number of inactive agents. We have two possibilities:
\begin{itemize}
\item $\omega \in (\infty,0,\infty,\F) \cup (\infty,\F,\infty,0)$,
\item $\omega \in (\infty,\F,\infty,\F)$.
\end{itemize}
We showed in Section \ref{sec:decomp_absorbing_transient} that both sets are transient (for the subgraph). It follows that $(\infty,\F)$ is also transient for the subgraph and therefore for the original graph.\\

\underline{$(\infty,\infty)$, $(\infty,\F)$ and $(\F,\infty)$ are
  $\phi$-irreducible:}\\ Let $\omega$ be a configuration with one active agent
denoted by $x$ and one inactive agent denoted by $y$. We consider a subgraph
$(\scX,\approx)$ with bipartition $(\scX_o,\scX_e)$. Moreover, by
Proposition \ref{pro:nobipartite}, there exists $c \in \scX$ with a neighbor in
the same block of the bipartition. We will assume in the following that
$c\in \scX_o$ and there exists $d\in \Gamma(c) \cap \scX_o$. Finally, let
$(s_*,s_1,s_2,s_3,s'_1,s'_2,s'_3)$ be a complex star in the graph
$(\scX,\approx)$ centered in $\scX_e$ disjoint from $\{c,d\}$.\\

\noindent Our aim is to construct a trajectory to a partial configuration
$\omega'$ such that there exist both an active agent and an inactive
  agent on some even positions, and also on some odd positions. We distinguish
several cases:
\begin{itemize}
\item If $x\in \scX_o$ and $y\in \scX_o$, then one can apply Lemma
  \ref{centrage} to construct a trajectory between $\omega$ and $\mathcal{I}$,
  i.e., with a $1$ on an odd position (say, $s_1$) and a $0$ on an odd
  position (say, $s_2$). The next step is to propagate the active
  status from $\mathcal{I}$ to $c$ through the subgraph (even number of
  stages), from $c$ to $d$ through their direct link (one stage), and then back
  to $s'_1$ through the subgraph (odd
  number of stages), while storing $s_1$ at $s'_1$ and $s_2$ at $s_*$. The
  total number of stages being even, we obtain at even stages the value 1 at
  $s_1\in\scX_o$ and at $s'_1\in\scX_e$. As these two nodes are neighbors,
    the value 1 can be kept at odd stages as well. We proceed differently for the
    inactive agent $s_2$, reaching $c$ and then $d$ as before in an odd number
    of stages. As in odd stages, 0 is stored at $s_*$, we
    have at odd stages a 0 at $s_*\in\scX_e$ and at $d\in\scX_o$, as desired.
\item If $x\in \scX_e$ and $y\in \scX_e$, then the proof is similar.
\item If $x\in \scX_o$ and $y\in \scX_e$, propagate $x$ to $s_1$, so that
  $s_1$ has value 1 on even stages. Then propagate $y$ to $c$ in an odd number of
  stages through the subgraph, then propagate to $d$ by the direct link between
  $c$ and $d$. Doing so, $d$ has value 0 at even stages. 
We are back to the previous case with $y'=d\in \scX_o$ and $x'=s_1\in \scX_o$.
\end{itemize}

Let $X$ and $Y$ be two finite sets. We want to prove that $(X,Y)^+$ is reachable from $\omega'$. $X$ can be decomposed into $X_o=X\cap \scX_o$ and $X_e=X\cap \scX_e$ and $Y$ can be decomposed into $Y_o=Y\cap \scX_o$ and $Y_e=Y\cap \scX_e$. We can now restrict ourselves to the subgraph $(\scX,\approx)$ and apply the results of Section \ref{sec:decomp_ergodicity}.

 %
%
%

%
%

%

\end{document}